\documentclass[11pt,letterpaper]{article}

\usepackage{subcaption}

\usepackage{url}
\usepackage{fullpage}
\usepackage{amsmath}
\usepackage{algorithm}
\usepackage[noend]{algpseudocode}
\usepackage{amssymb}
\usepackage{amsthm}
\usepackage{color}
\usepackage{tikz}
\usetikzlibrary{arrows,shapes}
\usetikzlibrary{matrix}
\usepackage{array}
\usepackage{xy}
\usepackage{setspace}
\usepackage{multicol}
\usepackage{algorithm}
\usepackage{todonotes}



 \newcommand{\E}{\mathbb{E}}

\newcommand{\dtw}{\operatorname{dtw}}
\newcommand{\ed}{\operatorname{ed}}

\newcommand{\xrun}{r_x} \newcommand{\yrun}{r_y}
\newcommand{\yrunlength}{l_y} \newcommand{\yprevrunlength}{p_y}
\newcommand{\xrunlength}{l_x} \newcommand{\yoffset}{o_y}
\newcommand{\xoffset}{o_x}
\newcommand{\yrunoffset}{o_y}
\newcommand{\xrunoffset}{o_x}
\newcommand{\tipToMiddle}{\operatorname{SP}}
\newcommand{\rundiff}{d}


\newtheorem{thm}{Theorem}[section]
\newtheorem{defn}[thm]{Definition}
\newtheorem{lem}[thm]{Lemma}
\newtheorem{prop}[thm]{Proposition}

\newtheorem{cor}[thm]{Corollary}

\theoremstyle{remark}

\newtheorem{ex}[thm]{Example}


\theoremstyle{remark}
\newtheorem{observation}[thm]{Observation}

\usepackage{microtype}


\title{Dynamic Time Warping in Strongly Subquadratic Time: Algorithms
  for the Low-Distance Regime and Approximate Evaluation}

\author{William Kuszmaul \\ Massachusetts Institute of Technology, Cambridge, USA. \\ \textit{kuszmaul@mit.edu}.}

\begin{document}

\maketitle
\begin{abstract}

Dynamic time warping distance (DTW) is a widely used distance measure between time series, with applications in areas such as speech recognition and bioinformatics. The best known algorithms for computing DTW run in near quadratic time, and conditional lower bounds prohibit the existence of significantly faster algorithms.

The lower bounds do not prevent a faster algorithm for the important special case in which the DTW is small, however. For an arbitrary metric space $\Sigma$ with distances normalized so that the smallest non-zero distance is one, we present an algorithm which computes $\dtw(x, y)$ for two strings $x$ and $y$ over $\Sigma$ in time $O(n \cdot \dtw(x, y))$. When $\dtw(x, y)$ is small, this represents a significant speedup over the standard quadratic-time algorithm.

Using our low-distance regime algorithm as a building block, we also present an approximation algorithm which computes $\dtw(x, y)$ within a factor of $O(n^\epsilon)$ in time $\tilde{O}(n^{2 - \epsilon})$ for $0 < \epsilon < 1$. The algorithm allows for the strings $x$ and $y$ to be taken over an arbitrary well-separated tree metric with logarithmic depth and at most exponential aspect ratio. Notably, any polynomial-size metric space can be efficiently embedded into such a tree metric with logarithmic expected distortion. Extending our techniques further, we also obtain the first approximation algorithm for edit distance to work with characters taken from an \emph{arbitrary metric space}, providing an $n^\epsilon$-approximation in time $\tilde{O}(n^{2 - \epsilon})$, with high probability.

Finally, we turn our attention to the relationship between edit distance and dynamic time warping distance. We prove a reduction from computing edit distance over an arbitrary metric space to computing DTW over the same metric space, except with an added null character (whose distance to a letter $l$ is defined to be the edit-distance insertion cost of $l$). Applying our reduction to a conditional lower bound of Bringmann and K\"unnemann pertaining to edit distance over $\{0, 1\}$, we obtain a conditional lower bound for computing DTW over a three letter alphabet (with distances of zero and one). This improves on a previous result of Abboud, Backurs, and Williams, who gave a conditional lower bound for DTW over an alphabet of size five.

With a similar approach, we also prove a reduction from computing edit distance (over generalized Hamming Space) to computing longest-common-subsequence length (LCS) over an alphabet with an added null character. Surprisingly, this means that one can recover conditional lower bounds for LCS directly from those for edit distance, which was not previously thought to be the case.

\end{abstract}

\section{Introduction}

Dynamic Time Warping distance (DTW) is a widely used distance measure
between time series. DTW is particularly flexible in dealing with
temporal sequences that vary in speed.  To measure the distance
between two sequences, portions of each sequence are allowed to be
warped (meaning that a character may be replaced with multiple
consecutive copies of itself), and then the warped sequences are
compared by summing the distances between corresponding pairs of
characters. DTW's many applications include phone authentication
\cite{dtwapp1}, signature verification \cite{dtwapp2}, speech
recognition \cite{dtwapp3}, bioinformatics \cite{dtwapp4}, cardiac
medicine \cite{dtwapp5}, and song identification \cite{dtwapp6}.

The textbook dynamic-programming algorithm for DTW runs in time
$O(n^2)$, which can be prohibitively slow for large inputs. Moreover,
conditional lower bounds \cite{DTWhard, DTWhard2} prohibit the
existence of a strongly subquadratic-time algorithm\footnote{An
  algorithm is said to run in strongly subquadratic time if it runs in
  time $O(n^{2- \epsilon})$ for some constant $\epsilon > 0$. Although
  strongly subquadratic time algorithms are prohibited by 
  conditional lower bounds, runtime improvements by
  subpolynomial factors are not. Such improvements have been
  achieved \cite{DTWsubquadratic}.}, unless the Strong Exponential
Time Hypothesis is false.

The difficulty of computing DTW directly has motivated the development
of fast heuristics \cite{dtwband,kp99,kp00,k02,BUWK15,PFWNCK16} which
typically lack provable guarantees.

On the theoretical side, researchers have considered DTW in the
contexts of locality sensitive hashing \cite{ds17} and nearest
neighbor search \cite{ep17}, but very little additional progress has
been made on the general problem in which one is given two strings $x$
and $y$ with characters from a metric space $\Sigma$, and one wishes to
compute (or approximate) $\dtw(x, y)$.

To see the spectrum of results one might aim for, it is helpful to
consider edit distance, another string-similarity measure, defined to
be the minimum number of insertions, deletions, and substitutions
needed to get between two strings. Like DTW, edit distance can
be computed in time $O(n^2)$ using dynamic programming
\cite{WagnerF74, needleman1970general, vintzyuk1968}, and conditional
lower bounds suggest that no algorithm can do significantly better
\cite{backurs2015edit, DTWhard}. This has led to researchers focusing
on specialized versions of the problem, especially in two important
directions:
\begin{itemize}
\item \textbf{Low-Distance Regime Algorithms: } In the special case
  where the edit distance between two strings is small (less than
  $\sqrt{n}$), the algorithm of Landau, Myers and Schmidt
  \cite{landau1998incremental} can be used to compute the exact
  distance in time $O(n)$. In general, the algorithm runs in time $O(n
  + \ed(x, y)^2)$. Significant effort has also been made to design
  variants of the algorithm which exhibit small constant overhead in
  practice \cite{chakraborty2016streaming2}.
\item \textbf{Approximation Algorithms: } Andoni, Krauthgamer and Onak
  introduced an algorithm estimating edit distance within a factor of
  $(\log n)^{O(1/\varepsilon)}$ in time $O(n^{1 + \varepsilon})$
  \cite{ApproxPolyLog}, culminating a long line of research on
  approximation algorithms that run in close to linear time
  \cite{ApproxSubPolyDistortion,DimensionReduction,bar2004approximating,CharikarGe18}. Recently,
  Chakraborty et al. gave the first strongly subquadratic algorithm to
  achieve a \emph{constant} approximation, running time $O(n^{12/7})$
  \cite{approx18}.
\end{itemize}

This paper presents the first theoretical results in
these directions for DTW.

\paragraph*{A Low-Distance Regime Algorithm for DTW (Section \ref{seclowdistance}) }
We present the first algorithm for computing DTW in the low-distance
regime. Our algorithm computes $\dtw(x, y)$ in time $O(n \cdot \dtw(x,
y))$ for strings $x$ and $y$ with characters taken from an arbitrary
metric space in which the minimum non-zero distance is one. The key
step in our algorithm is the design of a new dynamic-programming
algorithm for DTW, which lends itself especially well to the
low-distance setting.

Our dynamic program relies on a recursive structure in which the two
strings $x$ and $y$ are treated asymmetrically within each subproblem:
One of the strings is considered as a sequence of letters, while the
other string is considered as a sequence of runs (of equal
letters). The subproblems build on one-another in a way so that at
appropriate points in the recursion, we toggle the role that the two
strings play. The asymmetric treatment of the strings limits the
number of subproblems that can have return-values less than any given
threshold $K$ to $O(nK)$, allowing for a fast algorithm in the
low-distance setting.


We remark that the requirement of having the smallest distance between
distinct characters be $1$ is necessary for the low-distance regime
algorithm to be feasible, since otherwise distances can simply be
scaled down to make every DTW instance be low-distance.

\paragraph*{Approximating DTW Over Well Separated Tree Metrics (Section \ref{secdtwapprox}) }
We design the first approximation algorithm for DTW to run in strongly
subquadratic time. Our algorithm computes $\dtw(x, y)$ within an
$n^\epsilon$-approximation in time $\tilde{O}(n^{2 - \epsilon})$. The
algorithm allows for the strings $x$ and $y$ have characters taken
from an arbitrary well-separated tree metric of logarithmic depth and
at most exponential aspect ratio.\footnote{The \emph{aspect} ratio of
  a metric space is the ratio between the largest and smallest
  non-zero distances in the space.} These metric spaces are universal
in the sense that any finite metric space $M$ of polynomial size can
be efficiently embedded into a well-separated tree metric with
expected distortion $O(\log |M|)$ and logarithmic depth \cite{trees,
  bansal2011}.

An important consequence of our approximation algorithm is for the
special case of DTW over the reals. Exploiting a folklore embedding
from $\mathbb{R}$ to a well-separated tree metric metric, we are able
to obtain with high probability an $O(n^{\epsilon})$-approximation for
$\dtw(x, y)$ in time $\tilde{O}(n^{2 - \epsilon})$, for any strings
$x$ and $y$ of length at most $n$ over a subset of the reals with a
polynomial aspect ratio.

In the special case of DTW over the reals, previous work has been done
to find approximation algorithms under certain geometric assumptions
about the inputs $x$ and $y$ \cite{agarwal2015approximating,
  ying2016simple}. To the best of our knowledge, our approximation
algorithm is the first to not rely on any such assumptions.

It is interesting to note that our results on low-distance regime and
approximation algorithms for DTW have bounds very similar to the
earliest results for edit distance in the same directions. Indeed, the
first algorithm to compute edit distance in the low-distance regime
\cite{firstlowdist} exploited properties of a (now standard)
dynamic-programming algorithm in order to compute $\ed(x, y)$ in time
$O(n \cdot \ed(x, y))$. This implicitly resulted in the first
approximation algorithm for edit distance, allowing one to compute an
$O(n^{\epsilon})$-approximation in time $O(n^{2 - \epsilon})$. Until
the work of \cite{ApproxSubPolyDistortion} and \cite{ApproxPolyLog},
which culminated in an algorithm with a polylogarithmic approximation
ratio, the best known approximation ratio for edit distance remained
polynomial for roughly twenty years \cite{landau1998incremental,
  bar2004approximating,DimensionReduction}.

The $\tilde{O}(n^{2 - \epsilon})$-time $O(n^{\epsilon})$-approximation
tradeoff is also the current state-of-the-art for another related
distance measurement known as Fr\'echet distance \cite{BringmannMu15},
and is achieved using an algorithm that differs significantly from its
edit-distance and DTW counterparts.

\paragraph*{Reduction from Edit Distance to DTW (Section \ref{secreduction})}
We show that the similarity between our results for DTW and the
earliest such results for edit distance is not coincidental. In
particular, we prove a simple reduction from computing edit distance
over an arbitrary metric space to computing DTW over the same metric
space (with an added null character). Consequently, any algorithmic
result for computing DTW in the low-distance regime or
approximating DTW immediately implies the analogous result for edit
distance. The opposite direction is true for lower bounds. For
example, the conditional lower bound of Bringmann and K\"unnemann
\cite{DTWhard}, which applies to edit distance over the alphabet $\{0,
1\}$, now immediately implies a conditional lower bound for DTW over
an alphabet of size three (in which characters are compared with
distances zero and one). This resolves a direction of work posed by
Abboud, Backurs, and Williams \cite{DTWhard2}, who gave a conditional
lower bound for DTW over an alphabet of size five, and noted that if
one could prove the same lower bound for an alphabet of size three,
then the runtime complexity of DTW over generalized Hamming space
would be settled (modulo the Strong Exponential Time
Hypothesis). Indeed, it is known that over an alphabet of size two,
DTW can be computed in strongly subquadratic time \cite{DTWhard2}.

Using a similar approach we also prove a simple reduction from
computing edit distance (over generalized Hamming space) to computing
the longest-common-subsequence length (LCS) between two strings. Thus
conditional lower bounds for computing edit distance directly imply
conditional lower bounds for computing LCS (over an alphabet with one
additional character). This was not previously though to be the
case. Indeed, the first known conditional lower bounds for LCS came
after those for edit distance \cite{AbboudBa15, backurs2015edit}, and
it was noted by Abboud et al. \cite{AbboudBa15} that \emph{``A simple
  observation is that the computation of the LCS is equivalent to the
  computation of the Edit-Distance when only deletions and insertions
  are allowed, but no substitutions. Thus, intuitively, LCS seems like
  an easier version of Edit Distance, since a solution has fewer
  degrees of freedom, and the lower bound for Edit-Distance does not
  immediately imply any hardness for LCS.''} Our reduction violates
this intuition by showing that edit distance without substitutions can
be used to efficiently simulate edit distance without substitutions.
In addition to presenting a reduction from edit distance to LCS, we
show that no similar reduction can exist in the other direction.

We remark that our reduction from edit distance to LCS was also
previously briefly described in Chapter 6.1 of \cite{reduction_past}, a fact of
which we only became aware after the completion of this manuscript.

\paragraph*{Approximating Edit Distance Over an Arbitrary Metric (Section \ref{seceditapprox})}
The aforementioned results for approximating edit distance
\cite{firstlowdist, bar2004approximating, DimensionReduction,
  landau1998incremental, ApproxSubPolyDistortion, ApproxPolyLog, approx18, Kuszmaul19}
consider only the case in which insertion, deletion, and substitution
costs are all constant. To the best of our knowledge, no approximation
algorithm is known for the more general case in which characters are
taken from an arbitrary metric space and edit costs are assigned based
on metric distances between characters. This variant of edit distance
is sometimes referred to as \emph{general edit distance}
\cite{navarro2001guided}. The study of general edit distance dates
back to the first papers on edit distance \cite{WagnerF74,
  ukkonen}, and allowing for nonuniform costs is important in many
applications, including in computational biology
\cite{jiang2002general}.

We present an approximation algorithm for edit distance over an
an arbitrary metric. Our algorithm runs in time $\tilde{O}(n^{2 -
  \epsilon})$ and computes an $O(n^{\epsilon})$-approximation for
$\ed(x, y)$ with high probability. Note that for the case where
characters are taken from a well-separated tree metric with
logarithmic depth and at most exponential aspect ratio, the result
already follows from our approximation algorithm for DTW, and our
reduction from edit distance to DTW. The approach taken in Section
\ref{seceditapprox} is particularly interesting in that it places no
restrictions on the underlying metric space.

Both our approximation algorithm for DTW and our approximation
algorithm for edit distance exhibit relatively weak
runtime/approximation tradeoffs. To the best of our knowledge,
however, they are the first such algorithms to run in
strongly subquadratic time.

\section{Preliminaries}

In this section, we present preliminary definitions and background on
dynamic time warping distance (DTW) and edit distance.

\paragraph*{Dynamic Time Warping Distance} For a metric space $\Sigma$,
the dynamic time warping distance (DTW) between two strings $x, y \in
\Sigma^n$ is a natural measure of similarity between the strings. 

Before fully defining DTW, we first introduce the notion of
an \emph{expansion} of a string.

\begin{defn}
The \emph{runs} of a string $x \in \Sigma^n$ are the maximal
subsequences of consecutive letters with the same value. One can
\emph{extend} a run by replacing it with a longer run of the same
letter. An \emph{expansion} of the string $x$ is any string which can
be obtained from $x$ by extending runs.
\end{defn}

As an example, consider $x = aaaccbbd$. Then the runs of $x$ are
$aaa$, $cc$, $bb$, and $d$. The string $\overline{x} = aaacccccbbdd$
is an expansion of $x$ and extends the runs containing $c$
and $d$.

Using the terminology of expansions, we now define DTW.

\begin{defn}
 Consider strings $x$ and $y$ of length $n$ over a metric $(\Sigma,
 d)$. A \emph{correspondence} $(\overline{x}, \overline{y})$ between
 $x$ and $y$ is a pair of equal-length expansions $\overline{x}$ of
 $x$ and $\overline{y}$ of $y$. The \emph{cost} of a correspondence is
 given by $\sum_i d(\overline{x}_i, \overline{y}_i).$

 The \emph{dynamic time warping distance} $\dtw(x, y)$ is defined to
 be the minimum cost of a correspondence between $x$ and $y$.
\end{defn}

When referring to a run $r$ in one of $x$ or $y$, and when talking
about a correspondence $(\overline{x}, \overline{y})$, we will often
use $r$ to implicitly refer to the extended run corresponding with $r$
in the correspondence. Whether we are referring to the original run or
the extended version of the run should be clear from context.

Note that any minimum-length optimal correspondence between strings
$x, y \in \Sigma^{n}$ will be of length at most $2n$. This is because
if a run $r_1$ in $x$ overlaps a run $r_2$ in $y$ in the
correspondence, then we may assume without loss of generality that at
most one of the two runs is extended by the
correspondence. (Otherwise, we could un-extend each run by one and
arrive at a shorter correspondence with no added cost.)


\paragraph*{Edit Distance Over an Arbitrary Metric}
The \emph{simple edit distance} between two strings $x$ and $y$ is the
minimum number of insertions, deletions, and substitutions needed to
transform $x$ into $y$. In this paper we will mostly focus on a more
general variant of edit distance, in which characters are taken from an
arbitrary metric:

\begin{defn}
Let $x$ and $y$ be strings over an alphabet $\Sigma$, where $(\Sigma
\cup \{\emptyset\}, d)$ is a metric space. We say that the
\emph{magnitude} $|l|$ of a letter $l \in \Sigma$ is $d(\emptyset,
l)$.  We define the \emph{edit distance} between $x$ and $y$ to be the
minimum cost of a sequence of edits from $x$ to $y$, where the
insertion or deletion of a letter $l$ costs $d(\emptyset, l)$, and the
substitution of a letter $l$ to a letter $l'$ costs $d(l, l')$.
\end{defn}

\section{Computing DTW in the Low-Distance Regime}\label{seclowdistance}
In this section, we present a low-distance regime algorithm for DTW
(with characters from an arbitrary metric in which all non-zero
distances are at least one). Given that $\dtw(x, y)$ is bounded above
by a parameter $K$, our algorithm can compute $\dtw(x, y)$ in time
$O(nK)$. Moreover, if $\dtw(x, y) > K$, then the algorithm will
conclude as much. Consequently, by doubling our guess for $K$
repeatedly, one can compute $\dtw(x, y)$ in time $O(n \cdot \dtw(x,
y))$.

Consider $x$ and $y$ of length $n$ with characters taken from a metric
space $\Sigma$ in which all non-zero distances are at least one. In
the textbook dynamic program for DTW \cite{algorithmdesign}, each pair
of indices $i, j \in [n]$ represents a subproblem $T(i, j)$ whose
value is $\dtw(x[1:i], y[1: j])$. Since $T(i, j)$ can be determined
using $T(i - 1, j), T(i, j - 1), T(i - 1, j - 1)$, and knowledge of
$x_i$ and $y_j$, this leads to an $O(n^2)$ algorithm for DTW. A common
heuristic in practice is to construct only a small band around the
main diagonal of the dynamic programming grid; by computing only
entries $T(i, j)$ with $|i - j| \le 2K$, and treating other
subproblems as having infinite return values, one can obtain a correct
computation for DTW as long as there is an optimal correspondence
which matches only letters which are within $K$ of each other in
position. This heuristic is known as the Sakoe-Chiba Band
\cite{dtwband} and is employed, for example, in the commonly used
library of Giorgino \cite{giorgino}.

The Sakoe-Chiba Band heuristic can perform badly even when $\dtw(x,
y)$ is very small, however. Consider $x = abbb\cdots b$ and $y = aaa
\cdots ab$. Although $\dtw(x, y) = 0$, if we restrict ourselves to
matching letters within $K$ positions of each other for some small
$K$, then the resulting correspondence will cost $\Omega(n)$.

In order to obtain an algorithm which performs well in the
low-distance regime, we introduce a new dynamic program for DTW. The
new dynamic program treats $x$ and $y$ asymmetrically within each
subproblem. Loosely speaking, for indices $i$ and $j$, there are
two subproblems $\tipToMiddle(x, y, i, j)$ and $\tipToMiddle(y, x, i,
j)$. The first of these subproblems evaluates to the DTW between the
first $i$ runs of $x$ and the first $j$ letters of $y$, with the added
condition that the final run of $y[1: j]$ is not extended. The second
of the subproblems is analogously defined as the DTW between the
first $i$ runs of $y$ and the first $j$ letters of $x$ with the added
condition that the final run of $x[1: j]$ is not extended.

The recursion connecting the new subproblems is somewhat more
intricate than for the textbook dynamic program. By matching the
$i$-th run with the $j$-th letter, however, we limit the number of
subproblems which can evaluate to less than $K$. In particular, if the
$j$-th letter of $y$ is in $y$'s $t$-th run, then any correspondence
which matches the $i$-th run of $x$ to the $j$-th letter of $y$ must
cost at least $\Omega(|i - t|)$. (This is formally shown in Appendix \ref{seclowdistancefull}.) Thus for a given $j$, there are only
$O(K)$ options for $i$ such that $\tipToMiddle(x, y, i, j)$ can
possibly be at most $K$, and similarly for $\tipToMiddle(y, x, i,
j)$. Since we are interested in the case of $\dtw(x, y) \le K$, we can
restrict ourselves to the $O(nK)$ subproblems which have the potential
to evaluate to at most $O(K)$. Notice that, in fact, our algorithm
will work even when $\dtw(x, y) > K$ as long as there is an optimal
correspondence between $x$ and $y$ which only matches letters from $x$
from the $\xrun$-th run with letters from $y$ from the $\yrun$-th run
if $|\xrun - \yrun| \le O(K)$.

Formally we define our recursive problems in a manner slightly
different from that described above. Let $x$ and $y$ be strings of
length at most $n$ and let $K$ be a parameter which we assume is
greater than $\dtw(x, y)$. Our subproblems will be the form
$\tipToMiddle(x, y, \xrun, \yrun, \yrunoffset)$, which is defined as
follows. Let $x'$ consist of the first $\xrun$ runs of $x$ and $y'$
consist of the first $\yrun$ runs of $y$ until the $\yrunoffset$-th
letter in the $\yrun$-th run. Then $\tipToMiddle(x, y, \xrun, \yrun,
\yrunoffset)$ is the value of the optimal correspondence between $x'$
and $y'$ such that the $\yrun$-th run in $y'$ is not
extended.\footnote{If $\yrunoffset = 0$, then the $\yrun$-th run in
  $y'$ is empty and thus trivially cannot be extended.} If no such
correspondence exists (which can only happen if $\yrun \le 1$ or
$\xrun = 0$), then the value of the subproblem is $\infty$. Note that
we allow $\xrun, \yrun, \yrunoffset$ to be zero, and if $\yrun$ is
zero, then $\yrunoffset$ must be zero as well. We also consider the
symmetrically defined subproblems of the form $\tipToMiddle(y, x,
\yrun, \xrun, \xoffset)$. We will focus on the subproblems of the
first types, implicitly treating subproblems of the second type
symmetrically.

\begin{ex}
  Suppose characters are taken from generalized Hamming space, with
  distances of 0 and 1. The subproblem $\tipToMiddle(efabbccccd,
  ffaabcccddd, 5, 4, 2)$ takes the value of the optimal correspondence
  between $efabbcccc$ and $ffaabcc$ such that the final $cc$ run in
  the latter is not extended. The subproblem's value turns out to be
  3, due to the correspondence:
  \begin{center}
  \begin{tabular}{l l l l l l l l l l l}
    e&f&a&a&b&b&c&c&c&c&  \\
    f&f&a&a&b&b&b&b&c&c&.  \\
  \end{tabular}
  \end{center}
\end{ex}

The next lemma presents the key recursive relationship
between subproblems. The lemma focuses on the case where $\xrun, \yrun,
\yrunoffset \ge 1$.

\begin{lem}
  Suppose that $\xrun$ and $\yrun$ are both between $1$ and the number
  of runs in $x$ and $y$ respectively; and that $\yrunoffset$ is
  between $1$ and the length of the $\yrun$-th run in $y$. Let
  $\xrunlength$ be the length of the $\xrun$-th run in $x$ and
  $\yrunlength$ be the length of the $\yrun$-th run in $y$. Let
  $\rundiff$ be the distance between the letter populating the
  $\xrun$-th run in $x$ and the letter populating the $\yrun$-th run
  in $y$. Then $\tipToMiddle(x, y, \xrun, \yrun, \yrunoffset)$ is
  given by
  \[\begin{cases}
    \min\left(\tipToMiddle(x, y, \xrun, \yrun, \yrunoffset - 1) + \rundiff,  \tipToMiddle(x, y, \xrun - 1, \yrun, \yrunoffset - \xrunlength) + \rundiff \cdot \xrunlength\right) & \text{ if }\xrunlength \le \yrunoffset \\
    \min\left(\tipToMiddle(x, y, \xrun, \yrun, \yrunoffset - 1) + \rundiff,  \tipToMiddle(y, x, \yrun - 1, \xrun, \xrunlength - \yrunoffset) + \rundiff \cdot \yrunoffset\right) & \text{ if }\xrunlength > \yrunoffset. \\
  \end{cases}\]
    \label{lemrecursion10}
\end{lem}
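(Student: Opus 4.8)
The plan is to prove the lemma by establishing separately that $\tipToMiddle(x, y, \xrun, \yrun, \yrunoffset)$ is at most, and at least, the claimed expression $R$ (i.e.\ the relevant one of the two branches), treating each subproblem appearing in $R$ as a black box represented by an optimal correspondence for it, or by $\infty$ when no correspondence exists. Write $a$ for the letter populating the $\xrun$-th run of $x$ and $b$ for the letter populating the $\yrun$-th run of $y$, so that $d(a,b) = \rundiff$. Everything rests on two structural observations. Since $x'$ is the first $\xrun$ runs of $x$, the last run of any expansion $\overline{x'}$ is a (possibly extended) run of $a$'s; write $m$ for its length, noting $m \ge \xrunlength$. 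Since the subproblem forbids extending the $\yrun$-th run of $\overline{y'}$, the last run of $\overline{y'}$ is exactly $\yrunoffset$ copies of $b$. In particular, in any correspondence between $x'$ and $y'$ the final column pairs an $a$ with a $b$ and so contributes $\rundiff$.

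For the inequality $\tipToMiddle(x, y, \xrun, \yrun, \yrunoffset) \le R$ I would, for each term of $R$, construct a correspondence of the required type achieving that term, and take the minimum. For the term $\tipToMiddle(x, y, \xrun, \yrun, \yrunoffset - 1) + \rundiff$ (present in both branches): take an optimal correspondence for $\tipToMiddle(x, y, \xrun, \yrun, \yrunoffset - 1)$ and append one $a$ to $\overline{x'}$ (lengthening its final run) and one $b$ to $\overline{y'}$ (growing the $\yrun$-th run from $\yrunoffset - 1$ to $\yrunoffset$ copies, still unextended). For $\tipToMiddle(x, y, \xrun - 1, \yrun, \yrunoffset - \xrunlength) + \rundiff \cdot \xrunlength$ (used when $\xrunlength \le \yrunoffset$): append the entire $\xrun$-th run of $x$ ($\xrunlength$ unextended copies of $a$) to $\overline{x'}$ and $\xrunlength$ copies of $b$ to $\overline{y'}$. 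For $\tipToMiddle(y, x, \yrun - 1, \xrun, \xrunlength - \yrunoffset) + \rundiff \cdot \yrunoffset$ (used when $\xrunlength > \yrunoffset$): start from an optimal correspondence for this symmetric subproblem, in which the $\xrun$-th run of $x$ currently consists of its first $\xrunlength - \yrunoffset$ letters and is unextended, append $\yrunoffset$ further copies of $a$ to complete that run, and append the entire $\yrun$-th run of $y$ ($\yrunoffset$ unextended copies of $b$). In each construction the two appended blocks are equally long, the result expands the correct truncated strings, the final-run-unextended constraint persists, and the cost grows by exactly the stated amount.

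For the reverse inequality I would take an optimal correspondence $(\overline{x'}, \overline{y'})$ for $\tipToMiddle(x, y, \xrun, \yrun, \yrunoffset)$ --- if none exists the inequality is vacuous --- and split on $m$. If $m > \xrunlength$, delete the final column: the last run of $\overline{x'}$ keeps $m - 1 \ge \xrunlength \ge 1$ copies of $a$ and the $\yrun$-th run of $\overline{y'}$ becomes $\yrunoffset - 1$ unextended copies of $b$, a valid correspondence for $\tipToMiddle(x, y, \xrun, \yrun, \yrunoffset - 1)$, so the cost is at least $\tipToMiddle(x, y, \xrun, \yrun, \yrunoffset - 1) + \rundiff$. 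If $m = \xrunlength$ and $\xrunlength \le \yrunoffset$, then $m \le \yrunoffset$, so the final $\xrunlength$ columns pair the whole $\xrun$-th run of $\overline{x'}$ with a suffix of the $b$'s of $\overline{y'}$ at cost $\rundiff \cdot \xrunlength$; deleting them leaves a valid correspondence for $\tipToMiddle(x, y, \xrun - 1, \yrun, \yrunoffset - \xrunlength)$. If $m = \xrunlength$ and $\xrunlength > \yrunoffset$, then $m > \yrunoffset$, so the final $\yrunoffset$ columns pair the whole $\yrun$-th run of $\overline{y'}$ with a suffix of the $a$'s of $\overline{x'}$ at cost $\rundiff \cdot \yrunoffset$; deleting them leaves a valid correspondence for $\tipToMiddle(y, x, \yrun - 1, \xrun, \xrunlength - \yrunoffset)$. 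Since $m \ge \xrunlength$ always, these cases are exhaustive within each branch of the lemma, and each yields a lower bound equal to one of the terms of $R$, so $\tipToMiddle(x, y, \xrun, \yrun, \yrunoffset) \ge R$.

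I expect the conceptual part to be straightforward and the main obstacle to be purely careful bookkeeping in the reverse direction: confirming in each sub-case that deleting the chosen block of final columns yields a correspondence that genuinely expands the correctly truncated versions of $x$ and $y$ and genuinely leaves its last run unextended, and attending to the degenerate endpoints --- $\yrunoffset = 1$, $\xrunlength = \yrunoffset$, a truncated string becoming empty, or a subproblem in $R$ equaling $\infty$ (in which case that term does not constrain the minimum and the argument still goes through). All of these collapse to the two structural observations above, so I anticipate length rather than difficulty.
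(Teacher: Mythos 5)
Your proposal is correct and follows essentially the same route as the paper's proof: a case split on whether the final run of the expansion of $x'$ is extended (your $m > \xrunlength$ versus $m = \xrunlength$), followed by the sub-split on $\xrunlength$ versus $\yrunoffset$ to decide which run engulfs which. The paper merely compresses the two inequality directions into each case ("the cost of $A$ must be \dots" and "since $A$ is minimum-cost, it is at most \dots"), whereas you separate the explicit constructions for the upper bound from the column-deletion argument for the lower bound; the content is the same, and your treatment of the boundary cases is if anything more careful than the paper's.
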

\begin{proof}
  Consider a minimum-cost correspondence $A$ between the first $\xrun$ runs
  of $x$ and the portion of $y$ up until the $\yrunoffset$-th letter
  in the $\yrun$-th run, such that the $\yrun$-th run in $y$ is not
  extended.

  If the $\xrun$-th run in $A$ is extended, then the cost of $A$ will
  be $\tipToMiddle(x, y, \xrun, \yrun, \yrunoffset - 1) + \rundiff$. If the $\xrun$-th run in $A$ is not extended, then we consider two
  cases.

  In the first case, $\xrunlength \le \yrunoffset$. In this case, the
  entirety of the $\xrun$-th run of $x$ is engulfed by the $\yrun$-th
  run of $y$ in the correspondence $A$. Since the $\xrun$-th run is
  not extended, the cost of the overlap is $\xrunlength \cdot d$. Thus
  the cost of $A$ must be
  $\tipToMiddle(x, y, \xrun - 1, \yrun, \yrunoffset - \xrunlength) +
  \rundiff \cdot \xrunlength.$

  Moreover, since $A$ is minimum-cost, as long as
  $\xrunlength \le \yrunoffset$, the cost of $A$ is at most the above
  expression, regardless of whether the $\xrun$-th run in $x$ is
  extended in $A$.

  In the second case, $\xrunlength > \yrunoffset$. In this case, the
  first $\yoffset$ letters in the $\yrun$-th run of $y$ all overlap
  the $\xrun$-th run of $x$ in $A$. Since the $\yrun$-th run is not
  extended, the cost of the overlap is $d \cdot \yoffset$. Thus, since
  the $\xrun$-th run in $x$ is also not extended in $A$, the cost of
  $A$ must be
  $\tipToMiddle(y, x, \yrun - 1, \xrun, \xrunlength - \yrunoffset) + \rundiff \cdot
  \yoffset.$
  Moreover, since $A$ is minimal, as long as $\xrunlength
  > \yrunoffset$, the cost of $A$ is at most the above expression,
  regardless of whether the $\xrun$-th run in $x$ is extended.
\end{proof}

The above lemma handles cases where $\xrun, \yrun, \yrunoffset >
0$. In the case where $\xrun > 0$, $\yrun > 0$, and $\yrunoffset = 0$,
$\tipToMiddle(x, y, \xrun, \yrun, \yrunoffset)$ is just the dynamic
time warping distance between the first $\xrun$ runs of $x$ and the
first $\yrun - 1$ runs of $y$, given by
$\min\left(\tipToMiddle(x, y, \xrun, \yrun - 1, t_1), \tipToMiddle(y, x, \yrun - 1, \xrun, t_2) \right),$
where $t_1$ is the length of the $(\yrun - 1)$-th run in $y$ and
$t_2$ is the length of the $\xrun$-th run in $x$. The remaining cases are edge-cases with $\tipToMiddle(x, y, \xrun, \yrun,
\yrunoffset) \in \{0, \infty\}$. (See Appendix
\ref{seclowdistancefull}.)

One can show that any correspondence $A$ in which a letter from the
$\xrun$-th run of $x$ is matched with a letter from the $\yrun$-th run
of $y$ must contain must contain at least $\frac{|\xrun - \yrun| -
  1}{2}$ instances of unequal letters being matched; we prove this in
Appendix \ref{seclowdistancefull}. It follows that if $\dtw(x, y) \le
K$, then we can limit ourselves to subproblems in which $|\xrun -
\yrun| \le O(K)$. For each of the $n$ options of $(\yrun,
\yrunoffset)$, there are only $O(K)$ options of $\xrun$ that must be
considered. This limits the total number of subproblems to
$O(nK)$. The resulting dynamic program yields the following theorem:

\begin{thm}
Let $x$ and $y$ be strings of length $n$ taken from a metric space
$\Sigma$ with minimum non-zero distance at least one, and let $K$ be
parameter such that $\dtw(x, y) \le K$. Then there exists a dynamic
program for computing $\dtw(x, y)$ in time $O(nK)$. Moreover, if
$\dtw(x, y) > K$, then the dynamic program will return a value greater
than $K$.
\label{thmdp}
\end{thm}

By repeatedly doubling one's guess for $K$ until the computed value of
$\dtw(x, y)$ evaluates to less than $K$, one can therefore compute
$\dtw(x, y)$ in time $O(n \cdot \dtw(x, y))$.


\section{Approximating DTW Over Well-Separated Tree Metrics}\label{secdtwapprox}

In this section, we present an $\tilde{O}(n^{2 - \epsilon})$-time
$O(n^{\epsilon})$-approximation algorithm for DTW over a
well-separated tree metric with logarithmic depth. We begin by
presenting a brief background on well-separated tree metrics.

\begin{defn}
  Consider a tree $T$ whose vertices form an alphabet $\Sigma$, and
  whose edges have positive weights. $T$ is said to be a
  \emph{well-separated tree metric} if every root-to-leaf path
  consists of edges ordered by nonincreasing weight. The
  \emph{distance} between two nodes $u, v \in \Sigma$ is defined as
  the maximum weight of any edge in the shortest path from $u$ to $v$.

\end{defn}

Well-separated tree metrics are universal in the sense that any metric
$\Sigma$ can be efficiently embedded (in time $O(|\Sigma|^2)$) into a
well-separated tree metric $T$ with expected distortion $O(\log
|\Sigma|)$ \cite{trees}. Moreover, the tree metric may be
  made to have logarithmic depth using Theorem 8 of
  \cite{bansal2011}. For strings $x, y \in \Sigma^n$, let $\dtw_T(x,
y)$ denote the dynamic time warping distance after embedding $\Sigma$
into $T$. Then the tree-metric embedding guarantees that $\dtw(x, y)
\le \dtw_T(x, y)$ and that $\E[\dtw_T(x, y)] \le O(\log n) \cdot \dtw(x, y)$. (The
latter fact is slightly nontrivial and is further explained in
Appendix \ref{secdtwapproxfull}.)

It follows that any approximation algorithm for DTW over
well-separated tree metrics will immediately yield an approximation
algorithm over an arbitrary polynomial-size metric $\Sigma$, with two
caveats: the new algorithm will have its multiplicative error
increased by $O(\log n)$; and $O(\log n)$ instances of $\Sigma$
embedded into a well-separated tree metric must be precomputed for use
by the algorithm (requiring, in general, $O(|\Sigma|^2 \log n)$
preprocessing time). In particular, given $O(\log n)$ tree embeddings of
$\Sigma$, $T_1, \ldots, T_{O(\log n)}$, with high probability $\min_i
\left(\dtw_{T_i}(x, y)\right)$ will be within a logarithmic factor of
$\dtw(x, y)$.

The remainder of the section will be devoted to designing an
approximation algorithm for DTW over a well-separated tree
metric. We will prove the following theorem:

\begin{thm}
  Consider $0 < \epsilon < 1$. Suppose that $\Sigma$ is a
  well-separated tree metric of polynomial size and at most
  logarithmic depth. Moreover, suppose that the aspect ratio of
  $\Sigma$ is at most exponential in $n$ (i.e., the ratio between the
  largest distance and the smallest non-zero distance). Then in time
  $\tilde{O}(n^{2 - \epsilon})$ we can obtain an
  $O(n^{\epsilon})$-approximation for $\dtw(x, y)$ for any $x, y \in
  \Sigma^n$.
  \label{thmdtwapproxtree0}
\end{thm}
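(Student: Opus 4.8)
The plan is to reduce the approximation of $\dtw(x,y)$ over a well-separated tree metric to a small number of invocations of the exact low-distance-regime algorithm of Theorem~\ref{thmdp}, run on modified instances. The key observation is that a well-separated tree metric with aspect ratio exponential in $n$ has only $O(\log n)$ distinct edge weights after we round each weight up to the nearest power of two; call these scales $w_1 > w_2 > \cdots > w_L$ with $L = O(\log n)$. Fix a threshold scale $w_\ell$. I would define a ``coarsened'' metric $\Sigma_\ell$ by contracting every edge of weight $\le w_\ell$ (so that characters lying in a common subtree below scale $w_\ell$ become identical) and then rescaling the surviving distances so that the smallest non-zero distance becomes $1$. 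Two characters that were at tree-distance $w_j$ for $j < \ell$ are now at distance roughly $w_j / w_\ell$, which is a power of two between $1$ and $2^{L}= \poly(n)$. Running the Theorem~\ref{thmdp} algorithm on $(x,y)$ over $\Sigma_\ell$ with guess parameter $K = \tilde O(n^{2-\epsilon}/n) = \tilde O(n^{1-\epsilon})$ costs $\tilde O(n^{2-\epsilon})$ time, and either returns $\dtw_{\Sigma_\ell}(x,y)$ exactly (when it is $\le K$) or certifies it exceeds $K$.

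Next I would argue that the $L$ values obtained this way, suitably recombined, pin down $\dtw_T(x,y)$ to within an $O(n^\epsilon)$ factor. The intuition is a ``guess the right scale'' argument: there is some scale $w_\ell$ such that the optimal correspondence for $\dtw_T$ matches only character pairs of tree-distance $\ge w_\ell$ a ``few'' times and pairs of distance $< w_\ell$ ``many'' times, and at that scale the contracted instance is low-distance. Concretely, let $w_{\ell^*}$ be the largest scale with $\dtw_{\Sigma_{\ell^*}}(x,y) \cdot w_{\ell^*} \le n^\epsilon \cdot \dtw_T(x,y)$ — such a scale exists because at the coarsest non-trivial scale the contracted distance is at most $2n$ and $\dtw_T(x,y) \ge 1$, while shrinking the scale only decreases contributions. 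On one hand, any correspondence realizing $\dtw_{\Sigma_{\ell^*}}(x,y)$ lifts to a correspondence for $\dtw_T$ of cost at most $\dtw_{\Sigma_{\ell^*}}(x,y)\cdot w_{\ell^*} + (\text{cost of contracted pairs in } T) \le \dtw_{\Sigma_{\ell^*}}(x,y)\cdot w_{\ell^*} + 2n\cdot w_{\ell^*+1}$; on the other hand $\dtw_{\Sigma_{\ell^*}}(x,y)\cdot w_{\ell^*} \ge \dtw_T(x,y)/n^\epsilon$ by the failure of the next scale up (using that doubling the scale at most doubles $w$ but the contracted distance can only change by bounded multiplicative factors per level over $O(\log n)$ levels). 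Balancing these, the algorithm outputs $\min_\ell \big(w_\ell \cdot (\text{value returned at scale }\ell)\big)$, clipped appropriately, and this lies in $[\dtw_T(x,y), O(n^\epsilon)\cdot \dtw_T(x,y)]$.

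The total running time is $L = O(\log n)$ calls to the Theorem~\ref{thmdp} algorithm, each with $K = \tilde O(n^{1-\epsilon})$, for a total of $\tilde O(n^{2-\epsilon})$, plus $O(|\Sigma|^2 \log n)$ preprocessing to build the contracted alphabets (absorbed into $\tilde O(n^{2-\epsilon})$ since $|\Sigma| = \poly(n)$; if $|\Sigma|$ is larger one works with only the characters appearing in $x,y$, of which there are $\le 2n$). Composing with the tree embedding discussion preceding the theorem then gives the approximation over arbitrary polynomial-size metrics, and the folklore $\mathbb{R} \to$ tree-metric embedding gives the real-valued corollary.

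I expect the main obstacle to be the recombination/analysis step: showing that the single best scale simultaneously controls the ``large-distance'' contributions (which must be few, hence fit under the parameter $K$ after rescaling) and the ``small-distance'' residual (which must be cheap in $T$). The delicate point is that the optimal correspondence can mix many scales, so one needs a clean charging argument — essentially a telescoping over the $O(\log n)$ scales — to show that \emph{some} single contraction level captures a $1/n^\epsilon$ fraction of the true cost while remaining low-distance; getting the $n^\epsilon$ (rather than, say, $n^\epsilon \log n$ or worse) requires care in how the scales are spaced and how $K$ is chosen relative to $\epsilon$.
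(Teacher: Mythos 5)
Your core idea---coarsen the tree metric at a scale $r$ so that distinct surviving characters are far apart, rescale, and invoke the low-distance algorithm of Theorem~\ref{thmdp} with $K = \tilde{O}(n^{1-\epsilon})$---is exactly the paper's $r$-simplification strategy (Lemma~\ref{lemthreeprops0} together with Lemma~\ref{lemdiagonalalgdtw0}). But there is a genuine gap in how you enumerate scales. You claim that rounding edge weights to powers of two leaves only $L = O(\log n)$ distinct scales; that is true for \emph{polynomial} aspect ratio, but the theorem allows aspect ratio \emph{exponential} in $n$, so the number of powers of two between the smallest and largest weights can be $\Theta(n)$ (and restricting to ``interesting'' scales where some edge weight actually lives still leaves up to $|\Sigma|-1 = \operatorname{poly}(n)$ of them). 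Running the $\tilde{O}(n^{2-\epsilon})$-time subroutine at every scale and taking $\min_\ell\bigl(w_\ell \cdot \text{value}_\ell\bigr)$ then costs $\tilde{O}(n^{3-\epsilon})$ or worse, which breaks the claimed running time.

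The paper repairs this by formulating a \emph{monotone} decision problem at each scale---the $(r, n^{\epsilon})$-DTW gap problem, where output $0$ certifies $\dtw(x,y) < nr$ and output $1$ certifies $\dtw(x,y) \ge n^{1-\epsilon} r$---and then binary searching over $i \in \{0,\ldots,\lceil \log m\rceil\}$ for the transition point where the answer flips from $1$ to $0$; this needs only $O(\log\log m) = O(\log n)$ invocations, and the transition scale $2^i$ immediately sandwiches $\dtw(x,y)$ in $[2^i n^{1-\epsilon},\, 2^i n]$. Your ``min over all scales'' recombination does not expose any monotone structure that a binary search could exploit, and the telescoping/charging argument you flag as the main obstacle is indeed where your write-up is incomplete (your definition of $\ell^*$ also references $\dtw_T(x,y)$ itself, and the ``clipped appropriately'' handling of scales where the subroutine only certifies ``value $> K$'' is not specified). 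The two inequalities of Lemma~\ref{lemthreeprops0}---simplification never increases the DTW, and decreases it by at most $nr/2$ against the $\ge nr$ threshold---make the gap problem correct at every single scale with no cross-scale charging needed, which is the cleaner route you were looking for.
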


An important consequence of the theorem occurs for DTW over the
reals. When $\Sigma$ is an $O(n)$-point subset of the reals with a
polynomial aspect ratio, there exists an $O(n\log n)$-time embedding
with $O(\log n)$ expected distortion from $\Sigma$ to a well-separated
tree metric of size $O(n)$ with logarithmic depth. (See Appendix
\ref{secdtwapproxfull}). This gives the following corollary:

\begin{cor}
  Consider $0 < \epsilon < 1$. Suppose that $\Sigma = [0, n^c] \cap
  \mathbb{Z}$ for some constant $c$. Then in time $\tilde{O}(n^{2 -
    \epsilon})$ we can obtain an $O(n^{\epsilon})$-approximation for
  $\dtw(x, y)$ with high probability for any $x, y \in \Sigma^n$.
  \label{corsimple0}
\end{cor}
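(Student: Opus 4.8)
The plan is to reduce the real-valued case to Theorem \ref{thmdtwapproxtree0} by exhibiting a good well-separated tree metric embedding for $\Sigma = [0, n^c] \cap \mathbb{Z}$, then invoking the theorem as a black box. First I would construct the embedding. The natural choice is a hierarchical partition of the interval $[0, n^c]$: repeatedly halve the interval, building a binary tree of depth $O(\log n^c) = O(\log n)$ whose leaves are the integer points of $\Sigma$. To turn this into a well-separated tree metric I would assign edge weights that decrease geometrically with depth — an edge at level $k$ (counting from the root) gets weight proportional to $n^c / 2^{k}$, so that every root-to-leaf path has nonincreasing weights, as the definition of a well-separated tree metric requires. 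The tree has size $O(n^c)$, which is polynomial, and the aspect ratio is at most $n^c$, which is certainly at most exponential in $n$; these are exactly the hypotheses Theorem \ref{thmdtwapproxtree0} needs.

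Next I would verify the distortion bound. For two integer points $u, v \in \Sigma$, their tree distance is the weight of the topmost edge on the path between them, which is governed by the coarsest level at which $u$ and $v$ lie in different halves. In a fixed dyadic partition this can be as large as $\Theta(n^c)$ even when $|u - v| = 1$ (if $u$ and $v$ straddle the midpoint), so a single deterministic embedding does not suffice. The standard fix is to randomize: shift the whole dyadic decomposition by a uniformly random offset before building the tree. With a random shift, the probability that $u$ and $v$ are separated at level $k$ is $O(|u-v| \cdot 2^k / n^c)$, and summing the contribution $n^c/2^k$ over levels gives expected tree distance $O(|u - v| \log n)$, while the tree distance is always at least $|u - v|$. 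This yields $|u - v| \le d_T(u,v)$ pointwise and $\E[d_T(u,v)] \le O(\log n)\,|u-v|$, which is precisely the $O(\log n)$-expected-distortion guarantee discussed before Theorem \ref{thmdtwapproxtree0}; as explained there, this lifts to $\dtw(x,y) \le \dtw_T(x,y)$ always and $\E[\dtw_T(x,y)] \le O(\log n)\,\dtw(x,y)$, and taking the minimum over $O(\log n)$ independent embeddings gives, with high probability, a value within $O(\log n)$ of $\dtw(x,y)$. The embedding is computable in $O(n \log n)$ time since the tree has $O(n^c)$ nodes and each of the $n$ characters is placed by a root-to-leaf walk of length $O(\log n)$.

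Finally I would combine the pieces: run the algorithm of Theorem \ref{thmdtwapproxtree0} on each of the $O(\log n)$ tree embeddings $T_1, \dots, T_{O(\log n)}$, each in time $\tilde{O}(n^{2-\epsilon})$, and output $\min_i$ of the resulting estimates. Each estimate is an $O(n^\epsilon)$-approximation to $\dtw_{T_i}(x,y)$, and with high probability some $T_i$ has $\dtw_{T_i}(x,y) = O(\log n)\,\dtw(x,y)$, so the minimum is an $O(n^\epsilon \log n) = \tilde{O}(n^\epsilon)$-approximation to $\dtw(x,y)$; absorbing the logarithmic factor (or adjusting $\epsilon$ by an arbitrarily small constant) gives the claimed $O(n^\epsilon)$-approximation in total time $\tilde{O}(n^{2-\epsilon})$. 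I expect the main obstacle to be the distortion analysis of the randomized dyadic embedding — specifically, being careful that the "always an overestimate" side ($\dtw \le \dtw_T$) holds pointwise and not just in expectation, so that the $\min$-over-embeddings trick is valid, and that the expected-distortion bound for $\dtw$ (rather than just for individual character pairs) goes through, which is the slightly nontrivial point flagged in the text and handled via linearity over the characters of an optimal correspondence.
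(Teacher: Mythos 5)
Your proposal is correct and follows essentially the same route as the paper: the paper also builds a randomized hierarchical partition of the interval (a random pivot chosen from the middle half of each subrange, rather than your randomly shifted dyadic grid, but the domination and $O(\log n)$ expected-distortion analysis is identical in structure), obtains a logarithmic-depth well-separated tree metric of polynomial size and aspect ratio, invokes Theorem \ref{thmdtwapproxtree0}, and takes the minimum over $O(\log n)$ independent embeddings to get the high-probability guarantee. The only cosmetic difference is that in the paper's construction the pivot itself is a point of $\Sigma$ serving as the root, so all tree vertices are alphabet elements as the definition of a well-separated tree metric requires, whereas your dyadic tree introduces auxiliary internal nodes -- a harmless discrepancy that is resolved by enlarging the alphabet or collapsing those nodes.
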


In proving Theorem \ref{thmdtwapproxtree0}, our approximation algorithm
will take advantage of what we refer to as the
\emph{$r$-simplification} of a string over a well-separated tree
metric.
\begin{defn}
Let $T$ be a well-separated tree metric whose nodes form an alphabet $\Sigma$. For a
string $x \in \Sigma^n$, and for any $r \ge 1$, the
\emph{$r$-simplification} $s_r(x)$ is constructed by replacing
each letter $l \in x$ with its highest ancestor $l'$ in $T$ that can
be reached from $l$ using only edges of weight at most $r / 4$.
\end{defn}

Our approximation algorithm will apply the low-distance regime
algorithm from the previous section to $s_r(x)$ and $s_r(y)$ for
various $r$ in order to extract information about $\dtw(x, y)$. Notice
that using our low-distance regime algorithm for DTW, we get the
following useful lemma for free:

\begin{lem}
  Consider $0 < \epsilon < 1$. Suppose that for all pairs $l_1, l_2$
  of distinct letters in $\Sigma$, $d(l_1, l_2) \ge \gamma$. Then for
  $x, y \in \Sigma^n$ there is an $O(n^{2 - \epsilon})$ time algorithm
  which either computes $\dtw(x, y)$ exactly, or concludes that $\dtw(x,
  y) > \gamma n^{1 - \epsilon}$.
  \label{lemdiagonalalgdtw0}
\end{lem}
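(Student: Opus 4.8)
The plan is to obtain this for free from Theorem~\ref{thmdp} by a rescaling argument. First I would replace the metric $d$ on $\Sigma$ by the rescaled metric $d' = d/\gamma$. Since $\gamma$ is a lower bound on every nonzero distance of $\Sigma$, the metric $d'$ has minimum nonzero distance at least $1$, so the pair $(x, y)$ over $(\Sigma, d')$ is a legal input to the low-distance regime algorithm. I would then observe that DTW is homogeneous under this rescaling: the cost of any correspondence $(\overline{x}, \overline{y})$ is $\sum_i d(\overline{x}_i, \overline{y}_i)$, so scaling $d$ by $1/\gamma$ scales the cost of every correspondence by $1/\gamma$, and hence, taking the minimum over correspondences, $\dtw_{d'}(x, y) = \dtw_d(x, y)/\gamma$.

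Next I would set $K = \lceil n^{1-\epsilon} \rceil$ and run the dynamic program of Theorem~\ref{thmdp} on $x$ and $y$ over $(\Sigma, d')$ with this value of $K$. By Theorem~\ref{thmdp} the running time is $O(nK) = O(n \cdot n^{1-\epsilon}) = O(n^{2-\epsilon})$. Let $V$ be the returned value. If $V \le K$, then by the theorem the computation was exact, so $\dtw_{d'}(x, y) = V$, and I would output $\gamma V = \dtw_d(x, y)$. If instead $V > K$, then by the ``moreover'' clause of Theorem~\ref{thmdp} we have $\dtw_{d'}(x, y) > K \ge n^{1-\epsilon}$, hence $\dtw_d(x, y) = \gamma \cdot \dtw_{d'}(x, y) > \gamma n^{1-\epsilon}$, and I would report this conclusion.

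There is essentially no obstacle here beyond bookkeeping; all the work is done by Theorem~\ref{thmdp}. The two facts that need checking are both immediate: that $d'$ is still a metric with minimum nonzero distance at least $1$ (this is exactly the hypothesis $d(l_1, l_2) \ge \gamma$ for distinct $l_1, l_2$), and that DTW scales linearly with the underlying metric (from the definition of correspondence cost). The one subtlety worth emphasizing in the write-up is that we never need to know $\dtw(x, y)$ in advance to pick a valid $K$: the second half of Theorem~\ref{thmdp} guarantees that an undersized guess is \emph{detected} rather than yielding a wrong answer, which is precisely what lets the algorithm split cleanly into the ``exact value'' and ``distance exceeds $\gamma n^{1-\epsilon}$'' cases.
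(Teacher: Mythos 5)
Your proposal is correct and matches the paper's intent exactly: the paper's proof of this lemma is the single line ``this follows immediately from Theorem~\ref{thmdp},'' and your rescaling of the metric by $1/\gamma$ together with the choice $K = \lceil n^{1-\epsilon}\rceil$ is precisely the bookkeeping that makes that immediate deduction rigorous.
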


The next lemma states three important properties of
$r$-simplifications. We remark that the same lemma appears in our
concurrent work on the communication complexity of DTW, in which we
use the lemma in designing an efficient
one-way communication protocol \cite{dtwcomm}.

\begin{lem}
  Let $T$ be a well-separated tree metric with distance function $d$
  and whose nodes form the alphabet $\Sigma$. Consider strings $x$ and
  $y$ in $\Sigma^{n}$.

  Then the following three properties of $s_r(x)$ and $s_r(y)$ hold:
  \begin{itemize}
  \item For every letter $l_1 \in s_r(x)$ and every letter $l_2 \in
    s_r(y)$, if $l_1 \neq l_2$, then $d(l_1, l_2) > r / 4$.
  \item For all $\alpha$, if $\dtw(x, y) \le nr / \alpha$ then $\dtw(s_r(x), s_r(y)) \le nr / \alpha$.
  \item If $\dtw(x, y) > nr$, then $\dtw(s_r(x), s_r(y)) > nr/2$. 
    \end{itemize}
  \label{lemthreeprops0}
\end{lem}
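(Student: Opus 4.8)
The plan is to prove the three properties in order, since each is of a somewhat different flavor but all rely on understanding how the $r$-simplification map interacts with distances in the tree metric $T$.

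The first property is essentially immediate from the structure of well-separated tree metrics. Recall that in $s_r(x)$, every letter $l$ is replaced by its highest ancestor $l'$ reachable from $l$ using only edges of weight $\le r/4$. So for $l_1 \in s_r(x)$ and $l_2 \in s_r(y)$ with $l_1 \neq l_2$, consider the shortest path between them in $T$. I would argue that this path must use an edge of weight $> r/4$: if every edge on the $l_1$-to-$l_2$ path had weight $\le r/4$, then $l_1$ and $l_2$ would be connected by light edges, contradicting the maximality in the definition of $s_r$ (one of them could be lifted further, or they would have been lifted to a common ancestor). Since distance in $T$ is the maximum edge weight on the connecting path, this gives $d(l_1, l_2) > r/4$. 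I should be a little careful to phrase the maximality argument correctly using the fact that root-to-leaf paths have nonincreasing weights, but this is routine.

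For the second property, I want to turn an optimal correspondence for $(x,y)$ into a correspondence for $(s_r(x), s_r(y))$ of no greater cost. The key observation is that the simplification map is a \emph{contraction}: for any two original letters $a, b$, we have $d(s_r(a), s_r(b)) \le d(a,b)$. This is because lifting $a$ to $a'$ only traverses edges of weight $\le r/4 \le d(a,b)$ when $a\ne b$ along the relevant portion of the tree... actually the cleaner statement is that simplification only ever \emph{decreases} pairwise distances, which follows because $s_r$ maps each letter to an ancestor, and moving to an ancestor along light edges can only shrink the max-edge-weight distance to any fixed target (or at worst keep it the same). Given this contraction property, apply $s_r$ coordinate-wise to an optimal correspondence $(\overline{x}, \overline{y})$ for $(x,y)$; the result is a valid correspondence for $(s_r(x), s_r(y))$ (extensions are preserved under the letter-replacement, since runs can only merge), and its cost is at most $\dtw(x,y) \le nr/\alpha$. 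Hence $\dtw(s_r(x), s_r(y)) \le nr/\alpha$. I should double-check the edge case where simplification merges adjacent runs, but this only helps.

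The third property is the contrapositive-flavored direction and is where I expect the main obstacle to lie: I need a \emph{lower} bound on $\dtw(s_r(x), s_r(y))$ in terms of $\dtw(x,y)$, and simplification can destroy a lot of distance information (two far-apart letters can be simplified to the same letter if they share a low ancestor reachable by light edges). The strategy is to take an optimal correspondence $(\overline{s_r(x)}, \overline{s_r(y)})$ achieving $\dtw(s_r(x), s_r(y))$ and lift it back to a correspondence for $(x, y)$, bounding how much the cost can increase. The point is that whenever a position contributes $0$ to the simplified cost (i.e., $s_r(x)_i = s_r(y)_i$ after expansion), the corresponding original letters $\overline{x}_i$ and $\overline{y}_i$ both lie in the subtree hanging below that common simplified node via edges of weight $\le r/4$, so $d(\overline{x}_i, \overline{y}_i) \le r/4$; and whenever a position contributes a positive amount $c$ to the simplified cost, the original distance is at most... here I need that $d(\overline{x}_i,\overline{y}_i) \le c + r/2$ or some such, using that the original letters are within $r/4$ of their simplifications and the triangle inequality (in the tree metric, actually the ultrametric-like bound $d(a,b)\le \max(d(a,a'),d(a',b),\ldots)$). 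Summing over the $\le 2n$ positions of the correspondence, the total original cost is at most $\dtw(s_r(x),s_r(y)) + 2n \cdot (r/2) = \dtw(s_r(x),s_r(y)) + nr$. So if $\dtw(s_r(x),s_r(y)) \le nr/2$ then $\dtw(x,y) \le nr/2 + nr$, which is not quite $> nr$ — I will need to tighten the constants (use $r/4$ rather than $r/2$ in the per-position overhead, or use the well-separated structure more carefully so that positions contributing nothing cost at most $r/4$ and the overhead telescopes), but the shape of the argument is clear, and getting the constant to work out to contradict $\dtw(x,y) > nr$ is the crux of the proof.
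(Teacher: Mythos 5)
Your first two parts follow the paper's own argument and are fine. For the second part, the cleanest way to justify your ``contraction'' claim is the dichotomy that the paper makes explicit and that also drives part three: for any letters $l_1, l_2$, either $d(l_1, l_2) \le r/4$, in which case $s_r(l_1) = s_r(l_2)$ (both letters reach the top of the path between them via edges of weight at most $r/4$, and from there climb to the same highest light-reachable ancestor), or $d(l_1, l_2) > r/4$, in which case $d(s_r(l_1), s_r(l_2)) = d(l_1, l_2)$ \emph{exactly}, because the path between the simplified letters is the path between the originals with only light edges stripped off both ends, and the distance is the \emph{maximum} edge weight on that path.

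The genuine gap is in part three, and you have correctly located it: the per-position overhead of $r/2$ that you assign to positions where the simplified letters differ, obtained from the ordinary triangle inequality $d(a,b) \le d(a,a') + d(a',b') + d(b',b)$, loses a constant the argument cannot afford, and your proposal stops short of repairing it. The repair is the second half of the dichotomy above: when $s_r(a) \neq s_r(b)$ the distance is exactly preserved, so those positions contribute \emph{zero} overhead, and when $s_r(a) = s_r(b)$ you already observed $d(a,b) \le r/4$. Hence lifting an optimal correspondence for $(s_r(x), s_r(y))$ of length at most $2n$ back to $(x,y)$ increases the cost by at most $2n \cdot r/4 = nr/2$, giving $\dtw(x,y) \le \dtw(s_r(x), s_r(y)) + nr/2$, and therefore $\dtw(s_r(x), s_r(y)) > nr - nr/2 = nr/2$ whenever $\dtw(x,y) > nr$. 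You do gesture at the max-type bound $d(a,b) \le \max\left(d(a,a'), d(a',b'), d(b',b)\right)$, which indeed yields $d(a,b) \le \max(r/4, c, r/4) = c$ since $c > r/4$ by part one; substituting that for $c + r/2$ as the per-position bound is exactly what is missing.
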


The first and second parts of Lemma \ref{lemthreeprops0} are
straightforward from the definitions of $s_r(x)$ and
$s_r(y)$. The third part follows from the observation
that a correspondence $C$ between $x$ and $y$ can cost at most $|C|
\cdot \frac{r}{4}$ more than the corresponding correspondence between
$s_r(x)$ and $s_r(y)$, where $|C|$ denotes the length of the
correspondence. Since there exists an optimal correspondence between
$s_r(x)$ and $s_r(y)$ of length no more than $2n$, it follows that
$\dtw(x, y) \le \dtw(s_r(x), s_r(y)) + nr/2$, which implies the third
part of the lemma.

A full proof of Lemma \ref{lemthreeprops0} appears in Appendix
\ref{secdtwapproxfull}. Next we prove Theorem \ref{thmdtwapproxtree0}.

\begin{proof}[Proof of Theorem \ref{thmdtwapproxtree0}]
Without loss of generality, the minimum non-zero distance in $\Sigma$ is 1 and
the largest distance is some value $m$, which is at most exponential
in $n$.

We begin by defining the \emph{$(r, n^{\epsilon})$-DTW gap} problem
for $r \ge 1$, in which for two strings $x$ and $y$ a return value of
0 indicates that $\dtw(x, y) < nr$ and a return value of 1 indicates
that $\dtw(x, y) \ge n^{1 - \epsilon}r$. By Lemma \ref{lemthreeprops0},
in order to solve the $(r, n^{\epsilon})$-DTW gap problem for $x$ and
$y$, it suffices to determine whether $\dtw(s_r(x), s_r(y)) \le n^{1 -
  \epsilon}r$. Moreover, because the minimum distance between distinct
letters in $s_r(x)$ and $s_r(y)$ is at least $r / 4$, this can be done
in time $O(n^{2 - \epsilon} \log n)$ using Lemma
\ref{lemdiagonalalgdtw0}.\footnote{The logarithmic factor comes from
  the fact that evaluating distances between points may take
  logarithmic time in our well-separated tree metric.}

In order to obtain an $n^{\epsilon}$-approximation for $\dtw(x, y)$,
we begin by using Lemma \ref{lemdiagonalalgdtw0} to either determine
$\dtw(x, y)$ or to determine that $\dtw(x, y) \ge n^{1 -
  \epsilon}$. For the rest of the proof, suppose we are in the latter
case, meaning that we know $\dtw(x, y) \ge n^{1 - \epsilon}$.

We will now consider the $(2^i, n^{\epsilon} / 2)$-DTW gap problem for
$i \in \{0, 1, 2, \ldots, \lceil \log m \rceil\}$. (Recall that $m$ is
the largest distance in $\Sigma$.) If the $(2^0, n^{\epsilon} / 2)$-DTW
gap problem returned 0, then we would know that $\dtw(x, y) \le n$,
and thus we could return $n^{1 - \epsilon}$ as an
$n^{\epsilon}$-approximation for $\dtw(x, y)$. Therefore, we need only
consider the case where the $(2^0, n^{\epsilon} / 2)$-DTW gap returns
$1$. Moreover we may assume without computing it that $(2^{ \lceil
  \log m \rceil}, n^{\epsilon}/2)$-DTW gap returns 0 since trivially
$\dtw(x, y)$ cannot exceed $nm$. Because $(2^i, n^{\epsilon} / 2)$-DTW
gap returns 1 for $i = 0$ and returns $0$ for $i = \lceil \log m
\rceil$, there must be some $i$ such that $(2^{i - 1}, n^{\epsilon} /
2)$-DTW gap returns $1$ and $(2^{i}, n^{\epsilon} / 2)$-DTW gap
returns 0. Moreover, we can find such an $i$ by performing a binary
search on $i$ in the range $R = \{0, \ldots, \lceil \log m
\rceil\}$. We begin by computing $(2^i, n^{\epsilon} / 2)$-DTW gap for
$i$ in the middle of the range $R$. If the result is a one, then we
can recurse on the second half of the range; otherwise we recurse on
the first half of the range. Continuing like this, we can find in time
$\tilde{O}(n^{2 - \epsilon} \log \log m) = \tilde{O}(n^{2 -
  \epsilon})$ some value $i$ for which $(2^{i - 1}, n^{\epsilon} /
2)$-DTW gap returns $1$ and $(2^{i}, n^{\epsilon} / 2)$-DTW gap
returns 0. Given such an $i$, we know that $\dtw(x, y) \ge \frac{2^{i
    - 1}n}{n^{\epsilon} / 2} = 2^i n^{1 - \epsilon}$ and that $\dtw(x,
y) \le 2^in$. Thus we can return $2^i n^{1 - \epsilon}$ as an
$n^{\epsilon}$ approximation of $\dtw(x, y)$.
\end{proof}

\section{Reducing Edit Distance to DTW and LCS}\label{secreduction}

In this section we present a simple reduction from edit distance over
an arbitrary metric to DTW over the same metric.
 At the end of the section, we
prove as a corollary a conditional lower bound for DTW over
three-letter Hamming space, prohibiting any algorithm from running in
strongly subquadratic time.

Surprisingly, the \emph{exact same reduction}, although with a
different analysis, can be used to reduce the computation of edit
distance (over generalized Hamming space) to the computation of
longest-common-subsequence length (LCS). Since computing LCS is
equivalent to computing edit distance without substitutions, this
reduction can be interpreted as proving that edit distance without
substitutions can be used to efficiently simulate edit distance with
substitutions, also known as \emph{simple edit distance}.

Recall that for a metric $\Sigma \cup \{\emptyset\}$, we define the
edit distance between two strings $x, y \in \Sigma^n$ such that the
cost of a substitution from a letter $l_1$ to $l_2$ is $d(l_1, l_2)$,
and the cost of a deletion or insertion of a letter $l$ is $d(l,
\emptyset)$. Additionally, define the \emph{simple edit distance}
$\ed_S(x, y)$ to be the edit distance using only insertions and
deletions.

For a string $x \in \Sigma^n$, define the \emph{padded string} $p(x)$
of length $2n + 1$ to be the string $\emptyset x_1 \emptyset x_2
\emptyset x_3 \cdots x_n \emptyset$. In particular, for $i \le 2n +
1$, $p(x)_i = \emptyset$ when $i$ is odd, and $p(x)_i = x_{i / 2}$
when $i$ is even. The following theorem proves that $\dtw(p(x), p(y))
= \ed(x, y)$.

\begin{thm}
  Let $\Sigma \cup \{\emptyset\}$ be a metric. Then for any $x, y \in
  \Sigma^n$,
  $\dtw(p(x), p(y)) = \ed(x, y).$
  \label{thmreduction00}
\end{thm}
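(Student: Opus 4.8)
The plan is to prove the equality $\dtw(p(x), p(y)) = \ed(x, y)$ by establishing two inequalities, translating between an optimal edit-distance alignment and an optimal correspondence between the padded strings.

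\textbf{From edit distance to DTW ($\dtw(p(x), p(y)) \le \ed(x, y)$).} Consider an optimal sequence of edits transforming $x$ into $y$; equivalently, think of an optimal alignment that pairs up some letters of $x$ with letters of $y$ (substitutions, possibly of equal letters at zero cost) and leaves the rest unmatched (insertions and deletions). I would build a correspondence between $p(x)$ and $p(y)$ as follows. The strings $p(x)$ and $p(y)$ alternate $\emptyset$'s with real letters. When a letter $x_i$ is matched to $y_j$ in the alignment, I align the copy of $x_i$ in $p(x)$ with the copy of $y_j$ in $p(y)$, contributing exactly $d(x_i, y_j)$, the substitution cost. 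When $x_i$ is deleted, I route it against one of the $\emptyset$'s of $p(y)$ — extending that $\emptyset$-run so that it engulfs $x_i$ — contributing $d(x_i, \emptyset) = |x_i|$, the deletion cost; symmetrically for insertions. The key structural point is that the alternating pattern of $\emptyset$'s in $p(y)$ provides exactly enough ``slack'' to absorb deleted letters of $x$ without disturbing the matched pairs, and vice versa, and that all the extra $\emptyset$'s that end up aligned with each other cost zero since $d(\emptyset,\emptyset)=0$. I would verify that one can always consistently produce a single valid pair of expansions realizing this, e.g.\ by induction on $n$ or by walking through the alignment left to right, extending runs as needed.

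\textbf{From DTW to edit distance ($\ed(x, y) \le \dtw(p(x), p(y))$).} Take a minimum-cost correspondence $(\overline{p(x)}, \overline{p(y)})$, which we may assume has length at most $2(2n+1)$. I want to read off an edit alignment. The natural idea: for each real letter $x_i$ of $p(x)$, look at which letters of $p(y)$ its (extended) position overlaps in the correspondence. If $x_i$ overlaps some real letter $y_j$ of $p(y)$, declare $x_i$ matched to $y_j$ (a substitution); if $x_i$ overlaps only $\emptyset$'s, declare $x_i$ deleted. The cost accounting is immediate at the level of individual overlaps: an overlap of $x_i$ with $y_j$ contributes $d(x_i,y_j)$, and an overlap of $x_i$ with $\emptyset$ contributes $|x_i|$. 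The main obstacle here is consistency: I must argue that this assignment yields a \emph{well-defined} edit alignment — in particular, that the matching of real letters is monotone (non-crossing) and that no real letter of $x$ is forced to match two different real letters of $y$ in an incompatible way, and that the resulting total edit cost does not exceed the correspondence cost. Monotonicity follows because correspondences preserve order. For the potential ``double matching'' issue, I would use the separating $\emptyset$'s: since consecutive real letters $x_i, x_{i+1}$ in $p(x)$ are separated by an $\emptyset$, and consecutive real letters in $p(y)$ are likewise separated, the real letters of $p(x)$ and $p(y)$ interleave with $\emptyset$'s on both sides, which forces each real letter of $p(x)$ to overlap at most one real letter of $p(y)$ in a minimum-cost correspondence (or, if it overlaps more, I can reroute through the intervening $\emptyset$ without increasing cost — here I'd invoke the triangle inequality in $\Sigma\cup\{\emptyset\}$, exactly as in the remark in the preliminaries that at most one of two overlapping runs is extended). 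Having extracted a clean alignment with cost $\le \dtw(p(x),p(y))$, it gives an edit sequence of that cost, so $\ed(x,y)\le\dtw(p(x),p(y))$.

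\textbf{Where the difficulty lies.} The forward direction is essentially bookkeeping. The reverse direction is the delicate one: the crux is to canonicalize an arbitrary optimal correspondence between the padded strings into one whose overlap structure between real letters is ``sparse'' enough to be read as an edit alignment, using the triangle inequality and the un-extension trick to eliminate any correspondence in which a real letter straddles a real letter, an $\emptyset$, and another real letter. Once that normal form is in hand, the cost equality drops out term by term. I would therefore front-load a short lemma (or explicit argument) establishing that normal form, then do the two-way cost comparison.
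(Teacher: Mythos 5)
Your proposal is correct and follows essentially the same route as the paper: the forward direction pads an optimal edit sequence with $\emptyset$'s to produce a correspondence of equal cost, and the reverse direction first normalizes an optimal correspondence (via the triangle inequality and un-extension) so that only $\emptyset$-runs are extended, after which the edit alignment can be read off term by term. You have correctly located the crux in the normalization lemma, which is exactly the first step of the paper's argument.
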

\begin{proof}[Proof sketch]
A key observation is that when constructing an optimal correspondence
between $p(x)$ and $p(y)$, one may w.l.o.g. extend only runs
consisting of $\emptyset$ characters. In particular, suppose that one
extends a non-$\emptyset$ character $a$ in $p(x)$ to match a
non-$\emptyset$ character $b$ in $p(y)$. Then the extended run of
$a$'s must not only overlap $b$, but also the $\emptyset$-character
preceding $b$. The total cost of extending $a$ to overlap $b$ is
therefore $d(a, \emptyset) + d(a, b)$, which by the triangle
inequality is at least $d(\emptyset, b)$. Thus instead of
extending the run containing $a$, one could have instead extend a run
of $\emptyset$-characters to overlap $b$ at the same cost.

The fact that optimal correspondences arise by simply
extending runs of $\emptyset$-characters can then be used to prove
Theorem \ref{thmreduction00}; in particular, given such a
correspondence, one can obtain a sequence of edits from $x$ to $y$ by
performing a substitution every time the correspondence matches two
non-$\emptyset$ characters and a insertion or deletion every time the
correspondence matches a non-$\emptyset$ character and a
$\emptyset$-character.
\end{proof}

Theorem \ref{thmreduction10} proves an analogous reduction from edit
distance to LCS.

\begin{thm}
  Let $\Sigma$ be a generalized Hamming metric. Then for any $x, y \in
  \Sigma^n$, $\ed_S(p(x), p(y)) = 2 \ed(x, y)$.
  \label{thmreduction10}
\end{thm}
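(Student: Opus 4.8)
The plan is to mimic the proof sketch of Theorem~\ref{thmreduction00}, but track the simple edit distance (insertions and deletions only) instead of DTW, and exploit the special structure of generalized Hamming space. Recall that over a generalized Hamming metric, $d(\emptyset, l) = 1$ for every letter $l \in \Sigma$, and $d(l, l') = 1$ for distinct $l, l' \in \Sigma$. The key is that in $\ed_S(p(x), p(y))$ we are forbidden from substituting, so every mismatch must be resolved by a deletion paired with an insertion — costing $2$ — whereas a single aligned $\emptyset$ against a non-$\emptyset$ letter (an insert or delete) costs $1$, and an aligned pair of equal letters costs $0$.

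\medskip

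\noindent\textbf{Upper bound: $\ed_S(p(x), p(y)) \le 2\ed(x, y)$.} First I would take an optimal edit sequence realizing $\ed(x, y)$ and convert it into an alignment (common subsequence) between $p(x)$ and $p(y)$. For each substitution $x_i \to y_j$ in the optimal sequence (cost $1$), I instead delete $x_i$ from $p(x)$ and insert $y_j$ into $p(y)$ — total cost $2$, matching $2 \cdot 1$. For each insertion or deletion of a letter $l$ (cost $1$), I mirror it directly in $p$, paying $2$ for a deletion plus re-insertion — wait, more carefully: a deletion of $x_i$ in the $\ed$ sequence corresponds to deleting the pair $\emptyset x_i$ from $p(x)$, which costs $2$ simple edits; since the original edit cost $1$, this again contributes $2 \cdot 1$. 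The padding $\emptyset$ characters are all retained and aligned to each other (there are $n+1$ of them on each side, and this is consistent because insertions/deletions act on the letter-slots). Summing, the total simple-edit cost is exactly $2\,\ed(x, y)$.

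\medskip

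\noindent\textbf{Lower bound: $\ed_S(p(x), p(y)) \ge 2\ed(x, y)$.} This is the direction I expect to require the most care. I would take an optimal simple-edit alignment between $p(x)$ and $p(y)$ — equivalently, a longest common subsequence $S$ — and argue it can be normalized so that $S$ consists precisely of $\emptyset$-characters, using the same triangle-inequality / swapping argument as in the proof sketch of Theorem~\ref{thmreduction00}: if the alignment matches a non-$\emptyset$ letter $a$ in $p(x)$ to an equal letter $a$ in $p(y)$, one can check that re-routing to match instead the $\emptyset$'s flanking these positions does not increase the number of unaligned characters (the costs work out because every $\emptyset$ aligns with cost $0$, and the letters $a$ then become a delete/insert pair, but so were the characters they displaced). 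Once the common subsequence is assumed to be all-$\emptyset$, the alignment decomposes block-by-block between consecutive matched $\emptyset$'s: in each block the non-$\emptyset$ letters of $p(x)$ are all deleted and those of $p(y)$ all inserted, and this precisely encodes an edit sequence from $x$ to $y$ where each such block is a substitution (when both sides have one letter), an insertion, or a deletion. That edit sequence has cost exactly half the simple-edit cost, so $\ed(x, y) \le \tfrac12 \ed_S(p(x), p(y))$.

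\medskip

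\noindent The main obstacle is verifying the normalization step in the lower bound rigorously: one must confirm that forcing the common subsequence to consist only of $\emptyset$'s never costs extra simple edits, and here it is genuinely important that we are over a Hamming metric (so $d(l, l') = 1$ for all distinct letters and there is no advantage to matching two distinct non-$\emptyset$ letters as a ``cheap'' pair — in LCS such a match is simply disallowed, whereas in DTW it would cost $1$). Everything else is a careful but routine block-by-block bookkeeping argument parallel to Theorem~\ref{thmreduction00}.
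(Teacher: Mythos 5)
Your upper bound is correct and matches the paper's argument (each complex edit on $x$ is simulated by exactly two simple edits on $p(x)$). The lower bound, however, has a genuine gap: the normalization step is false. You cannot in general force the common subsequence of $p(x)$ and $p(y)$ to consist only of $\emptyset$-characters without increasing the number of unaligned characters. Take $x = y = a$, so $p(x) = p(y) = \emptyset a \emptyset$: the optimal alignment matches all three positions and has cost $0$, but any alignment whose matched pairs are all $\emptyset$'s leaves the two copies of $a$ unmatched and costs $2$. More generally, whenever an equal pair $a$--$a$ is matched and the flanking $\emptyset$'s are already matched, unmatching the $a$'s adds $2$ to the cost and there is nothing ``displaced'' to compensate. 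The triangle-inequality swap from Theorem \ref{thmreduction00} does not transfer here: in the DTW proof one only removes \emph{extensions} of non-$\emptyset$ runs, where the extension is already paying $d(a,\emptyset) + d(a,b) \ge d(\emptyset,b)$, whereas you are proposing to destroy zero-cost matches. Since your block decomposition requires every non-$\emptyset$ letter to be unmatched, the argument as written only gives $\ed(x,y) \le \tfrac12 \ed_S(p(x),p(y)) + (\text{number of matched equal pairs})$, not the claimed inequality.

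The repair --- and what the paper actually does --- is to keep the equal-letter matches and normalize only partially: using the non-crossing property, one may assume WLOG that whenever $p(x)_{2i}$ is matched to $p(y)_{2j}$, the $\emptyset$'s immediately following them, $p(x)_{2i+1}$ and $p(y)_{2j+1}$, are matched as well (if not, at most one of those two $\emptyset$'s can carry an edge, and that edge can be rerouted at no cost). Each letter of $x$ and $y$ is then \emph{totally matched} (letter and its trailing $\emptyset$ both matched --- a zero-cost match in the induced alignment of $x$ with $y$), \emph{partially matched} (only its trailing $\emptyset$ matched --- a substitution), or \emph{totally unmatched} (an insertion or deletion), and counting singletons shows the induced alignment costs exactly half the simple-edit cost. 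Your block-by-block bookkeeping then goes through on this weaker normal form.
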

\begin{proof}[Proof sketch]
  Each edit in $x$ can be simulated in $s(x)$ using exactly two
  insertions/deletions. In particular, the substitution of a character
  in $x$ corresponds with the deletion and insertion of the same
  character in $s(x)$; and the insertion/deletion of a character in
  $x$ corresponds with the insertion/deletion of that character and an
  additional $\emptyset$-character in $s(x)$.

  This establishes that $\ed_S(p(x), p(y)) \le 2 \ed(x, y)$. The other
  direction of inequality is somewhat more subtle, and is differed to
  Appendix \ref{secappendixreductions}.
\end{proof}

Whereas Theorem \ref{thmreduction10} embeds edit distance into simple
edit distance with no distortion, Theorem \ref{thmreduction20} shows
that no nontrivial embedding in the other direction exists.

\begin{thm}
Consider edit distance over generalized Hamming space.  Any embedding
from edit distance to simple edit distance must have distortion at
least $2$.
\label{thmreduction20}
\end{thm}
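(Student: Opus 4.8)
The plan is to isolate a single combinatorial obstruction: the unit‑distance graph of simple edit distance is triangle‑free, whereas that of edit distance over generalized Hamming space is not. Say that two strings are \emph{unit‑neighbors} (for a metric normalized to have minimum non‑zero distance $1$) if they are at distance exactly $1$. For simple edit distance, $\ed_S(u,v)=|u|+|v|-2\operatorname{LCS}(u,v)\equiv |u|+|v|\pmod 2$, so unit‑neighbors have lengths of opposite parity; hence the simple‑edit unit‑distance graph is bipartite, its two sides being the even‑length and odd‑length strings, and in particular it contains no triangle. For edit distance over generalized Hamming space, however, the empty string $\varepsilon$ together with two distinct single‑letter strings $a,b$ are pairwise at distance exactly $1$: $\ed(a,b)=1$ by one substitution and $\ed(a,\varepsilon)=\ed(b,\varepsilon)=1$ by one deletion. (If inputs must share a common length $n$, the strings $a\,0^{n-1},\,b\,0^{n-1},\,c\,0^{n-1}$ over a three‑letter alphabet serve the same role.)

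Given this, I would argue as follows. Suppose $f$ is an embedding from edit distance to simple edit distance with distortion $c<2$, i.e.\ $\ed(u,v)\le \ed_S(f(u),f(v))\le c\cdot\ed(u,v)$ for all strings $u,v$. The first step is to observe that $f$ sends unit‑neighbors to unit‑neighbors: if $\ed(u,v)=1$ then $1\le \ed_S(f(u),f(v))\le c<2$, and since $\ed_S$ is integer‑valued this forces $\ed_S(f(u),f(v))=1$. Applying this to the triangle $\{\varepsilon,a,b\}$, the images $f(\varepsilon),f(a),f(b)$ are pairwise unit‑neighbors in the simple‑edit metric and hence form a triangle in its unit‑distance graph — contradicting bipartiteness (equivalently, the lengths $|f(\varepsilon)|,|f(a)|,|f(b)|$ would have to be pairwise of opposite parity, impossible for three integers). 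Therefore no embedding can have distortion below $2$.

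The bound is tight: the padding map $x\mapsto p(x)$ of Theorem~\ref{thmreduction10} satisfies $\ed_S(p(x),p(y))=2\,\ed(x,y)$ exactly, so it is an embedding with distortion precisely $2$. I expect the only delicate points to be definitional. First, one must fix what ``embedding/distortion'' means: the argument above uses the convention — natural when both spaces are normalized to have minimum distance $1$ — that an embedding is non‑contracting and at most $c$‑expanding; if instead one allows an arbitrary scaling factor, the same idea still applies, but the integrality step then needs a second, rigidifying gadget (for instance a copy of $K_{2,3}$ all of whose unit edges must be preserved) to pin the relevant images to the minimum distance. Second — and this is where a genuine, though short, case analysis is unavoidable — if the intended statement is actually the reverse embedding, of simple edit distance into edit distance, then the roles swap and one must show instead that the edit‑distance unit‑distance graph contains no $K_{2,3}$ with an independent part, i.e.\ that in the edit metric no two strings at distance $2$ admit three common ``midpoints'' that are pairwise at distance $2$; this follows by enumerating the ways two strings can differ by two edits, and it is the part that takes real work.
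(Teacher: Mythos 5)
Your parity/triangle argument is clean and correct for the statement it proves, but that statement is not the theorem as the paper defines and proves it, and the gap lies exactly in the two ``definitional'' points you flag but do not resolve. The paper defines distortion scale-invariantly, as $\bigl(\sup d_1/d_2\bigr)/\bigl(\inf d_1/d_2\bigr)$. Under that definition, the direction you argue (edit distance into simple edit distance) admits a \emph{distortion-$1$} embedding: the padding map $p$ of Theorem~\ref{thmreduction10}, for which $\ed_S(p(x),p(y))=2\ed(x,y)$ exactly, has $\sup=\inf=1/2$ and hence distortion $1$ (you call it ``distortion precisely $2$,'' which is the Lipschitz convention, not the paper's). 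So no ``rigidifying gadget'' can rescue your argument in that direction -- the claim you are trying to prove is false under the paper's definition, and your triangle $\{\varepsilon,a,b\}$ simply maps under $p$ to three strings at pairwise $\ed_S$-distance $2$, which is parity-consistent. The theorem the paper actually proves (consistent with the sentence introducing it, ``no nontrivial embedding in the other direction exists,'' and with the displayed hypothesis $c_1\ed_S(x,y)\le\ed(\phi(x),\phi(y))\le c_2\ed_S(x,y)$) is about embedding \emph{simple} edit distance into edit distance with arbitrary scaling constants $c_1\le c_2<2c_1$.

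For that direction your proposal offers only an admittedly incomplete sketch (the $K_{2,3}$ claim ``takes real work'' and is left unproved), and more fundamentally any finite unit-distance gadget cannot work there either: the scale-invariant definition permits, say, $c_1=10$, $c_2=19$, under which unit $\ed_S$-distances map to $\ed$-distances anywhere in $[10,19]$ and integrality gives no rigidity. The paper's proof is instead global: from $\ed_S(0^n,1^n)=2n$ it deduces that one of $\phi(0^n),\phi(1^n)$ has length at least $2c_1n$, and from $\ed_S(\varepsilon,0^n)=n$ that $\phi(\varepsilon)$ has length at least $(2c_1-c_2)n$ for every $n$ -- impossible for a fixed string. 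Note how this argument exploits precisely the quantitative asymmetry you observed (deletion-only paths cost twice as much as substitution paths), but packages it as a length-growth contradiction that is immune to rescaling, rather than as a local parity obstruction.
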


The proofs of Theorems \ref{thmreduction00}, \ref{thmreduction10},
\ref{thmreduction20} are differed to Appendix
\ref{secappendixreductions}.

An important consequence of Theorem \ref{thmreduction00} is a new
conditional lower bound for computing DTW over a three-letter alphabet
(in which character distances are zero or one). This concludes a
direction of work initiated by Abboud, Backurs, and Williams
\cite{DTWhard2}, who proved the same result over five-letter alphabet.

\begin{cor}
 Let $\Sigma = \{a, b, c\}$ with distance function $d(a, b) = d(a, c)
 = d(b, c) = 1$. If we assume the Strong Exponential Time Hypothesis,
 then for all $\epsilon > 1$, no algorithm can compute $\dtw(x, y)$
 for $x, y \in \Sigma^n$ in time less than $O(n^{2 - \epsilon})$.
\end{cor}
\begin{proof}
  See Appendix \ref{secappendixreductions}.
\end{proof}

\section{Approximating Edit Distance Over an Arbitrary Metric}\label{seceditapprox}

In this section we present an approximation algorithm for edit
distance over an arbitrary metric space. Our algorithm achieves
approximation ratio at most $n^\epsilon$ (with high probability) and
runtime $\tilde{O}(n^{2 - \epsilon})$. Note that when the metric is a
well-separated tree metric, such an algorithm can be obtained by
combining the approximation algorithm for DTW from Section
\ref{secdtwapprox} with the reduction in Section
\ref{secreduction}. Indeed the algorithm in this section is
structurally quite similar to the one in Section \ref{secdtwapprox},
but uses a probability argument exploiting properties of edit distance
in order to hold over an arbitrary metric.

\begin{thm}
Let $(\Sigma \cup \{\emptyset\}, d)$ be an arbitrary metric space such
that $|l| \ge 1$ for all $l \in \Sigma$. For all $0 < \epsilon < 1$,
and for strings $x, y \in \Sigma^n$, there is an algorithm which
computes an $O(n^{\epsilon})$-approximation for $\ed(x, y)$ (with high
probability) in time $\tilde{O}(n^{2 - \epsilon})$.
  \label{thmeditdistanceapprox0}
\end{thm}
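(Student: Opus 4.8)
The plan is to mimic the structure of the proof of Theorem~\ref{thmdtwapproxtree0}, replacing the $r$-simplification of strings over a tree metric with a randomized \emph{rounding} of the edit costs that works over an arbitrary metric $(\Sigma \cup \{\emptyset\}, d)$. The key difficulty is that for a general metric there is no tree structure to collapse, so we cannot literally build $s_r(x)$; instead, for a scale parameter $r$ we should define a modified instance in which any letter $l$ with $|l| = d(\emptyset, l) \le r/c$ (for an appropriate constant $c$) is ``zeroed out'' — conceptually replaced by $\emptyset$ — and we run the exact low-distance algorithm of Theorem~\ref{thmdp} on the resulting instance. First I would set up a \textbf{gap problem}: the $(r, n^\epsilon)$-edit gap problem, returning $0$ if $\ed(x,y) < nr$ and $1$ if $\ed(x,y) \ge n^{1-\epsilon} r$, and show it can be solved in time $\tilde O(n^{2-\epsilon})$ by reducing to the rounded instance (whose minimum nonzero edit-related cost is $\Omega(r)$, so Lemma~\ref{lemdiagonalalgdtw0}'s analogue for edit distance applies — this analogue is immediate from Theorem~\ref{thmdp} together with the reduction of Theorem~\ref{thmreduction00}, since one can just pad and call the DTW algorithm).

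Second, I would establish the edit-distance analogue of Lemma~\ref{lemthreeprops0}: if $\ed(x,y) \le nr/\alpha$ then the rounded instance has edit distance $\le nr/\alpha$ (monotonicity: zeroing letters only decreases costs, so this direction is easy and deterministic); and conversely, if $\ed(x,y) > nr$ then with high probability the rounded instance still has edit distance $> nr/2$. This converse is where randomness enters and is \textbf{the main obstacle}. Over a tree metric this followed from a clean geometric fact (each correspondence character costs at most $r/4$ less after simplification, times length $\le 2n$). Over an arbitrary metric, zeroing out small-magnitude letters deterministically could in principle destroy too much distance — e.g.\ if every letter has magnitude just below $r/c$, the rounded instance has edit distance $0$ while the original could be $\Theta(nr)$. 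The fix, as the paper's abstract and the remark that the algorithm ``uses a probability argument exploiting properties of edit distance'' suggest, is to randomize the threshold: pick the zeroing cutoff uniformly at random in a logarithmic range of scales around $r$, so that in expectation an optimal edit sequence of cost $> nr$ loses only an $O(1/\log n)$ fraction of its cost per scale, hence with high probability still has cost $\Omega(nr)$ in the rounded instance. One must argue that deletions/insertions of a zeroed letter cost $0$ but substitutions among surviving letters retain cost within a constant factor, using the triangle inequality ($d(l_1,l_2) \ge |l_1| - |l_2|$ and $d(l_1,l_2) \ge$ roughly $\max(|l_1|,|l_2|)/2$ when one of them is large and the other is zeroed).

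Third, with the gap problem in hand, the outer algorithm is identical in spirit to the DTW case: first use the edit-distance low-distance algorithm directly to either compute $\ed(x,y)$ exactly or certify $\ed(x,y) \ge n^{1-\epsilon}$; then run the $(2^i, n^\epsilon/2)$-edit gap problem for $i \in \{0, 1, \ldots, \lceil \log m\rceil\}$ where $m$ bounds the largest relevant magnitude (at most $\poly(n)$ after noting $\ed(x,y) \le n \cdot \max_l |l|$ and magnitudes exceeding $n \cdot \ed(x,y)$ are irrelevant, or simply assuming bounded aspect ratio as in Corollary~\ref{corsimple0}); binary-search for the threshold $i$ where the gap flips from $1$ to $0$; and output $2^i n^{1-\epsilon}$ as an $n^\epsilon$-approximation. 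The total running time is $\tilde O(n^{2-\epsilon})$ since each gap query costs $\tilde O(n^{2-\epsilon})$ and we make $O(\log\log m) = \tilde O(1)$ of them. I would also double-check that the ``with high probability'' claims compose: there are $\tilde O(1)$ randomized gap queries, each succeeding with probability $1 - n^{-\Omega(1)}$ after repeating $O(\log n)$ times and taking the appropriate (monotone) answer, so a union bound preserves high probability overall.
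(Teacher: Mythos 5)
Your overall skeleton (a gap problem at each scale, a simplification that discards letters of magnitude below a randomized threshold $r \in [R, 2R]$, and a binary search over $O(\log m)$ scales) matches the paper, but you have the two directions of the key proposition exactly backwards, and the direction you dismiss is where the entire difficulty lies. You claim that ``zeroing letters only decreases costs, so this direction is easy and deterministic.'' This is false: if an optimal edit sequence substitutes a small letter $l_1$ (with $|l_1| \le r$, hence removed) for a large letter $l_2$ (with $|l_2| > r$, hence kept), the simplified instance must instead \emph{insert} $l_2$ at cost $|l_2|$, which can vastly exceed the original substitution cost $d(l_1, l_2)$ (take $|l_1| = r$, $|l_2| = r + \delta$, $d(l_1,l_2) = \delta$). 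This is precisely why the threshold is randomized: the probability that a given substitution straddles the cutoff is $\Pr[|l_1| \le r < |l_2|] \le (|l_2| - |l_1|)/R \le d(l_1,l_2)/R$ by the triangle inequality, which combined with a case analysis on whether $|l_2| \le 4R$ yields $\E[\ed(s_r(x), s_r(y))] \le 5\,\ed(x,y)$; the gap algorithm then samples $O(\log n)$ values of $r$ and applies Markov's inequality. Meanwhile, the converse direction that you call ``the main obstacle'' requiring a delicate probabilistic argument is in fact deterministic: one passes from $x$ to $s_r(x)$ by deleting at most $n$ letters of magnitude at most $2R$ (cost $\le 2nR$), and symmetrically for $y$, giving $\ed(x,y) < \ed(s_r(x), s_r(y)) + 4nR$; your worrying example (all magnitudes just below the cutoff) is ruled out by the constant-factor slack in the hypothesis $\ed(x,y) > 5nR$, not by randomness. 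As written, your proof of the central lemma is wrong, and the machinery you propose (losing an $O(1/\log n)$ fraction per scale) is aimed at a non-problem.

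A secondary issue: you propose to obtain the low-distance subroutine for the gap problem by padding and invoking the DTW algorithm via Theorem~\ref{thmreduction00}. Theorem~\ref{thmdp} requires the minimum non-zero \emph{pairwise} distance in the metric to be at least $1$ (its subproblem-counting lemma charges each run mismatch at least that much), but the hypothesis of Theorem~\ref{thmeditdistanceapprox0} only bounds the \emph{magnitudes} $|l| = d(\emptyset, l)$ from below; two distinct surviving letters may be arbitrarily close. The paper instead uses the banded edit-distance dynamic program directly (Observation~\ref{lemkeyeditobservation0}), whose correctness and $O(nK)$ running time depend only on every insertion/deletion costing at least $R$, which is exactly what the first property of the simplification guarantees.
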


Using the standard dynamic-programming algorithm for computing $\ed(x,
y)$ \cite{WagnerF74, needleman1970general, vintzyuk1968}, one can
easily obtain the following observation, analogous to Lemma
\ref{lemdiagonalalgdtw0} in Section \ref{secdtwapprox}:
\begin{observation}
 Consider $x, y \in \Sigma^n$, and let $R$ be the smallest magnitude
 of the letters in $x$ and $y$. There is an $O(n^{2 - \epsilon})$-time
 algorithm which returns a value at least as large as $\ed(x, y)$; and
 which returns exactly $\ed(x, y)$ when $\ed(x, y) \le R \cdot n^{1 - \epsilon}$.
 \label{lemkeyeditobservation0}
\end{observation}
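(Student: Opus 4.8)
The plan is to adapt the band-trimming idea that underlies Theorem~\ref{thmdp} to the standard Needleman--Wunsch dynamic program for edit distance. Recall that the textbook algorithm builds a table $E(i,j) = \ed(x[1:i], y[1:j])$, where $E(i,j)$ is the minimum of $E(i-1,j) + |x_i|$, $E(i,j-1) + |y_j|$, and $E(i-1,j-1) + d(x_i, y_j)$. The total work is $O(n^2)$, but we only want to spend $O(n^{2-\epsilon})$ time, so I will restrict attention to a band of width $O(n^{1-\epsilon})$ around the main diagonal: compute $E(i,j)$ only for $|i-j| \le n^{1-\epsilon}$, and treat all other cells as having value $+\infty$. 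Since each cell costs $O(1)$ to fill (distances in the metric are given in $O(1)$ time, or $O(\log n)$ time with no loss), the banded computation runs in time $O(n \cdot n^{1-\epsilon}) = O(n^{2-\epsilon})$.

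Next I would argue correctness. First, restricting to a band can only \emph{increase} the value of each cell, since we are taking a minimum over a subset of the incoming transitions (equivalently, forbidding alignments that stray far from the diagonal); hence the banded algorithm's output is always an upper bound on $\ed(x,y)$, giving the first half of the observation. For the second half, I need to show that when $\ed(x,y) \le R \cdot n^{1-\epsilon}$, the optimal alignment never leaves the band, so the banded computation returns the true value. The key point is that every character of $x$ and of $y$ has magnitude at least $R$, so each insertion or deletion in an edit sequence costs at least $R$. An edit sequence realizing $\ed(x,y)$ therefore contains at most $\ed(x,y)/R \le n^{1-\epsilon}$ insertions and deletions combined. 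Translating to the alignment grid: the monotone staircase path from $(0,0)$ to $(n,n)$ corresponding to this edit sequence makes a diagonal step for each (match or) substitution and a horizontal/vertical step for each insertion/deletion, so the path's displacement from the main diagonal at any point is bounded by the number of non-diagonal steps taken so far, which is at most $n^{1-\epsilon}$. Hence the entire optimal path lies within $|i-j| \le n^{1-\epsilon}$, and the banded DP recovers $\ed(x,y)$ exactly.

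The only mildly delicate point — and the one I would state carefully — is the off-by-constant in the band width and the fact that the bound $|x_i|, |y_j| \ge 1$ (in the normalized statement of Theorem~\ref{thmeditdistanceapprox0}) or $\ge R$ (in the observation) is exactly what makes ``few edits'' translate to ``small horizontal/vertical excursion.'' If the metric allowed arbitrarily cheap insertions, this argument would collapse, just as the low-distance DTW argument needs the minimum non-zero distance to be one. With the magnitude lower bound in hand, everything is routine; I would simply set the band half-width to $\lceil n^{1-\epsilon} \rceil$ (or a small constant multiple thereof to absorb the $\ed(x,y)/R \le n^{1-\epsilon}$ estimate cleanly) and note that this does not affect the $O(n^{2-\epsilon})$ runtime. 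I expect no real obstacle here; the main thing to get right is the clean correspondence between edit sequences and monotone lattice paths, and the observation that a path with at most $k$ non-diagonal steps stays within distance $k$ of the diagonal.
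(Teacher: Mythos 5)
Your proposal is correct and is essentially identical to the paper's argument: the paper also runs the standard edit-distance dynamic program restricted to the band $|i-j|\le K$ with $K=n^{1-\epsilon}$, notes that the banded DP can only overestimate, and uses the fact that each insertion/deletion costs at least $R$ to conclude that when $\ed(x,y)\le R\cdot n^{1-\epsilon}$ the optimal alignment stays inside the band. No gaps.
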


In order to prove Theorem \ref{thmeditdistanceapprox0}, we present a
new definition of the $r$-simplification of a string. The difference
between this definition and the one in the preceding section allows
the new definition to be useful when studying edit distance rather
than dynamic time warping.

\begin{defn}
  For a string $x \in \Sigma^n$ and for $r \ge 1$, we construct the
  \emph{$r$-simplification} $s_r(x)$ by removing any letter $l$ satisfying
  $|l| \le r$.
\end{defn}

In the proof of Theorem \ref{thmeditdistanceapprox0} we use
randomization in the selection of $r$ in order to ensure that $s_r(x)$
satisfies desirable properties in expectation. The key proposition
 follows:

\begin{prop}
  Consider strings $x$ and $y$ in $\Sigma^n$. Consider $0 < \epsilon <
  1$ and $R \ge 1$. Select $r$ to be a random real between $R$ and
  $2R$. Then the following three properties hold:
  \begin{itemize}
  \item Every letter $l$ in $s_r(x)$ or $s_r(y)$ satisfies $|l| \ge R$.
  \item If $\ed(x, y) \le \frac{nR}{15n^{\epsilon}}$ then $\E[\ed(s_r(x), s_r(y))] \le \frac{nR}{3n^{\epsilon}}$.
  \item If $\ed(x, y) > 5nR$, then $\ed(s_r(x), s_r(y)) > nR$.
  \end{itemize}
  \label{propthreeprops0}
\end{prop}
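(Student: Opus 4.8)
The plan is to handle the three bullets in order, with the first being immediate from the definition and the last being a short counting argument, leaving the middle (expectation) bullet as the main technical content.

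First bullet: By definition $s_r(x)$ is obtained by deleting every letter $l$ with $|l| \le r$, so every surviving letter has $|l| > r \ge R$. Nothing more is needed.

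Third bullet: The key observation is that passing from $x$ to $s_r(x)$ costs at most the total magnitude of the deleted letters, each of which is at most $r \le 2R$. More precisely, $\ed(x, s_r(x)) \le \sum_{l \in x : |l| \le r} |l| \le r \cdot (\text{number of deleted letters}) \le 2R \cdot n$, and similarly for $y$. Wait — I should be more careful: I want a bound of the form $\ed(x,s_r(x)) \le 2nR$ so that, combined with the triangle inequality $\ed(x,y) \le \ed(x,s_r(x)) + \ed(s_r(x), s_r(y)) + \ed(s_r(y), y)$, I get $\ed(s_r(x), s_r(y)) \ge \ed(x,y) - 4nR > nR$ when $\ed(x,y) > 5nR$. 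Since deleting a letter of magnitude $|l| \le r < 2R$ costs exactly $|l|$, and there are at most $n$ such deletions on each side, we get $\ed(x, s_r(x)), \ed(y, s_r(y)) \le 2nR$ — actually $\le rn \le 2Rn$, which is what we want. So the third bullet follows from the triangle inequality applied to edit distance (which is a metric on strings).

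Second bullet (the main obstacle): Here I want to bound $\E_r[\ed(s_r(x), s_r(y))]$ when $\ed(x,y)$ is small. Fix an optimal alignment witnessing $\ed(x,y) \le \frac{nR}{15 n^\epsilon}$; it is a sequence of substitutions, insertions, and deletions. The idea is to build an alignment between $s_r(x)$ and $s_r(y)$ by mimicking the optimal one, skipping over the positions that got deleted when forming the simplifications. The cost incurred has two sources: (i) the original edit operations that survive, contributing at most $\ed(x,y)$; and (ii) new operations needed to ``clean up'' letters that are deleted on one side of the simplification but whose partners in the optimal alignment are not deleted. For a letter $l$ that survives in $s_r(x)$ but whose aligned partner $l'$ in $y$ has $|l'| \le r$ (so $l'$ disappears from $s_r(y)$), we must now delete $l$ at cost $|l|$. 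The crucial point is that such a situation requires $|l| > r \ge |l'|$, and either $l$ was substituted for $l'$ (contributing $d(l,l') \ge |l| - |l'|$ to the original cost) or $l, l'$ were each paired with $\emptyset$. In the substitution case, $|l| \le |l'| + d(l,l') \le r + d(l,l')$; I will charge the ``new'' deletion cost partly to the original substitution cost and partly to the randomness of $r$. In the insertion/deletion case the original alignment already paid $|l| + |l'| \ge |l|$. The randomization over $r \in [R, 2R]$ ensures that for any fixed letter $l$ with magnitude in a ``dangerous band,'' the probability that the threshold $r$ falls so as to create an asymmetry is controlled: specifically, $\Pr_r[|l'| \le r < |l|] = \frac{\min(|l|, 2R) - \max(|l'|, R)}{R}$ when this is positive, which is at most $\frac{d(l,l')}{R}$ in the substitution case (since $|l| - |l'| \le d(l,l')$). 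Summing the expected new cost over all aligned pairs, the substitution pairs contribute $\sum \frac{|l| \cdot d(l,l')}{R} \le \frac{2R}{R}\sum d(l,l') \le 2 \ed(x,y)$-type bounds once we also note $|l| \le 2R$ is false in general — so I instead bound the new cost contribution of a surviving letter $l$ by $|l| \cdot \mathbf{1}[\text{asymmetry}]$ and take expectation. I expect the clean statement to be: the expected total new cost is $O(\ed(x,y))$, so $\E[\ed(s_r(x),s_r(y))] \le \ed(x,y) + O(\ed(x,y)) = O(\ed(x,y)) \le \frac{nR}{3n^\epsilon}$ by choosing the hidden constants so that $O(\ed(x,y)) \le 5\,\ed(x,y)$ and $5 \cdot \frac{nR}{15 n^\epsilon} = \frac{nR}{3 n^\epsilon}$. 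The delicate part of the write-up is accounting correctly for a surviving letter whose partner was itself an $\emptyset$ (a pure insertion/deletion in the original alignment): here no randomness helps, but the original cost already paid $|l|$, so this contributes at most $\ed(x,y)$ deterministically. Carefully separating the ``substitution'' contributions (handled by the probability bound $\Pr[\text{asymmetry}] \le d(l,l')/R$) from the ``insertion/deletion'' contributions (handled deterministically by the original cost) is the heart of the argument, and getting the constant $15$ to propagate correctly to $3$ is the bookkeeping I would do last.
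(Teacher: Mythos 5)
Your proposal is correct and follows essentially the same route as the paper: the first bullet is immediate, the third follows from the triangle inequality with $\ed(x,s_r(x)),\ed(y,s_r(y))\le 2nR$, and the second comes from simulating an optimal edit sequence and bounding $\Pr[\text{threshold falls between } |l'| \text{ and } |l|]\le d(l,l')/R$, yielding $\E[\ed(s_r(x),s_r(y))]\le 5\ed(x,y)$ and hence $5\cdot\frac{nR}{15n^\epsilon}=\frac{nR}{3n^\epsilon}$. The one piece you deferred (the surviving letter's magnitude not being bounded by $2R$) is handled in the paper by a case split on whether $|l_2|>4R$, but your proposed charge $|l|\le r+d(l,l')\le 2R+d(l,l')$ under the asymmetry event closes it just as well.
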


The full proofs of Proposition \ref{propthreeprops0} and of Theorem
\ref{thmeditdistanceapprox0} appear in Appendix
\ref{seceditapproxfull}. Structurally, both proofs are similar to the
analogous results in Section \ref{secdtwapprox}. The key difference
appears in the proof of the second part of Proposition
\ref{propthreeprops0}, which uses the random selection of $r$ in order
to probabilistically upper-bound $\ed(s_r(x), s_r(y)$. This is
presented below.

\begin{lem}
    Consider strings $x$ and $y$ in $\Sigma^n$. Consider $R \ge
    1$ and select $r$ to be a random real between $R$ and $2R$. Then
  $\E[\ed(s_r(x), s_r(y))] \le 5 \ed(x, y)$.
\end{lem}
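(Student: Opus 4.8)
The plan is to bound the expected edit distance of the simplified strings by exhibiting, for each value of $r$, an explicit edit sequence from $s_r(x)$ to $s_r(y)$ whose cost we can control, and then average over $r \in [R, 2R]$. The natural starting point is an optimal edit sequence $E$ from $x$ to $y$ of total cost $\ed(x,y)$. From $E$ we wish to derive an edit sequence $E_r$ transforming $s_r(x)$ into $s_r(y)$. The subtlety is that $s_r$ deletes all letters of magnitude $\le r$ from both strings, so some of the letters that $E$ touches simply are not present in the simplified strings, while other letters that $E$ left alone (because they already matched) may now be present on one side but not the other.

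First I would classify the letters of $x$ and $y$ according to how $E$ treats them: a letter may be deleted, inserted, left unchanged (matched to an identical letter), or substituted. For a fixed $r$, I would build $E_r$ as follows. Matched pairs and substituted pairs where \emph{both} letters survive simplification are handled exactly as in $E$. Deletions and insertions in $E$ are simply skipped when the relevant letter has been simplified away, and otherwise copied over. The problematic case is a matched or substituted pair in $E$ where exactly one of the two letters survives $s_r$: say $x$ contributes a letter $a$ with $|a| > r$ but the corresponding letter $b$ in $y$ has $|b| \le r$ and is hence deleted by $s_r$. Then in $E_r$ we must delete $a$ from $s_r(x)$ (cost $|a|$) rather than substitute/match it. I would bound $|a| \le |b| + d(a,b) \le r + d(a,b)$ by the triangle inequality; since $b$ was a letter with $|b|\le r$, the $d(a,b)$ part is at most the cost this pair already contributed to $\ed(x,y)$ (in the substitution case) or zero (in the match case, where $a = b$, forcing $|a|\le r$, a contradiction — so only the substitution case arises). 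Thus each such "broken pair'' costs $E_r$ at most $r$ plus what it already cost $E$.

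The heart of the argument is then a charging/averaging step over the random choice of $r$. The extra cost incurred by $E_r$ beyond $\ed(x,y)$ comes from (i) broken pairs, each charged $\le r \le 2R$, and (ii) the fact that we might double-count. A letter $l$ appearing in a broken pair must satisfy $|l| > r$ while its partner has magnitude $\le r$; equivalently, writing the two magnitudes as $\mu_1 \le \mu_2$, this pair is "broken'' for exactly those $r$ with $\mu_1 \le r < \mu_2$, an interval of length $\mu_2 - \mu_1 \le d(\cdot,\cdot) \le (\text{cost of this pair in } E)$ when it's a substitution, or of length $0$ for a match. Hence $\Pr_r[\text{pair broken}] \le \frac{\text{cost of pair in }E}{R}$, and when broken it contributes $\le 2R$ to $E_r$'s excess, so its expected contribution is $\le 2 \cdot (\text{cost of pair in }E)$. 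Summing over all pairs touched by $E$ gives expected excess $\le 2\,\ed(x,y)$, and together with the base cost this yields $\E[\ed(s_r(x), s_r(y))] \le \E[\text{cost of }E_r] \le \ed(x,y) + 4\,\ed(x,y) = 5\,\ed(x,y)$, as claimed. The main obstacle I anticipate is the careful bookkeeping in (i)–(ii): making sure that deletions/insertions of $\emptyset$-neighbours, and pairs where \emph{both} letters are simplified away (which should only \emph{help}), are accounted for without double-charging, and confirming that the interval-length bound $\mu_2 - \mu_1 \le (\text{pair cost})$ is valid in every case via the triangle inequality $|\,|a| - |b|\,| = |\,d(\emptyset,a) - d(\emptyset,b)\,| \le d(a,b)$.
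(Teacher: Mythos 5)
Your proposal is correct and follows essentially the same route as the paper: simulate the optimal edit sequence on the simplified strings, observe that only substitution pairs split by $r$ (one magnitude above, one at most $r$) incur extra cost, and bound the probability of a split by $(|l_2|-|l_1|)/R \le d(l_1,l_2)/R$ via the triangle inequality. Your bound on the excess cost of a split pair ($|a|-d(a,b)\le |b|\le r\le 2R$) is in fact slightly cleaner than the paper's case analysis on whether $|l_2|\le 4R$, and by your own accounting yields the stronger constant $3$ rather than $5$ --- the final ``$+\,4\ed(x,y)$'' should read ``$+\,2\ed(x,y)$'' --- but either way the claimed inequality holds.
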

\begin{proof}
  Consider an optimal sequence $S$ of edits from $x$ to $y$. We will
  consider the cost of simulating this sequence of edits to transform
  $s_r(x)$ to $s_r(y)$. Insertions and deletions are easily simulated
  by either performing the same operation to $s_r(x)$ or performing no
  operation at all (if the operation involves a letter of magnitude
  less than or equal to $r$). Substitutions are slightly more
  complicated as they may originally be between letters $l_1 \in x$
  and $l_2 \in y$ of different magnitudes. By symmetry, we may assume
  without loss of generality that $|l_1| < |l_2|$. We will show that
  the expected cost of simulating the substitution of $l_1$ to $l_2$
  in $s_r(x)$ is at most $5d(l_1, l_2)$. Because insertions and
  deletions can be simulated with no overhead, it follows that
  $\E[\ed(s_r(x), s_r(y))] \le 5\ed(x, y)$.

  If $|l_1| \le r < |l_2|$ then $l_1$ does not appear in $s_r(x)$
  but $l_2$ remains in $s_r(y)$. Thus what was previously a
  substitution of $l_1$ with $l_2$ becomes an insertion of $l_2$ at
  cost $|l_2|$. On the other hand, if we do not have $|l_1| \le r <
  |l_2|$, then either both $l_1$ and $l_2$ are removed from $s_r(x)$
  and $s_r(y)$ respectively, in which the substitution operation no
  longer needs to be performed, or both $l_1$ and $l_2$ are still
  present, in which case the substitution operation can still be
  performed at cost $d(l_1, l_2)$. Therefore, the expected cost of
  simulating the substitution of $l_1$ to $l_2$ in $s_r(x)$ is at most
  \begin{equation}
    \begin{split}
          \Pr[|l_1| \le r < |l_2|] \cdot |l_2| + d(l_1, l_2).
    \end{split}
    \label{eqsimcost0}
  \end{equation}

  Because $r$ is selected at random from the range $[R, 2R]$, the
  probability that $|l_1| \le r < |l_2|$ is at most $\frac{|l_2| -
    |l_1|}{R}$. By the triangle inequality, this is at most
  $\frac{d(l_1, l_2)}{R}.$ If we suppose that $|l_2| \le 4R$, then it
  follows by \eqref{eqsimcost0} that the expected cost of simulating
  the substitution of $l_1$ to $l_2$ in $s_r(x)$ is at most
 $\frac{d(l_1, l_2)}{R} \cdot 4R + d(l_1, l_2) \le 5d(l_1, l_2).$

  If, on the other hand, $|l_2| > 4R$, then in order for $|l_1| \le r$ to be true, we must have $|l_1| \le 2R$, meaning by the
  triangle inequality that $d(l_1, l_2) \ge |l_2| / 2$. Thus in this
  case $|l_2| \le 2d(l_1, l_2)$, meaning by \eqref{eqsimcost0} that the
  expected cost of simulating the substitution of $l_1$ to $l_2$ in
  $s_r(x)$ is at most three times as expensive as the original
  substitution.
\end{proof}

\paragraph*{Acknowledgments} I would like to thank Moses Charikar for his mentoring and advice throughout the project, Ofir Geri for his support and for many useful conversations, and Virginia Williams for suggesting the problem of reducing between edit distance and LCS. I would also like to thank Pawe{\l} Gawrychowski for making me aware of \cite{reduction_past}. This research is supported by an MIT Akamai Fellowship and a Fannie \& John Hertz Foundation Fellowship. This research was also supported in part by NSF Grants 1314547 and 1533644. Parts of this research were performed during the Stanford CURIS research program.

\newpage
\bibliographystyle{plain}
\bibliography{thesis}

\begin{thebibliography}{10}

\bibitem{dtwapp4}
John Aach and George~M Church.
\newblock Aligning gene expression time series with time warping algorithms.
\newblock {\em Bioinformatics}, 17(6):495--508, 2001.

\bibitem{AbboudBa15}
Amir Abboud, Arturs Backurs, and Virginia~Vassilevska Williams.
\newblock Quadratic-time hardness of lcs and other sequence similarity
  measures.
\newblock {\em arXiv preprint arXiv:1501.07053}, 2015.

\bibitem{DTWhard2}
Amir Abboud, Arturs Backurs, and Virginia~Vassilevska Williams.
\newblock Tight hardness results for lcs and other sequence similarity
  measures.
\newblock In {\em 56th Annual Symposium on Foundations of Computer Science
  (FOCS)}, pages 59--78, 2015.

\bibitem{agarwal2015approximating}
Pankaj~K. Agarwal, Kyle Fox, Jiangwei Pan, and Rex Ying.
\newblock Approximating dynamic time warping and edit distance for a pair of
  point sequences.
\newblock In {\em 32nd International Symposium on Computational Geometry
  (SoCG)}, pages 6:1--6:16, 2016.

\bibitem{ApproxPolyLog}
Alexandr Andoni, Robert Krauthgamer, and Krzysztof Onak.
\newblock Polylogarithmic approximation for edit distance and the asymmetric
  query complexity.
\newblock In {\em Proceedings of the 51st Annual Symposium on Foundations of
  Computer Science (FOCS)}, pages 377--386, 2010.

\bibitem{ApproxSubPolyDistortion}
Alexandr Andoni and Krzysztof Onak.
\newblock Approximating edit distance in near-linear time.
\newblock {\em {SIAM} J. Comput.}, 41(6):1635--1648, 2012.

\bibitem{backurs2015edit}
Arturs Backurs and Piotr Indyk.
\newblock Edit distance cannot be computed in strongly subquadratic time
  (unless seth is false).
\newblock In {\em Proceedings of the 47th Annual Symposium on Theory of
  Computing (STOC)}, pages 51--58, 2015.

\bibitem{bansal2011}
Nikhil Bansal, Niv Buchbinder, Aleksander Madry, and Joseph Naor.
\newblock A polylogarithmic-competitive algorithm for the k-server problem.
\newblock In {\em 52nd Annual Symposium on Foundations of Computer Science
  (FOCS)}, pages 267--276, 2011.

\bibitem{bar2004approximating}
Ziv Bar-Yossef, T.S. Jayram, Robert Krauthgamer, and Ravi Kumar.
\newblock Approximating edit distance efficiently.
\newblock In {\em Proceedings of 45th Annual Symposium on Foundations of
  Computer Science (FOCS)}, pages 550--559, 2004.

\bibitem{DefnExpectedDistortion}
Yair Bartal.
\newblock On approximating arbitrary metrices by tree metrics.
\newblock In {\em Proceedings of the 30th Annual Symposium on the Theory of
  Computing (STOC)}, pages 161--168, 1998.

\bibitem{DimensionReduction}
Tugkan Batu, Funda Erg{\"{u}}n, and S{\"{u}}leyman~Cenk Sahinalp.
\newblock Oblivious string embeddings and edit distance approximations.
\newblock In {\em Proceedings of the 17th Annual Symposium on Discrete
  Algorithms (SODA)}, pages 792--801, 2006.

\bibitem{BUWK15}
Nurjahan Begum, Liudmila Ulanova, Jun Wang, and Eamonn~J. Keogh.
\newblock Accelerating dynamic time warping clustering with a novel admissible
  pruning strategy.
\newblock In {\em Proceedings of the 21th {ACM} {SIGKDD} International
  Conference on Knowledge Discovery and Data Mining}, pages 49--58, 2015.

\bibitem{dtwcomm}
Vladimir Braverman, Moses Charikar, William Kuszmaul, David Woodruff, and Lin
  Yang.
\newblock The one-way communication complexity of dynamic time warping
  distance.
\newblock Manuscript submitted for publication, 2018.

\bibitem{DTWhard}
Karl Bringmann and Marvin K{\"u}nnemann.
\newblock Quadratic conditional lower bounds for string problems and dynamic
  time warping.
\newblock In {\em 56th Annual Symposium on Foundations of Computer Science
  (FOCS)}, pages 79--97, 2015.

\bibitem{BringmannMu15}
Karl Bringmann and Wolfgang Mulzer.
\newblock Approximability of the discrete fr{\'e}chet distance.
\newblock {\em Journal of Computational Geometry}, 7(2):46--76, 2015.

\bibitem{dtwapp5}
EG~Caiani, A~Porta, G~Baselli, M~Turiel, S~Muzzupappa, F~Pieruzzi, C~Crema,
  A~Malliani, and S~Cerutti.
\newblock Warped-average template technique to track on a cycle-by-cycle basis
  the cardiac filling phases on left ventricular volume.
\newblock In {\em Computers in Cardiology 1998}, pages 73--76, 1998.

\bibitem{approx18}
Diptarka Chakraborty, Debarati Das, Elazar Goldenberg, Michal Kouck{\`y}, and
  Michael Saks.
\newblock Approximating edit distance within constant factor in truly
  sub-quadratic time.
\newblock In {\em Proceedings of the 59th Annual Symposium on Foundations of
  Computer Science (FOCS)}, pages 979--990, 2018.

\bibitem{chakraborty2016streaming2}
Diptarka Chakraborty, Elazar Goldenberg, and Michal Kouck{\`y}.
\newblock Streaming algorithms for computing edit distance without exploiting
  suffix trees.
\newblock {\em arXiv preprint arXiv:1607.03718}, 2016.

\bibitem{CharikarGe18}
Moses Charikar, Ofir Geri, Michael~P. Kim, and William Kuszmaul.
\newblock On estimating edit distance: Alignment, dimension reduction, and
  embeddings.
\newblock In {\em 45th International Colloquium on Automata, Languages, and
  Programming (ICALP)}, pages 34:1--34:14, 2018.

\bibitem{dtwapp1}
Alexander De~Luca, Alina Hang, Frederik Brudy, Christian Lindner, and Heinrich
  Hussmann.
\newblock Touch me once and i know it's you!: implicit authentication based on
  touch screen patterns.
\newblock In {\em Proceedings of the SIGCHI Conference on Human Factors in
  Computing Systems}, pages 987--996, 2012.

\bibitem{ds17}
Anne Driemel and Francesco Silvestri.
\newblock Locality-sensitive hashing of curves.
\newblock In {\em 33rd International Symposium on Computational Geometry
  (SoCG)}, pages 37:1--37:16, 2017.

\bibitem{ep17}
Ioannis~Z. Emiris and Ioannis Psarros.
\newblock Products of {E}uclidean metrics and applications to proximity
  questions among curves.
\newblock In {\em 34th International Symposium on Computational Geometry, SoCG
  2018, June 11-14, 2018, Budapest, Hungary}, pages 37:1--37:13, 2018.

\bibitem{trees}
Jittat Fakcharoenphol, Satish Rao, and Kunal Talwar.
\newblock A tight bound on approximating arbitrary metrics by tree metrics.
\newblock {\em Journal of Computer and System Sciences}, 69(3):485--497, 2004.

\bibitem{firstlowdist}
Zvi Galil and Kunsoo Park.
\newblock An improved algorithm for approximate string matching.
\newblock {\em SIAM Journal on Computing}, 19(6):989--999, 1990.

\bibitem{giorgino}
Toni Giorgino et~al.
\newblock Computing and visualizing dynamic time warping alignments in {R}: the
  {DTW} package.
\newblock {\em Journal of statistical Software}, 31(7):1--24, 2009.

\bibitem{DTWsubquadratic}
Omer Gold and Micha Sharir.
\newblock Dynamic time warping and geometric edit distance: Breaking the
  quadratic barrier.
\newblock In {\em 44th International Colloquium on Automata, Languages, and
  Programming, (ICALP)}, pages 25:1--25:14, 2017.

\bibitem{jiang2002general}
Tao Jiang, Guohui Lin, Bin Ma, and Kaizhong Zhang.
\newblock A general edit distance between rna structures.
\newblock {\em Journal of computational biology}, 9(2):371--388, 2002.

\bibitem{k02}
Eamonn~J. Keogh.
\newblock Exact indexing of dynamic time warping.
\newblock In {\em 28th International Conference on Very Large Data Bases
  (VLDB)}, pages 406--417, 2002.

\bibitem{kp99}
Eamonn~J. Keogh and Michael~J. Pazzani.
\newblock Scaling up dynamic time warping to massive dataset.
\newblock In {\em Principles of Data Mining and Knowledge Discovery, Third
  European Conference, (PKDD)}, pages 1--11, 1999.

\bibitem{kp00}
Eamonn~J. Keogh and Michael~J. Pazzani.
\newblock Scaling up dynamic time warping for datamining applications.
\newblock In {\em Proceedings of the sixth {ACM} {SIGKDD} international
  conference on Knowledge discovery and data mining}, pages 285--289, 2000.

\bibitem{algorithmdesign}
Jon Kleinberg and Eva Tardos.
\newblock {\em Algorithm design}.
\newblock Pearson, 2006.

\bibitem{Kuszmaul19}
William Kuszmaul.
\newblock Efficiently approximating edit distance between pseudorandom strings.
\newblock In {\em Proceedings of the thirtieth annual ACM-SIAM Symposium on
  Discrete Algorithms}. SIAM, 2019.

\bibitem{landau1998incremental}
Gad~M. Landau, Eugene~W. Myers, and Jeanette~P. Schmidt.
\newblock Incremental string comparison.
\newblock {\em SIAM Journal on Computing}, 27(2):557--582, 1998.

\bibitem{dtwapp3}
Lindasalwa Muda, Mumtaj Begam, and Irraivan Elamvazuthi.
\newblock Voice recognition algorithms using mel frequency cepstral coefficient
  ({MFCC}) and dynamic time warping ({DTW}) techniques.
\newblock {\em arXiv preprint arXiv:1003.4083}, 2010.

\bibitem{dtwapp2}
Mario~E Munich and Pietro Perona.
\newblock Continuous dynamic time warping for translation-invariant curve
  alignment with applications to signature verification.
\newblock In {\em Proceedings of 7th International Conference on Computer
  Vision}, volume~1, pages 108--115, 1999.

\bibitem{navarro2001guided}
Gonzalo Navarro.
\newblock A guided tour to approximate string matching.
\newblock {\em ACM computing surveys (CSUR)}, 33(1):31--88, 2001.

\bibitem{needleman1970general}
Saul~B. Needleman and Christian~D. Wunsch.
\newblock A general method applicable to the search for similarities in the
  amino acid sequence of two proteins.
\newblock {\em Journal of molecular biology}, 48(3):443--453, 1970.

\bibitem{PFWNCK16}
Fran{\c{c}}ois Petitjean, Germain Forestier, Geoffrey~I. Webb, Ann~E.
  Nicholson, Yanping Chen, and Eamonn~J. Keogh.
\newblock Faster and more accurate classification of time series by exploiting
  a novel dynamic time warping averaging algorithm.
\newblock {\em Knowl. Inf. Syst.}, 47(1):1--26, 2016.

\bibitem{dtwband}
Hiroaki Sakoe and Seibi Chiba.
\newblock Dynamic programming algorithm optimization for spoken word
  recognition.
\newblock {\em IEEE transactions on acoustics, speech, and signal processing},
  26(1):43--49, 1978.

\bibitem{reduction_past}
Alexander Tiskin.
\newblock Semi-local string comparison: algorithmic techniques and
  applications.
\newblock {\em arXiv preprint arXiv:0707.3619v21}, 2013.

\bibitem{ukkonen}
Esko Ukkonen.
\newblock Algorithms for approximate string matching.
\newblock {\em Information and control}, 64(1-3):100--118, 1985.

\bibitem{vintzyuk1968}
Taras~K. Vintsyuk.
\newblock Speech discrimination by dynamic programming.
\newblock {\em Cybernetics}, 4(1):52--57, 1968.

\bibitem{WagnerF74}
Robert~A. Wagner and Michael~J. Fischer.
\newblock The string-to-string correction problem.
\newblock {\em J. {ACM}}, 21(1):168--173, 1974.

\bibitem{ying2016simple}
Rex Ying, Jiangwei Pan, Kyle Fox, and Pankaj~K Agarwal.
\newblock A simple efficient approximation algorithm for dynamic time warping.
\newblock In {\em Proceedings of the 24th ACM SIGSPATIAL International
  Conference on Advances in Geographic Information Systems}, page~21, 2016.

\bibitem{dtwapp6}
Yunyue Zhu and Dennis Shasha.
\newblock Warping indexes with envelope transforms for query by humming.
\newblock In {\em Proceedings of the 2003 ACM SIGMOD international conference
  on Management of data}, pages 181--192, 2003.

\end{thebibliography}

\appendix

\section{Computing DTW in the Low-Distance Regime}\label{seclowdistancefull}
In this section, we present a low-distance regime algorithm for DTW
(with characters from an arbitrary metric in which all non-zero
distances are at least one). Given that $\dtw(x, y)$ is bounded above
by a parameter $K$, our algorithm can compute $\dtw(x, y)$ in time
$O(nK)$. Moreover, if $\dtw(x, y) > K$, then the algorithm will
conclude as much. Consequently, by doubling our guess for $K$
repeatedly, one can compute $\dtw(x, y)$ in time $O(n \cdot \dtw(x,
y))$.

Consider $x$ and $y$ of length $n$ with characters taken from a metric
space $\Sigma$ in which all non-zero distances are at least one. In
the textbook dynamic program for DTW \cite{algorithmdesign}, each pair
of indices $i, j \in [n]$ represents a subproblem $T(i, j)$ whose
value is $\dtw(x[1:i], y[1: j])$. Since $T(i, j)$ can be determined
using $T(i - 1, j), T(i, j - 1), T(i - 1, j - 1)$, and knowledge of
$x_i$ and $y_j$, this leads to an $O(n^2)$ algorithm for DTW. A common
heuristic in practice is to construct only a small band around the
main diagonal of the dynamic programming grid; by computing only
entries $T(i, j)$ with $|i - j| \le 2K$, and treating other
subproblems as having infinite return values, one can obtain a correct
computation for DTW as long as there is an optimal correspondence
which matches only letters which are within $K$ of each other in
position. This heuristic is known as the Sakoe-Chiba Band
\cite{dtwband} and is employed, for example, in the commonly used
library of Giorgino \cite{giorgino}.

The Sakoe-Chiba Band heuristic can perform badly even when $\dtw(x,
y)$ is very small, however. Consider $x = abbb\cdots b$ and $y = aaa
\cdots ab$. Although $\dtw(x, y) = 0$, if we restrict ourselves to
matching letters within $K$ positions of each other for some small
$K$, then the resulting correspondence will cost $\Omega(n)$.

In order to obtain an algorithm which performs well in the
low-distance regime, we introduce a new dynamic program for DTW. The
new dynamic program treats $x$ and $y$ asymmetrically within each
subproblem. Loosely speaking, for indices $i$ and $j$, there are
two subproblems $\tipToMiddle(x, y, i, j)$ and $\tipToMiddle(y, x, i,
j)$. The first of these subproblems evaluates to the DTW between the
first $i$ runs of $x$ and the first $j$ letters of $y$, with the added
condition that the final run of $y[1: j]$ is not extended. The second
of the subproblems is analogously defined as the DTW between the
first $i$ runs of $y$ and the first $j$ letters of $x$ with the added
condition that the final run of $x[1: j]$ is not extended.

The recursion connecting the new subproblems is somewhat more
intricate than for the textbook dynamic program. By matching the
$i$-th run with the $j$-th letter, however, we limit the number of
subproblems which can evaluate to less than $K$. In particular, if the
$j$-th letter of $y$ is in $y$'s $t$-th run, then any correspondence
which matches the $i$-th run of $x$ to the $j$-th letter of $y$ must
cost at least $\Omega(|i - t|)$ (Lemma
\ref{lemlowerboundsubproblem}). Thus for a given $j$, there are only
$O(K)$ options for $i$ such that $\tipToMiddle(x, y, i, j)$ can
possibly be at most $K$, and similarly for $\tipToMiddle(y, x, i,
j)$. Since we are interested in the case of $\dtw(x, y) \le K$, we can
restrict ourselves to the $O(nK)$ subproblems which have the potential
to evaluate to at most $O(K)$. Notice that, in fact, our algorithm
will work even when $\dtw(x, y) > K$ as long as there is an optimal
correspondence between $x$ and $y$ which only matches letters from $x$
from the $\xrun$-th run with letters from $y$ from the $\yrun$-th run
if $|\xrun - \yrun| \le O(K)$.

We will now formally present our algorithm. For the rest of the
section let $x$ and $y$ be strings of length at most $n$ and let $K$
be a parameter which we assume is greater than $\dtw(x, y)$. We will
present an $O(nK)$ time algorithm for computing $\dtw(x, y)$.

Our subproblems will be of the form $\tipToMiddle(x, y, \xrun, \yrun,
\yrunoffset)$, which is defined as follows. Let $x'$ consist of the
first $\xrun$ runs of $x$ and $y'$ consist of the first $\yrun$ runs
of $y$ until the $\yrunoffset$-th letter in the $\yrun$-th run. Then
$\tipToMiddle(x, y, \xrun, \yrun, \yrunoffset)$ is the value of the
optimal correspondence between $x'$ and $y'$ such that the $\yrun$-th
run in $y'$ is not extended.\footnote{If $\yrunoffset = 0$, then the
  $\yrun$-th run in $y'$ is empty and thus trivially cannot be
  extended.} If no such correspondence exists (which can only happen
if $\yrun \le 1$ or $\xrun = 0$), then the value of the subproblem is $\infty$. Note
that we allow $\xrun, \yrun, \yrunoffset$ to be zero, and if $\yrun$
is zero, then $\yrunoffset$ must be zero as well.

We will also consider the symmetrically defined subproblems of the
form $\tipToMiddle(y, x, \yrun, \xrun, \xoffset)$, though for
simplicity we will only prove recursions for subproblems of the first
type. (These recursions will involve subproblems of the second type,
however.)

\begin{ex}
  Suppose characters are taken from generalized Hamming space, with
  distances of zero and one. The subproblem $\tipToMiddle(efabbccccd,
  ffaabcccddd, 5, 4, 2)$ takes the value of the optimal correspondence
  between $efabbcccc$ and $ffaabcc$ such that the final $cc$ run in
  the latter is not extended. The subproblem's value turns out to be
  three, due to the following correspondence:
  \begin{center}
  \begin{tabular}{l l l l l l l l l l l}
    e&f&a&a&b&b&c&c&c&c&  \\
    f&f&a&a&b&b&b&b&c&c&.  \\
  \end{tabular}
  \end{center}
\end{ex}

Before continuing, we show how to solve for $\dtw(x, y)$ in terms of
the subproblems described above.
\begin{lem}
  Let $x$ and $y$ be strings. Then there is an optimal correspondence
  between $x$ and $y$ such that either the final run of $x$ is not
  extended, or the final run of $y$ is not extended. Consequently, if
  $x$ has $s$ runs with the final run of length $j$ and $y$ has $t$
  runs with the final run of length $k$, then
  $$\dtw(x, y) = \min\left(\tipToMiddle(x, y, s, t, k), \tipToMiddle(y, x, t, s, j)\right).$$
  \label{lemreducedcorrespondence}
\end{lem}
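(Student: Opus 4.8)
The plan is to prove the structural claim first — that some optimal correspondence leaves at least one of the two final runs unextended — and then observe that this, together with the definition of $\tipToMiddle$, immediately yields the displayed identity. Throughout I assume $x$ and $y$ are nonempty (otherwise the statement is degenerate), so each has at least one run.

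For the structural claim I would start from an arbitrary optimal correspondence $(\overline{x}, \overline{y})$ between $x$ and $y$ and run an ``un-extension at the tail'' process. The last coordinate of $\overline{x}$ is necessarily a copy of the last letter of $x$, lying in (the extended version of) $x$'s final run, and likewise the last coordinate of $\overline{y}$ is a copy of $y$'s last letter in $y$'s final run. As long as \emph{both} of these final runs are extended by the correspondence, I delete the last coordinate from $\overline{x}$ and $\overline{y}$ simultaneously. The result is still a pair of equal-length expansions of $x$ and $y$: each final run loses one repeated copy, but having been extended it still has length at least that of the corresponding original run, so it remains a legal expansion. The cost changes by exactly $-d(\overline{x}_{\text{last}}, \overline{y}_{\text{last}}) \le 0$, so the new correspondence is still optimal. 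I repeat this while both final runs remain extended; the total length strictly decreases at each step, so the process terminates, and it can terminate only once at least one of the two final runs is no longer extended. This is essentially the same shortening argument used in the Preliminaries to bound the length of a minimum-length optimal correspondence, localized to the tail.

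With the structural claim in hand, the formula follows by unpacking definitions. If $x$ has $s$ runs and $y$ has $t$ runs with final run of length $k$, then by definition $\tipToMiddle(x, y, s, t, k)$ is exactly the minimum cost over all correspondences between $x$ and $y$ in which $y$'s final ($t$-th) run is not extended; symmetrically, if $x$'s final run has length $j$, then $\tipToMiddle(y, x, t, s, j)$ is the minimum cost over correspondences in which $x$'s final run is not extended. Each of these minimizes over a subset of all correspondences, hence each is $\ge \dtw(x, y)$, giving $\min\!\big(\tipToMiddle(x, y, s, t, k),\, \tipToMiddle(y, x, t, s, j)\big) \ge \dtw(x, y)$. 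Conversely, the structural claim furnishes an optimal correspondence lying in (at least) one of these two restricted families, so one of the two $\tipToMiddle$ values equals $\dtw(x, y)$, and therefore so does the minimum.

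I do not expect a serious obstacle here. The one point requiring care is verifying that deleting the tail coordinate keeps both expansions legal, which uses precisely that both final runs were extended; in particular a final run of length one in the original string is fine, since ``extended'' means expansion-length at least two, so after one deletion it has length at least one and is still a legal expansion. Beyond that, the argument is a direct translation between ``optimal correspondence with a given final run unextended'' and ``value of the corresponding $\tipToMiddle$ subproblem.''
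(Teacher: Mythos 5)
Your proposal is correct and matches the paper's argument: the paper likewise observes that if both final runs are extended one can un-extend each by one to get a shorter correspondence that is at least as good, so a minimum-length optimal correspondence leaves one of the two final runs unextended, and the displayed formula then follows from the definition of $\tipToMiddle$. Your write-up merely makes the termination of the shortening process and the final translation into subproblem values more explicit.
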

\begin{proof}
  Consider an optimal correspondence between $x$ and $y$. If both the final
  run of $x$ and the final run of $y$ are extended in the correspondence,
  then we can simply un-extend each by one in order to obtain a
  shorter correspondence which is at least as good. Thus any minimum-length
  optimal correspondence between $x$ and $y$ must leave either the final
  run of $x$ unextended or the final run of $y$ unextended.
\end{proof}

Because we know that $\dtw(x, y) \le K$, there is a large class of
subproblems we can ignore. This follows from the next lemma, which
provides a lower bound for $\tipToMiddle(x, y, \xrun, \yrun,
\yrunoffset)$.
\begin{lem}
  $$\tipToMiddle(x, y, \xrun, \yrun, \yrunoffset) \ge \frac{|\xrun - \yrun| - 1}{2}.$$
  \label{lemlowerboundsubproblem}
\end{lem}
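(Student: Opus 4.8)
The plan is to show that the subproblem value $\tipToMiddle(x, y, \xrun, \yrun, \yrunoffset)$ is at least the claimed bound by analyzing any correspondence $A$ that realizes this value and counting mismatched (unequal-letter) pairs. Recall that $\tipToMiddle(x, y, \xrun, \yrun, \yrunoffset)$ is the cost of a correspondence between $x'$ (the first $\xrun$ runs of $x$) and $y'$ (the first $\yrun$ runs of $y$, truncated at the $\yrunoffset$-th letter of run $\yrun$). Since all nonzero distances in $\Sigma$ are at least $1$, the cost of $A$ is at least the number of positions in the correspondence where the two aligned letters differ. So it suffices to prove that every correspondence between $x'$ and $y'$ aligns at least $\frac{|\xrun - \yrun| - 1}{2}$ pairs of unequal letters. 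Without loss of generality assume $\xrun \ge \yrun$ (the other case is symmetric), so we want to find at least $\frac{\xrun - \yrun - 1}{2}$ mismatches; if $\xrun - \yrun \le 1$ the bound is nonpositive and there is nothing to prove, so assume $\xrun - \yrun \ge 2$.

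The key combinatorial observation is that, in any correspondence, each run of $x'$ overlaps a contiguous block of runs of $y'$, and consecutive runs of $x'$ overlap blocks of $y'$-runs that are either disjoint or share at most their boundary run. More to the point: consider the runs of $x'$ and the runs of $y'$, and look at which $y'$-run each $x'$-run is matched into. Because $x'$ has $\xrun$ runs and $y'$ has only $\yrun$ runs, by a pigeonhole/counting argument many $x'$-runs must be "absorbed" into $y'$-runs in a way that forces mismatches. Concretely, I would argue as follows: walk through the correspondence from left to right; each time the $y'$-side moves from one run to the next is a "$y$-transition," and there are exactly $\yrun - 1$ of these; each time the $x'$-side moves to its next run is an "$x$-transition," and there are $\xrun - 1$ of these. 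Between two consecutive $y$-transitions (or before the first / after the last), the portion of $y'$ involved lies within a single run, so all its letters are equal; if within such a stretch the $x'$-side passes through two or more consecutive $x'$-runs, at least one of those $x'$-runs consists of a letter different from the (single) $y'$-letter of that stretch — indeed at most one $x'$-run in the stretch can match the $y'$-letter, since consecutive runs of $x$ have distinct letters — yielding at least one mismatched pair per such stretch beyond the first extra $x'$-run. Summing: the $\xrun - 1$ $x$-transitions are distributed among $\yrun$ stretches, so the number of "extra" $x'$-runs (those not the first in their stretch) is at least $(\xrun - 1) - (\yrun - 1) = \xrun - \yrun$, and each extra $x'$-run beyond possibly one "free" match per stretch contributes a mismatch.

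The main obstacle — and the reason the bound carries the $-1$ and the factor of $\tfrac12$ — is being careful about boundary runs that straddle a $y$-transition: a single $x'$-run can overlap two $y'$-runs, so it could conceivably match the second $y'$-run and thereby avoid being counted as a mismatch in the first stretch, "double-dipping" across the transition. To handle this I would set up the counting so that each mismatch is charged to an $x'$-run and each $x'$-run is charged at most twice (once for each $y'$-run it straddles), which is exactly where the factor $2$ in the denominator enters; and one boundary run of $y'$ (the truncated $\yrun$-th run, or the interface behavior at the ends) may need to be excused, accounting for the $-1$. Once this charging scheme is made precise, the inequality $\text{(number of mismatches)} \ge \frac{\xrun - \yrun - 1}{2}$ follows, and combined with the minimum-distance-$1$ assumption this gives $\tipToMiddle(x, y, \xrun, \yrun, \yrunoffset) \ge \frac{|\xrun - \yrun| - 1}{2}$ as desired. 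The symmetric subproblems $\tipToMiddle(y, x, \cdot, \cdot, \cdot)$ satisfy the analogous bound by the same argument with the roles of $x$ and $y$ exchanged.
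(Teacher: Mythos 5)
Your overall strategy --- lower-bounding the cost by the number of mismatched positions forced by the surplus of runs, using that all nonzero distances are at least $1$ --- is the same as the paper's, but two steps in your execution do not hold up. First, the claim that ``at most one $x'$-run in the stretch can match the $y'$-letter, since consecutive runs of $x$ have distinct letters'' is false: only \emph{consecutive} runs must differ, so alternating runs may all carry the stretch's letter. Take $x=ababa$ and $y=aaaaa$: there is one stretch and three of the five $x$-runs match it, and the optimal correspondence costs $2=(\xrun-\yrun)/2$ rather than $\xrun-\yrun=4$, so the ``one mismatch per extra run per stretch'' count you build on is genuinely too strong, not merely loosely stated. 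The corrected per-stretch count is that at most $\lceil k_j/2\rceil$ of the $k_j$ overlapping $x$-runs can match, giving at least $\lfloor k_j/2\rfloor \ge (k_j-1)/2$ mismatches; summing over stretches yields at least $(\sum_j k_j-\yrun)/2\ge(\xrun-\yrun)/2$ mismatches, which does suffice --- but this is precisely the charging argument you defer with ``once this charging scheme is made precise,'' so the decisive step is missing rather than carried out. Second, your account of the $-1$ is off: it does not come from excusing a boundary run in the charging, but simply from the fact that when $\yrunoffset=0$ the string $y'$ has only $\yrun-1$ nonempty runs, so the guaranteed gap in run counts is only $|\xrun-\yrun|-1$.

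For comparison, the paper's proof avoids stretches and straddling runs altogether: in the expansion of whichever string has more runs (say $t$ runs versus $s$), there are $t-1$ adjacent pairs of distinct letters; at most $s-1$ of these can be aligned with adjacent distinct pairs of the other string, so at least $t-s$ of them sit opposite a pair of equal letters and hence contain a mismatched position; since each mismatched position lies in at most two adjacent pairs, there are at least $(t-s)/2$ mismatches. That argument delivers the factor $\tfrac12$ cleanly, with no case analysis at run boundaries, and the $-1$ then falls out of the run count of $y'$ as above.
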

\begin{proof}
Consider two strings $a$ and $b$, such that $a$ has $s$ non-empty runs
and $b$ has $t$ non-empty runs, with $s < t$. We will show that
$\dtw(a, b) \ge \frac{t - s}{2}$. Since $\tipToMiddle(x, y, \xrun,
\yrun, \yrunoffset)$ returns a correspondence between the first
$\xrun$ runs of $x$ and a portion of $y$ containing either $\yrun$ or
$\yrun - 1$ (if $\yrunoffset = 0$) non-empty runs, it follows that
$\tipToMiddle(x, y, \xrun, \yrun, \yrunoffset) \ge \frac{|\xrun -
  \yrun| - 1}{2}$, as desired.

Consider a correspondence $A$ between $a$ and $b$. Notice that $b$
contains $t - 1$ pairs of adjacent letters where the first letter is
from a different run than the second (i.e., the two letters are
distinct). The correspondence $A$ can match at most $s - 1$ of these
pairs with pairs of distinct adjacent letters from $a$. Therefore,
there are at least $t - s$ pairs of distinct adjacent letters in $b$
which are matched by $A$ with a pair of equal letters in $a$. This
means that there are at least $t - s$ pairs of adjacent letters in $b$
such that one of the two letters is mismatched by $A$. Since a
mismatched letter can appear in at most two such pairs, there must be
at least $\frac{t - s}{2}$ letters mismatched by $A$.
\end{proof}

Lemma \ref{lemlowerboundsubproblem} restricts the number of
subproblems that we are interested in to $O(nK)$. In particular,
because we assume $\dtw(x, y) \le K$, any dynamic program for
computing $\dtw(x, y)$ in terms of subproblems of the form
$\tipToMiddle(x, y, \xrun, \yrun, \yrunoffset)$ can ignore subproblems
in which $|\xrun - \yrun| > 2K + 1$ (as long as the recursion makes
each subproblem's value at least as large as any of the subproblems on
which it relies). Now consider how many subproblems $\tipToMiddle(x,
y, \xrun, \yrun, \yrunoffset)$ satisfy $|\xrun - \yrun| > 2K + 1$. The
options for $\yrun$ and $\yrunoffset$ together correspond with the
$O(n)$ positions in $y$. For each value of $\yrun$, there are $O(K)$
values of $\xrun$ such that $|\xrun - \yrun| \le 2K + 1$. Therefore,
there are $O(nK)$ subproblems of the form $\tipToMiddle(x, y, \xrun,
\yrun, \yrunoffset)$ with $|\xrun - \yrun| \le 2K + 1$. Similarly,
there are are $O(nK)$ subproblems of the form $\tipToMiddle(x, y,
\xrun, \yrun, \yrunoffset)$ with $|\xrun - \yrun| \le 2K + 1$.

In order to complete our dynamic program, it remains to present a tree
of recursions connecting the subproblems. We begin with a recursion
for the case where $\xrun, \yrun, \yrunoffset \ge 1$.

\begin{lem}
  Suppose that $\xrun$ and $\yrun$ are both between $1$ and the number
  of runs in $x$ and $y$ respectively; and that $\yrunoffset$ is
  between $1$ and the length of the $\yrun$-th run in $y$. Let
  $\xrunlength$ be the length of the $\xrun$-th run in $x$ and
  $\yrunlength$ be the length of the $\yrun$-th run in $y$. Let
  $\rundiff$ be the distance between the letter populating the
  $\xrun$-th run in $x$ and the letter populating the $\yrun$-th run
  in $y$. Then $\tipToMiddle(x, y, \xrun, \yrun, \yrunoffset)$ is given by
  $$\begin{cases}
    \min\left(\tipToMiddle(x, y, \xrun, \yrun, \yrunoffset - 1) + \rundiff,  \tipToMiddle(x, y, \xrun - 1, \yrun, \yrunoffset - \xrunlength) + \rundiff \cdot \xrunlength\right) & \text{ if }\xrunlength \le \yrunoffset \\
    \min\left(\tipToMiddle(x, y, \xrun, \yrun, \yrunoffset - 1) + \rundiff,  \tipToMiddle(y, x, \yrun - 1, \xrun, \xrunlength - \yrunoffset) + \rundiff \cdot \yrunoffset\right) & \text{ if }\xrunlength > \yrunoffset. \\
  \end{cases}$$
    \label{lemrecursion1}
\end{lem}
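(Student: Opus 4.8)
The plan is to argue by considering the structure of an optimal correspondence $A$ realizing $\tipToMiddle(x, y, \xrun, \yrun, \yrunoffset)$ and doing a case split on whether the $\xrun$-th run of $x$ is extended in $A$. Recall that in this subproblem we compare the first $\xrun$ runs of $x$ against the prefix $y'$ of $y$ ending at the $\yrunoffset$-th letter of the $\yrun$-th run, subject to the constraint that the $\yrun$-th run of $y'$ is not extended. First I would observe that, since an optimal correspondence need not extend both of two overlapping runs, we may assume $A$ has minimum length; the $\xrun$-th run and the $\yrun$-th run overlap in $A$ (as the last runs on each side), and so by the minimum-length assumption at most one of them is extended. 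Since the $\yrun$-th run is forbidden from being extended by the problem's constraint, the only freedom is whether the $\xrun$-th run is extended.

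If the $\xrun$-th run \emph{is} extended in $A$, then in particular the last letter of the $\xrun$-th run is matched against the last letter of $y'$ (the $\yrunoffset$-th letter of the $\yrun$-th run), contributing cost $\rundiff$; peeling off this matched pair leaves a correspondence between the first $\xrun$ runs of $x$ and the first $\yrunoffset - 1$ letters of the $\yrun$-th run of $y$, with the $\yrun$-th run still unextended, whose cost is at least $\tipToMiddle(x, y, \xrun, \yrun, \yrunoffset - 1)$. Conversely any such correspondence can be extended by one matched pair to give a valid correspondence for the original subproblem. Hence this branch contributes exactly $\tipToMiddle(x, y, \xrun, \yrun, \yrunoffset - 1) + \rundiff$.

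If the $\xrun$-th run is \emph{not} extended, I split further on $\xrunlength$ versus $\yrunoffset$, exactly as in the statement. When $\xrunlength \le \yrunoffset$, the whole (unextended) $\xrun$-th run overlaps only the $\yrun$-th run of $y'$, contributing $\rundiff \cdot \xrunlength$; removing it leaves a correspondence between the first $\xrun - 1$ runs of $x$ and the prefix of $y$ ending $\xrunlength$ letters earlier in the $\yrun$-th run, still with that run unextended, giving the term $\tipToMiddle(x, y, \xrun - 1, \yrun, \yrunoffset - \xrunlength) + \rundiff \cdot \xrunlength$. When $\xrunlength > \yrunoffset$, all $\yrunoffset$ letters of the $\yrun$-th run overlap the $\xrun$-th run, contributing $\rundiff \cdot \yrunoffset$, and the remaining $\xrunlength - \yrunoffset$ letters of the (unextended) $\xrun$-th run must overlap the $(\yrun-1)$-th run of $y$; the residual correspondence is between the first $\yrun - 1$ runs of $y$ and the first $\xrunlength - \yrunoffset$ letters of the $\xrun$-th run of $x$ with the $\xrun$-th run unextended, which is precisely $\tipToMiddle(y, x, \yrun - 1, \xrun, \xrunlength - \yrunoffset)$; this yields the symmetric term. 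In each case I also check the reverse inequality: each listed expression is achievable by a valid correspondence for the subproblem, so the minimum over the relevant branches equals the subproblem value. Finally I take the minimum over the ``extended'' and ``not extended'' options; the ``not extended'' option only contributes the branch appropriate to the sign of $\xrunlength - \yrunoffset$, which gives the two-case formula.

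The main obstacle is bookkeeping at the run boundaries: one must verify that when the $\xrun$-th run spills past the $\yrun$-th run into the $(\yrun-1)$-th run, the residual instance is genuinely of the form $\tipToMiddle(y, x, \cdots)$ with the \emph{unextended} constraint correctly transferred, and that no correspondence is lost by the minimum-length normalization (this is where the earlier observation that at most one of two overlapping runs need be extended is essential). The cost-accounting identities themselves — that an unextended run of length $\ell$ overlapping a single run at letter-distance $\rundiff$ costs exactly $\rundiff \cdot \ell$ — are routine once the overlap structure is pinned down.
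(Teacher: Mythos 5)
Your proposal is correct and follows essentially the same route as the paper's proof: fix a minimum-cost correspondence for the subproblem, case on whether the $\xrun$-th run is extended (the extended branch peels off one matched pair costing $\rundiff$), and in the unextended branch subcase on $\xrunlength$ versus $\yrunoffset$ to decide whether the $\xrun$-th run is engulfed by the $\yrun$-th run of $y$ or spills over into a $\tipToMiddle(y,x,\cdot)$ subproblem, checking achievability in each branch for the reverse inequality. The minimum-length normalization you invoke at the outset is harmless but not needed, since the subproblem definition already forbids extending the $\yrun$-th run.
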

\begin{proof}
  Consider a minimum-cost correspondence $A$ between the first $\xrun$ runs
  of $x$ and the portion of $y$ up until the $\yrunoffset$-th letter
  in the $\yrun$-th run, such that the $\yrun$-th run in $y$ is not
  extended.

  If the $\xrun$-th run in $A$ is extended, then the cost of $A$ must
  be $\tipToMiddle(x, y, \xrun, \yrun, \yrunoffset - 1) + \rundiff$ (and
  regardless of whether the $r_x$-th run in $A$ is extended, this is an upper bound for the cost of $A$).

  If the $\xrun$-th run in $A$ is not extended, then we consider two
  cases. In the first case, $\xrunlength \le \yrunoffset$. In this
  case, The entirety of the $\xrun$-th run of $x$ is engulfed by the
  $\yrun$-th run of $y$ in the correspondence $A$. Since the $\xrun$-th run
  is not extended, the cost of the overlap is $\xrunlength \cdot
  d$. Thus the cost of $A$ must be
  $$\tipToMiddle(x, y, \xrun - 1, \yrun, \yrunoffset - \xrunlength) +
  \rundiff \cdot \xrunlength.$$ Moreover, since $A$ is minimal, as long as
  $\xrunlength \le \yrunoffset$, the cost of $A$ is at most the above
  expression, regardless of whether the $\xrun$-th run in $x$ is
  extended in $A$.

  In the second case, $\xrunlength > \yrunoffset$. In this case, the
  first $\yoffset$ letters in the $\yrun$-th run of $y$ all overlap
  the $\xrun$-th run of $x$ in $A$. Since the $\yrun$-th run is not
  extended, the cost of the overlap is $d \cdot \yoffset$. Thus, since
  the $\xrun$-th run in $x$ is also not extended in $A$, the cost of
  $A$ must be
  $$\tipToMiddle(y, x, \yrun - 1, \xrun, \xrunlength - \yrunoffset) + \rundiff \cdot
  \yoffset.$$ Moreover, since $A$ is minimal, as long as $\xrunlength
  > \yrunoffset$, the cost of $A$ is at most the above expression,
  regardless of whether the $\xrun$-th run in $x$ is extended in $A$.
\end{proof}

The cases where at least one of $\xrun, \yrun, \yrunoffset$ is $0$ are
easily handled by the following two lemmas.

\begin{lem}
  Suppose that $\xrun > 0$ is at most the number of runs in $x$, and
  $\yrun \ge 0$ is at most the number of runs in $y$. Further suppose
  that $\yrunoffset = 0$.  If $\yrun = 0$, then $\tipToMiddle(x, y,
  \xrun, \yrun, \yrunoffset) = \infty$. Otherwise, let
  $\yprevrunlength$ be the length of the $(\yrun - 1)$-th run in $y$
  (and zero if no such run exists). Then
  $$\tipToMiddle(x, y, \xrun, \yrun, \yrunoffset) = \min\left(\tipToMiddle(x, y, \xrun, \yrun - 1, \yprevrunlength), \tipToMiddle(y, x, \yrun - 1, \xrun, \xrunlength) \right).$$
    \label{lemrecursion2}
\end{lem}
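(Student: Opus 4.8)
The statement to prove is Lemma~\ref{lemrecursion2}, which handles the subproblem $\tipToMiddle(x, y, \xrun, \yrun, \yrunoffset)$ in the boundary case $\yrunoffset = 0$ (with $\xrun > 0$ and $\yrun \ge 0$).

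\medskip

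The plan is to argue directly from the definition of the subproblem. Recall that $\tipToMiddle(x, y, \xrun, \yrun, 0)$ is the cost of the optimal correspondence between the first $\xrun$ runs of $x$ and the prefix of $y$ consisting of its first $\yrun - 1$ runs (since the $\yrun$-th run is truncated to zero letters), subject to the $\yrun$-th run of $y$ not being extended. The constraint on the $\yrun$-th run is vacuous here because that run is empty: no letters of $y$ beyond the first $\yrun - 1$ runs participate. So $\tipToMiddle(x, y, \xrun, \yrun, 0)$ is simply $\dtw(x[\text{first }\xrun\text{ runs}], y[\text{first }\yrun - 1\text{ runs}])$, with the edge case that when $\yrun = 0$ there is no valid prefix of $y$ at all (we would need a correspondence between a nonempty string, since $\xrun > 0$, and an "undefined" prefix), so the value is $\infty$; this matches the stated lemma. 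For $\yrun \ge 1$, I then invoke Lemma~\ref{lemreducedcorrespondence} applied to the two strings $x' = x[\text{first }\xrun\text{ runs}]$ and $y' = y[\text{first }\yrun - 1\text{ runs}]$: an optimal correspondence between $x'$ and $y'$ can be taken to leave either the final run of $x'$ unextended or the final run of $y'$ unextended. In the former case the cost equals $\tipToMiddle(x, y, \xrun, \yrun - 1, \yprevrunlength)$, where $\yprevrunlength$ is the length of the $(\yrun-1)$-th run of $y$ (this is exactly the subproblem with the $(\yrun-1)$-th run of $y$ fully present and unextended); in the latter case, by symmetry, the cost equals $\tipToMiddle(y, x, \yrun - 1, \xrun, \xrunlength)$, where $\xrunlength$ is the length of the $\xrun$-th run of $x$. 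Taking the minimum of the two gives the claimed formula.

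\medskip

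A few details need care. First, when $\yrun = 1$, the string $y' = y[\text{first }0\text{ runs}]$ is empty, so $\dtw(x', y')$ is the cost of the all-deletion correspondence; I should check that the formula still degenerates correctly — in that case $\yprevrunlength = 0$ (no $(\yrun-1)$-th run), so the first term becomes $\tipToMiddle(x, y, \xrun, 0, 0)$, and the recursion for that is handled as an edge case elsewhere; I want to confirm it evaluates to the cost of deleting the first $\xrun$ runs of $x$, and similarly that $\tipToMiddle(y, x, 0, \xrun, \xrunlength)$ does too, so the $\min$ is consistent. Second, I must make sure the convention "$\yprevrunlength$ is zero if no such run exists" is used only when $\yrun = 1$, which the lemma statement already anticipates. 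Third, the symmetry argument invoked to get the $\tipToMiddle(y, x, \cdot)$ term is just the observation that $\dtw$ is symmetric in its two arguments and that the "final run of $y'$ unextended" condition is precisely what the $y$-versus-$x$ subproblem encodes.

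\medskip

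The main obstacle is not conceptual depth but bookkeeping: matching the informal description of $\tipToMiddle$ to the precise indexing when a run is truncated to length $0$, and confirming the degenerate sub-cases ($\yrun = 1$, empty $y'$) land on values consistent with the other edge-case recursions rather than producing spurious $\infty$ or off-by-one errors. I would write the proof by first reducing to $\dtw(x', y')$, then citing Lemma~\ref{lemreducedcorrespondence}, then spelling out that each of the two resulting quantities is by definition one of the two subproblems in the $\min$, handling the $\yrun = 0$ case as a one-line observation that no correspondence exists.
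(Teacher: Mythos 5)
Your proposal is correct and takes essentially the same route as the paper: reduce $\tipToMiddle(x,y,\xrun,\yrun,0)$ to the DTW between the first $\xrun$ runs of $x$ and the first $\yrun-1$ runs of $y$, then apply Lemma~\ref{lemreducedcorrespondence} to split on which string's final run is unextended. One small correction to your side remark: when $\yrun = 1$ the prefix $y'$ is empty and DTW admits no correspondence at all (there are no deletions in DTW), so the value is $\infty$ rather than an ``all-deletion'' cost — and indeed both terms of the $\min$ evaluate to $\infty$ via the other edge-case recursions, so the formula remains consistent.
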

\begin{proof}
  The case where $\yrun = 0$ is immediate from the fact that $\xrun >
  0$ but $\yrun = 0$.

  Suppose that $\yrun \neq 0$. Then because $\yrunoffset = 0$,
  $\tipToMiddle(x, y, \xrun, \yrun, \yrunoffset)$ is just the dynamic
  time warping distance between the first $\xrun$ runs of $x$ and the
  first $\yrun - 1$ runs of $y$. The recursion therefore follows from
  Lemma \ref{lemreducedcorrespondence}.
\end{proof}

\begin{lem}
  Suppose that $\xrun = 0$ and $\yrun \ge 0$. If $\yrun = 0$ and
  $\yoffset = 0$, then $\tipToMiddle(x, y, \xrun, \yrun, \yrunoffset)
  = 0$. Otherwise, $\tipToMiddle(x, y, \xrun, \yrun, \yrunoffset) =
  \infty$.
  \label{lemrecursion3}
\end{lem}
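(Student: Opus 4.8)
The plan is to argue directly from the definition of $\tipToMiddle$, with no appeal to the earlier recursions. Recall that $\tipToMiddle(x, y, \xrun, \yrun, \yrunoffset)$ is the cost of an optimal correspondence between the string $x'$ consisting of the first $\xrun$ runs of $x$ and the string $y'$ consisting of the first $\yrun$ runs of $y$ truncated at the $\yrunoffset$-th letter of the $\yrun$-th run, subject to the $\yrun$-th run of $y'$ not being extended; and that it equals $\infty$ when no such correspondence exists.

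First I would use the hypothesis $\xrun = 0$ to observe that $x'$ is the empty string. Since an expansion of a string is obtained only by extending its (existing) runs, the empty string is its own unique expansion. Hence in any correspondence $(\overline{x'}, \overline{y'})$ the string $\overline{x'}$ is empty, so $\overline{y'}$ --- which is an expansion of $y'$ and therefore no shorter than $y'$ --- must be empty as well, which forces $y'$ itself to be empty. Conversely, when $y'$ is empty the pair of empty strings is the unique correspondence: it has cost $0$ (an empty sum of distances), and it meets the non-extension constraint vacuously, since the $\yrun$-th run of $y'$ is then empty. Therefore $\tipToMiddle(x, y, 0, \yrun, \yrunoffset)$ equals $0$ if $y'$ is empty, and $\infty$ otherwise.

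It then remains to carry out a short case analysis determining when $y'$ is empty. When $\yrun = 0$, our input convention forces $\yrunoffset = 0$ and makes $y'$ the empty string, so $\tipToMiddle(x, y, 0, 0, 0) = 0$, which is the first case of the lemma. In the remaining cases $y'$ is nonempty, so no correspondence exists and $\tipToMiddle(x, y, 0, \yrun, \yrunoffset) = \infty$, which is the second case.

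I expect no real obstacle here; the argument is essentially an unwinding of the definition of $\tipToMiddle$. The one point I would be careful to pin down explicitly is the degenerate ``correspondence between the empty string and the empty string'': one must check against the definition of a correspondence (a pair of equal-length expansions) that the empty pair is admissible, that its cost is $0$, and that the non-extension requirement on the $\yrun$-th run is vacuous in that situation. Once this is settled, the lemma follows immediately.
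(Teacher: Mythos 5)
Your approach is the same as the paper's: the paper's entire proof is ``immediate from the definition,'' and your proposal just unwinds that definition ($\xrun=0$ forces $\overline{x'}$, hence $\overline{y'}$, hence $y'$ to be empty; the value is $0$ exactly when $y'$ is empty and $\infty$ otherwise). That core reduction is correct and is exactly the intended argument.

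However, your closing case analysis has a gap. You assert that ``in the remaining cases $y'$ is nonempty,'' but $y'$ is the first $\yrun-1$ complete runs of $y$ followed by the first $\yrunoffset$ letters of the $\yrun$-th run, so $y'$ is also empty when $\yrun=1$ and $\yrunoffset=0$. In that case your own (correct) criterion yields $\tipToMiddle(x,y,0,1,0)=0$, whereas the lemma asserts $\infty$. So as written your argument does not establish the ``otherwise $\infty$'' clause for the input $(\xrun,\yrun,\yrunoffset)=(0,1,0)$ --- indeed it contradicts it. To be fair, this is really an unhandled edge case in the lemma statement itself that the paper's one-line proof glosses over (and it is not entirely cosmetic: the recursion of Lemma~\ref{lemrecursion1} with $\xrun=\yrun=1$ and $\xrunlength=\yrunoffset$ bottoms out at exactly this subproblem, and needs the value $0$ there to compute, e.g., $\dtw(aaa,aaa)$ correctly). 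You should either flag this discrepancy explicitly or restate the base case as ``$0$ if $y'$ is empty, i.e.\ if $\yrun=0$ or ($\yrun=1$ and $\yrunoffset=0$); $\infty$ otherwise,'' which is what your argument actually proves.
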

\begin{proof}
  This is immediate from the definition of $\tipToMiddle(x, y, \xrun,
  \yrun, \yrunoffset)$.
\end{proof}

The recursions presented in Lemmas \ref{lemrecursion1},
\ref{lemrecursion2}, \ref{lemrecursion3} cover every case where
$\xrun$ and $\yrun$ are between $0$ and the number of runs in $x$ and
$y$ respectively, and where $\yoffset$ is between $0$ and the length
of the $\yrun$-th run (or zero if $\yrun = 0$). Moreover, any
recursive call whose inputs satisfy the above conditions will only
rely on recursive subcalls whose inputs still satisfy the above
conditions. 

The recursions can be seen to be acyclic as follows. For a given
subproblem either of the form $\tipToMiddle(x, y, \xrun, \yrun,
\yrunoffset)$ or $\tipToMiddle(y, x, \yrun, \xrun, \xrunoffset)$,
associate with it a tuple $(\xrun, \yrun, t_1, t_2)$, where $t_1$ is
the length of the portion of $x$ considered by the subproblem and
$t_2$ is the length of the portion of $y$ considered by the
subproblem. The recursions given allow a subproblem to rely on another
subproblem only if the latter subproblem's tuple is dominated by the
first subproblem's tuple. This induces a partial ordering on the
subproblems, thereby forcing the recursion to be acyclic. Moreover,
because we never recurse on a case where any of $\xrun, \yrun, t_1,
t_2$ are negative, the recursion must terminate.

Algorithm \ref{algsubproblem} is a recursive algorithm for computing
$\tipToMiddle(x, y, \xrun, \yrun, \yrunoffset)$ using Lemmas
\ref{lemrecursion1}, \ref{lemrecursion2}, \ref{lemrecursion3}. The
algorithm can be transformed into a dynamic program using
memoization. Taking this approach, we can now prove the main result of
the section:
\begin{thm}
Let $x$ and $y$ be strings of length $n$ taken from a metric space
$\Sigma$ with minimum non-zero distance at least one, and let $K$ be
parameter such that $\dtw(x, y) \le K$. Then there exists a dynamic
program for computing $\dtw(x, y)$ in time $O(nK)$. Moreover, if
$\dtw(x, y) > K$, then the dynamic program will return a value greater
than $K$.
\label{thmdp}
\end{thm}
\begin{proof}
  By Lemma \ref{lemreducedcorrespondence}, if
  $x$ has $s$ runs with the final run of length $j$ and $y$ has $t$
  runs with the final run of length $k$, then
  $$\dtw(x, y) = \min\left(\tipToMiddle(x, y, s, t, k),
  \tipToMiddle(y, x, t, s, j)\right).$$

  By Lemma \ref{lemlowerboundsubproblem}, the only subproblems to have
  return value no greater than $K$ are those for which $|\xrun -
  \yrun| \le 2K + 1$. Therefore, if we relax our recursion to return
  $\infty$ whenever $|\xrun - \yrun| > 2K + 1$, we will still obtain
  the correct value for $\dtw(x, y)$ when $\dtw(x, y) \le K$, and we
  will obtain a value greater than $K$ when $\dtw(x,y) > K$. The
  resulting recursive algorithm is given by Algorithm
  \ref{algsubproblem}, and is based on the recursions given in Lemmas
  \ref{lemrecursion1}, \ref{lemrecursion2}, \ref{lemrecursion3}.

  To run the algorithm efficiently, we can use memoization in order to
  treat it as a dynamic program. Because we only recurse on cases
  where $|\xrun - \yrun| \le 2K + 1$, the total time in the spent is
  $O(nK)$. In particular, for subproblems of the form $\tipToMiddle(x,
  y, \xrun, \yrun, \yoffset)$, there are $O(n)$ options for $\yrun$
  and $\yoffset$, for each of which there are $O(K)$ options for
  $\xrun$; similarly, there are $O(nK)$ options for subproblems of the
  form $\tipToMiddle(y, x, \yrun, \xrun, \xoffset)$.
\end{proof}

\begin{cor}
Let $x$ and $y$ be strings of length $n$. Then one can compute
$\dtw(x, y)$ in time $O(n \cdot \dtw(x, y))$.
\end{cor}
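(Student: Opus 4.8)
The plan is to invoke Theorem~\ref{thmdp} repeatedly with a geometrically growing guess for the bound $K$, a standard doubling trick. I would initialize $K = 1$; run the dynamic program of Theorem~\ref{thmdp} on $x$ and $y$ with parameter $K$, obtaining a value $v$ in time $O(nK)$; and then branch on whether $v \le K$. If $v \le K$, I would output $v$, arguing that it equals $\dtw(x,y)$: by Theorem~\ref{thmdp} the program returns a value strictly larger than $K$ whenever $\dtw(x,y) > K$, so $v \le K$ forces $\dtw(x,y) \le K$, and in that regime the same theorem guarantees the program reports $\dtw(x,y)$ exactly. If instead $v > K$, I would replace $K$ by $2K$ and repeat.

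To bound the running time, write $D = \dtw(x,y)$. In the degenerate case $D = 0$ the first iteration (with $K = 1 \ge D$) already returns the answer in time $O(n)$. Otherwise the loop runs through $K = 1, 2, 4, \ldots, 2^t$, where $t$ is the least index with $2^t \ge D$; this value of $K$ satisfies the hypothesis $\dtw(x,y) \le K$ of Theorem~\ref{thmdp}, so the loop terminates after that iteration. Since $2^{t-1} < D$ we have $2^t < 2D$, and the total running time is a geometric series dominated by its last term:
\[
\sum_{i=0}^{t} O\!\left(n \cdot 2^{i}\right) \;=\; O\!\left(n \cdot 2^{t+1}\right) \;=\; O(n \cdot D).
\]
Thus the algorithm runs in time $O(n \cdot \dtw(x,y))$, understood as $O(n)$ in the degenerate case $\dtw(x,y) = 0$.

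I do not expect a substantial obstacle here; the argument is essentially bookkeeping. The one point that deserves a remark is that the termination test ``$v \le K$?'' is meaningful only because Theorem~\ref{thmdp} promises a concrete over-estimate---a value exceeding $K$, rather than a failure symbol---whenever $\dtw(x,y) > K$. Since that is exactly the guarantee supplied by Theorem~\ref{thmdp}, assembling the pieces above completes the proof of the corollary.
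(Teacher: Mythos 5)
Your proposal is correct and is essentially the paper's own argument: repeatedly double the guess $K$, rerun the $O(nK)$ dynamic program of Theorem~\ref{thmdp}, stop at the first $K$ with returned value at most $K$, and bound the total cost by the geometric series $O(n\cdot 1 + n\cdot 2 + \cdots + n\cdot K) = O(n\cdot\dtw(x,y))$. Your extra remarks on the $\dtw(x,y)=0$ case and on why the test ``$v\le K$'' is sound are fine refinements of the same argument.
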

\begin{proof}
  For a parameter $K$, the dynamic program from the proof of Theorem
  \ref{thmdp} correctly computes $\dtw(x, y)$ when $\dtw(x, y) \le K$,
  and otherwise returns a value greater than $K$.
  
  By guessing $K$ to be successive powers of two, one can find the
  first power of two $K$ such that $\dtw(x, y) \le K$ in time $O(n
  \cdot 1 + n \cdot 2 + n \cdot 4 + \cdots + n \cdot K) = O(n \cdot
  K)$, allowing one to then compute $\dtw(x, y)$ in time $O(n \cdot
  K) = O(n \cdot \dtw(x, y))$.
\end{proof}

\begin{algorithm}[h]
\caption{Subproblem $\tipToMiddle(x, y, \xrun, \yrun, \yoffset)$ with
  the relaxation $\tipToMiddle(x, y, \xrun, \yrun, \yoffset) = \infty$
  when $|\xrun - \yrun| > 2K + 1$.}\label{algsubproblem}
\begin{algorithmic}[1]
\Procedure{$\tipToMiddle$}{$x$, $y$, $\xrun$, $\yrun$, $\yoffset$}
\If {$|\xrun - \yrun| > 2K + 1$}
\State \Return $\infty$.
\EndIf
\If{$\xrun > 0$ and $\yoffset > 0$}
\State $\rundiff \gets \text{ distance between each letter in }\xrun\text{-th run in }x\text{ and each letter in }\yrun\text{-th run in }y$.
\State $\xrunlength \gets \text{ length of }\xrun\text{-th run in }x.$
\State $\yrunlength \gets \text{ length of }\yrun\text{-th run in }y.$
\State $\text{answer} \gets \tipToMiddle(x, y, \xrun, \yrun, \yrunoffset - 1) + \rundiff$.
\If{$\xrunlength \le \yrunoffset$}
\State $\text{option2} \gets \tipToMiddle(x, y, \xrun - 1, \yrun, \yrunoffset - \xrunlength) + \rundiff \cdot \xrunlength$.
\State $\text{answer} \gets \min\left(\text{answer}, \text{option2}\right)$
\EndIf
\If{$\xrunlength > \yrunoffset$}
\State $\text{option2} \gets \tipToMiddle(y, x, \yrun - 1, \xrun, \xrunlength - \yrunoffset) + \rundiff \cdot \yrunoffset$.
\State $\text{answer} \gets \min\left(\text{answer}, \text{option2}\right)$
\EndIf
\State \Return $\text{answer}$.
\EndIf
\If{$\xrun > 0$ and $\yrun = 0$}
\State \Return $\infty$.
\EndIf
\If{$\xrun > 0$ and $\yrun > 0$ and $\yoffset = 0$}
\State $\xrunlength \gets \text{ length of }\xrun\text{-th run in }x.$
\State $\yprevrunlength \gets \text{ length of }(\yrun - 1)\text{-th run in }y.$
\State $\text{option1} \gets \tipToMiddle(x, y, \xrun, \yrun - 1, \yprevrunlength)$.
\State $\text{option2} \gets \tipToMiddle(y, x, \yrun - 1, \xrun, \xrunlength)$.
\State \Return $\min(\text{option1}, \text{option2})$.
\EndIf
\If{$\xrun = 0$ and $\yrun = 0$}
\State \Return $0$.
\EndIf
\State \Return $\infty$.
\EndProcedure
\end{algorithmic}
\end{algorithm}

\newpage

\section{Approximating DTW Over Well-Separated Tree Metrics}\label{secdtwapproxfull}

In this section, we present an $\tilde{O}(n^{2 - \epsilon})$-time
$O(n^{\epsilon})$-approximation algorithm for DTW over a
well-separated tree metric with logarithmic depth. We begin by
presenting a brief background on well-separated tree metrics.

\begin{defn}
  Consider a tree $T$ whose vertices form an alphabet $\Sigma$, and
  whose edges have positive weights. $T$ is said to be a
  \emph{well-separated tree metric} if every root-to-leaf path
  consists of edges ordered by nonincreasing weight. The
  \emph{distance} between two nodes $u, v \in \Sigma$ is defined as
  the maximum weight of any edge in the shortest path from $u$ to $v$.

\end{defn}

Well-separated tree metrics are universal in the sense that any metric
$\Sigma$ can be efficiently embedded (in time $O(|\Sigma|^2)$) into a
well-separated tree metric $T$ with expected distortion $O(\log
|\Sigma|)$ \cite{trees}.\footnote{In particular, \cite{trees} shows
  how to embed $\Sigma$ into what is known as a \emph{2-hierarchically
    well-separated tree metric}. These metrics differ slightly from
  our definition of a well-separated tree metric in that the distance
  between two nodes is given by the \emph{sum} of the edge weights in
  the shortest path, rather than the \emph{maximum}. It turns out that
  in the case of $2$-hierarchically well-separated tree metrics, these
  quantities will differ from one another by at most a factor of two,
  however.} That is, if $d_\Sigma(u, v)$ denotes the distance between
two elements of $\Sigma$, and $d_T(u, v)$ denotes the distance in $T$
between the nodes corresponding with $u$ and $v$, then
\begin{equation}
  d_\Sigma(u, v) \le d_T(u, v),
  \label{eq1}
\end{equation}
and
\begin{equation}
  \E[d_T(u, v)] \le O(\log n) \cdot d_\Sigma(u, v).
  \label{eq2}
\end{equation}
Moreover, using the algorithm from Theorem 8 of \cite{bansal2011} to
modify the well-separated tree metric $T$, we may assume without loss
of generality that $T$ has maximum depth $O(\log |\Sigma|)$, allowing
for fast evaluation of distances. 

For strings $x, y \in \Sigma^n$, let $\dtw_T(x, y)$ denote the dynamic
time warping distance after embedding $\Sigma$ into $T$. Then by
\eqref{eq1}, every correspondence between the embedded strings must
cost at least as much as the same correspondence between the original
strings, meaning that
\begin{equation}
  \dtw(x, y) \le \dtw_T(x, y).
  \label{eq3}
\end{equation}
Moreover, by \eqref{eq2}, if we take the optimal correspondence
between $x$ and $y$, then the same correspondence between the embedded
strings will have expected cost at most $O(\log n)$ times as large,
meaning that
\begin{equation}
  \E[\dtw_T(x, y)] \le \dtw(x, y).
  \label{eq4}
\end{equation}

Combined, \eqref{eq3} and \eqref{eq4} tell us that any approximation
algorithm for DTW over well-separated tree metrics will immediately
yield an approximation algorithm over an arbitrary polynomial-size
metric $\Sigma$, with two caveats: the new algorithm will have its
multiplicative error increased by $O(\log n)$; and $O(\log n)$
instances of $\Sigma$ embedded into a well-separated tree metric must
be precomputed for use by the algorithm (requiring, in general,
$O(|\Sigma|^2 \log n)$ preprocessing time). In particular, given $O(\log
n)$ tree embeddings of $\Sigma$, $T_1, \ldots, T_{O(\log n)}$, \eqref{eq3}
and \eqref{eq4} tell us that with high probability $\min_i
\left(\dtw_{T_i}(x, y)\right)$ will be within a logarithmic factor of
$\dtw(x, y)$.

The remainder of the section will be devoted to designing an
approximation algorithm for DTW over a well-separated tree
metric. We will prove the following theorem:

\begin{thm}
  Consider $0 < \epsilon < 1$. Suppose that $\Sigma$ is a
  well-separated tree metric of polynomial size and at most
  logarithmic depth. Moreover, suppose that the aspect ratio of
  $\Sigma$ is at most exponential in $n$ (i.e., the ratio between the
  largest distance and the smallest non-zero distance). Then in time
  $\tilde{O}(n^{2 - \epsilon})$ we can obtain an
  $O(n^{\epsilon})$-approximation for $\dtw(x, y)$ for any $x, y \in
  \Sigma^n$.
  \label{thmdtwapproxtree}
\end{thm}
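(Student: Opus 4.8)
The plan is to approximate $\dtw(x,y)$ by first handling the regime where it is small with the exact low-distance algorithm, and then, in the complementary regime, locating the ``scale'' of $\dtw(x,y)$ via a binary search that repeatedly invokes a gap subroutine built out of $r$-simplification together with the low-distance algorithm. Throughout I normalize so that the smallest non-zero distance in $\Sigma$ is $1$ and the largest is $m$, where $\log m = O(n)$ by the exponential aspect-ratio hypothesis.

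The core gadget is an $(r, n^\epsilon)$-\emph{DTW gap} routine for a scale $r \ge 1$: it returns $0$ only if $\dtw(x,y) < nr$ and returns $1$ only if $\dtw(x,y) \ge n^{1-\epsilon} r$ (either answer is permitted in the overlap). To implement it I pass to $s_r(x)$ and $s_r(y)$ and appeal to Lemma~\ref{lemthreeprops0}: by its second bullet (with $\alpha = n^\epsilon$) a small $\dtw(x,y)$ forces $\dtw(s_r(x), s_r(y)) \le n^{1-\epsilon} r$, while by its third bullet $\dtw(x,y) > nr$ forces $\dtw(s_r(x), s_r(y)) > nr/2 > n^{1-\epsilon} r$; hence it suffices to decide whether $\dtw(s_r(x), s_r(y))$ lies below or well above the threshold $n^{1-\epsilon} r$. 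Since the first bullet of Lemma~\ref{lemthreeprops0} guarantees that distinct letters of the simplified strings are more than $r/4$ apart, Lemma~\ref{lemdiagonalalgdtw0} (applied with $\gamma = r/4$, and with its exponent shrunk by an additive $O(1/\log n)$ so that the detectable threshold $\gamma n^{1-\epsilon'}$ safely exceeds $n^{1-\epsilon} r$ — this costs only a constant factor in running time) either returns $\dtw(s_r(x), s_r(y))$ exactly or certifies it is large, answering the gap question in time $\tilde O(n^{2-\epsilon})$; the extra polylogarithmic factor accounts for evaluating tree distances.

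Given the gap routine, the wrapper proceeds as follows. First run Lemma~\ref{lemdiagonalalgdtw0} directly on $x,y$ with $\gamma = 1$, which either outputs $\dtw(x,y)$ exactly (done) or certifies $\dtw(x,y) \ge n^{1-\epsilon}$. Assuming the latter, call the $(2^0, n^\epsilon/2)$-gap routine; if it returns $0$ then $\dtw(x,y) < n$, and combined with $\dtw(x,y) \ge n^{1-\epsilon}$ the value $n^{1-\epsilon}$ is an $n^\epsilon$-approximation. Otherwise it returns $1$, while $(2^{\lceil \log m \rceil}, n^\epsilon/2)$-gap returns $0$ for free since $\dtw(x,y) \le nm$. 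I then binary-search over $i \in \{0, \dots, \lceil \log m \rceil\}$, maintaining the invariant that the current lower endpoint answers $1$ and the current upper endpoint answers $0$, to find an index $i$ with $(2^{i-1}, n^\epsilon/2)$ answering $1$ and $(2^i, n^\epsilon/2)$ answering $0$; this uses $O(\log \log m) = O(\log n)$ gap calls, hence total time $\tilde O(n^{2-\epsilon})$. For that $i$, the ``$1$'' at scale $2^{i-1}$ gives $\dtw(x,y) \ge \tfrac{2^{i-1} n}{n^\epsilon/2} = 2^i n^{1-\epsilon}$ and the ``$0$'' at scale $2^i$ gives $\dtw(x,y) < 2^i n$, so $2^i n^{1-\epsilon}$ is the desired $n^\epsilon$-approximation.

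I expect the main obstacle to be the correctness of the gap routine rather than the binary-search wrapper: one must verify that the three properties of $r$-simplification (Lemma~\ref{lemthreeprops0}) compose cleanly with the one-sided guarantee of Lemma~\ref{lemdiagonalalgdtw0}, and that the stray constants — the $1/4$ in ``more than $r/4$ apart,'' the factor $2$ loss in the third bullet, and the $1/2$ inside $n^\epsilon/2$ — can all be absorbed by shrinking $\epsilon$ to some $\epsilon' = \epsilon - O(1/\log n)$ without spoiling either the $O(n^\epsilon)$ approximation ratio or the $\tilde O(n^{2-\epsilon})$ running time. Once that bookkeeping is settled, the doubling/binary-search argument pinning down the scale of $\dtw(x,y)$ is routine.
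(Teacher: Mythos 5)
Your proposal is correct and follows essentially the same route as the paper: the same $(r,n^{\epsilon})$-DTW gap subroutine built from the $r$-simplification (Lemma \ref{lemthreeprops0}) plus the low-distance algorithm (Lemma \ref{lemdiagonalalgdtw0}), followed by the same binary search over scales $2^i$ for $i \le \lceil \log m \rceil$. Your extra care about absorbing the constant factors (the $r/4$ separation and the factor-$2$ loss) is a legitimate bookkeeping point that the paper handles implicitly, and it does not change the argument.
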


An important consequence of the theorem occurs for DTW over the reals.

\begin{cor}
  Consider $0 < \epsilon < 1$. Suppose that $\Sigma$ is an
  $O(n)$-point subset of the reals with polynomial aspect ratio. Then
  in time $\tilde{O}(n^{2 - \epsilon})$ we can obtain an
  $O(n^{\epsilon})$-approximation for $\dtw(x, y)$ with high
  probability for any $x, y \in \Sigma^n$.
\end{cor}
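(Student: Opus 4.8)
The plan is to reduce to Theorem~\ref{thmdtwapproxtree} by embedding $\Sigma$ into a well-separated tree metric of logarithmic depth, and then to boost the resulting approximation so that it holds with high probability, by repetition. First I would note that since $x$ and $y$ have length $n$, at most $2n$ elements of $\Sigma$ actually occur in them, so without loss of generality $|\Sigma| = O(n)$; after scaling, the minimum non-zero distance in $\Sigma$ is $1$ and the maximum is $\Delta = n^{O(1)}$. The core of the argument is then the folklore embedding of an $O(n)$-point subset of $\mathbb{R}$ into a well-separated tree metric with expected distortion $O(\log n)$.

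I would construct that embedding from a randomly shifted dyadic hierarchy. Pick a uniformly random real $\beta$ and, for each scale $j \in \{0, 1, \ldots, L\}$ with $L = \lceil \log \Delta \rceil + O(1) = O(\log n)$, partition $\mathbb{R}$ into the intervals $[\beta + k2^{j},\, \beta + (k+1)2^{j})$, $k \in \ZZ$, choosing $L$ large enough that all of $\Sigma$ lies in a single level-$L$ interval. These partitions are nested as $j$ grows. The tree $T$ has one node per nonempty level-$j$ interval, the level-$j$ node being the parent of the level-$(j-1)$ nodes it contains, the edge between them having weight $2^{j}$; level-$0$ nodes are identified with the points of $\Sigma$. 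Then $T$ is well-separated (root-to-leaf weights are $2^{L} \ge 2^{L-1} \ge \cdots \ge 2^{1}$), has depth $L = O(\log n)$, has $O(nL) = O(n\log n)$ nodes after discarding empty intervals (hence polynomial size), has aspect ratio at most $2^{L} = n^{O(1)}$, and is constructible in $O(n\log n)$ time by sorting $\Sigma$ and walking up each point's $O(\log n)$-length ancestor chain.

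Next I would check the two metric inequalities needed to pass $T$ to Theorem~\ref{thmdtwapproxtree} usefully, analogous to \eqref{eq1}--\eqref{eq2} of this section. For distinct $u, v \in \Sigma$, let $J^{*}$ be the least scale at which $u$ and $v$ share an interval, so $d_T(u, v) = 2^{J^{*}}$. Domination $d_T(u, v) \ge |u - v|$ holds because $u$ and $v$ lie in a common interval of length $2^{J^{*}}$. For the upper bound, the probability over $\beta$ that $u$ and $v$ fall in distinct level-$j$ intervals is at most $|u - v|/2^{j}$ (the chance that one of the spacing-$2^{j}$ grid points lands in the interval between them), so $\Pr[J^{*} \ge j] \le \min(1, |u-v|/2^{j-1})$, whence $\E[d_T(u,v)] = \E[2^{J^{*}}] = O(|u-v|\log\Delta) = O(\log n)\cdot|u-v|$, each of the $O(\log\Delta)$ scales contributing $O(|u-v|)$. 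Summing these two inequalities over an optimal correspondence, exactly as \eqref{eq3}--\eqref{eq4} are derived, yields $\dtw(x,y) \le \dtw_T(x,y)$ always and $\E[\dtw_T(x,y)] \le O(\log n)\,\dtw(x,y)$.

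Finally I would assemble the algorithm. Draw $t = \Theta(\log n)$ independent shifts, producing trees $T_1, \ldots, T_t$. By Theorem~\ref{thmdtwapproxtree}, applicable since each $T_i$ has polynomial size, logarithmic depth, and polynomial (hence at most exponential) aspect ratio, compute in $\tilde{O}(n^{2-\epsilon})$ time values $A_i$ with $\dtw_{T_i}(x,y) \le A_i \le O(n^{\epsilon})\cdot\dtw_{T_i}(x,y)$ (any multiplicative approximation is made one-sided by rescaling), and return $A = \min_i A_i$. Deterministically $\dtw(x,y) \le \min_i \dtw_{T_i}(x,y) \le A$; and by Markov's inequality each $\dtw_{T_i}(x,y) \le c\log n\cdot\dtw(x,y)$ with probability $\ge 1/2$, so $\min_i \dtw_{T_i}(x,y) \le c\log n\cdot\dtw(x,y)$ with probability $1 - 2^{-t} = 1 - n^{-\Omega(1)}$, giving $A \le O(n^{\epsilon}\log n)\cdot\dtw(x,y)$ with high probability. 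The running time is $t \cdot \tilde{O}(n^{2-\epsilon}) = \tilde{O}(n^{2-\epsilon})$, establishing the corollary (the extra logarithmic factor being the distortion of the tree embedding). I expect the main obstacle to be the expected-distortion bound: wringing pointwise domination \emph{and} an $O(\log n)$ expected blow-up out of the one randomly shifted hierarchy, while keeping the tree's depth logarithmic so that Theorem~\ref{thmdtwapproxtree} applies --- the depth being automatic here only because polynomial aspect ratio leaves just $O(\log n)$ scales. The remaining pieces (the reduction to $|\Sigma| = O(n)$, well-separatedness, the size and build-time bounds, and the lift from the metric inequalities to $\dtw$) are routine bookkeeping mirroring the rest of this section.
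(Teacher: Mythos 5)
Your proposal is correct and follows the paper's overall strategy exactly: embed the $O(n)$-point subset of $\mathbb{R}$ into a well-separated tree metric of logarithmic depth with pointwise domination and $O(\log n)$ expected distortion, invoke Theorem~\ref{thmdtwapproxtree} on each of $\Theta(\log n)$ independent embeddings, and return the minimum, using Markov's inequality together with independence to get the high-probability guarantee. The one place where you genuinely diverge is the construction of the embedding itself. The paper builds the tree top-down by recursive random splitting: a subproblem whose range has length $r$ picks a pivot uniformly from the middle half of that range, recurses on the two sides, and joins the children to the pivot with edges of weight $r$; logarithmic depth follows because the range contracts by a factor of at least $3/4$ per level and the aspect ratio is polynomial, and the expected distortion is bounded by summing, over the $O(\log n)$ subproblems containing a fixed point $a$, the separation probability $O(|a-b|/r)$ times the penalty $r$. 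You instead use a single uniformly shifted dyadic hierarchy with edge weight $2^{j}$ at scale $j$ and bound $\E[2^{J^{*}}]$ by summing $2^{j}\Pr[J^{*}\ge j]\le 2|u-v|$ over the $O(\log \Delta)$ scales. These are interchangeable folklore variants of the same idea; yours makes well-separatedness and the depth bound immediate (the scales are fixed in advance rather than emerging from the contraction of the range), at the cost of a slightly larger tree ($O(n\log n)$ rather than $O(n)$ nodes), which is immaterial here. The surrounding bookkeeping --- lifting domination and expected distortion to $\dtw$ exactly as in \eqref{eq3}--\eqref{eq4}, the $O(n\log n)$ construction time, and the min-over-repetitions boosting --- matches the paper's argument.
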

\begin{proof}
  Without loss of generality, $\Sigma \subset [0, n^c]$ for some
  constant $c$, and no two points in $\Sigma$ have distance less than
  one from each other.
  
 The key observation is that there is an $O(n\log n)$-time embedding
 with $O(\log n)$ expected distortion from $\Sigma$ to a
 well-separated tree metric of size $O(n)$ with logarithmic
 depth. Such an embedding can be constructed by a random partitioning
 process, as follows\footnote{This construction appears to be somewhat
   of a folklore result. It is a natural consequence of ideas from
   \cite{DefnExpectedDistortion}.}. If $m_1$ and $m_2$ are the
 smallest and largest elements of $\Sigma$, respectively, and $r = m_2
 - m_1$ is the size of the range in which $\Sigma$ lies, then first
 select a pivot $p$ at random from the range $[m_1 + r / 4, m_2 - r /
   4]$. Next recursively constructing a subtree $L$ for the subset of
 $\Sigma$ less than $p$, and then recursively constructing a subtree
 $R$ for the subset of $\Sigma$ greater than $p$. Finally, return the
 tree $T$ in which $p$ is the root, $L$ is the left subtree, $R$ is
 the right subtree, and the edges from $p$ to its children have weight
 $r$. One can easily verify that $T$ is a well-separated tree metric;
 that distances in $T$ dominate distances in $\Sigma$; that the depth
 of $T$ is $O(\log n)$; and that $T$ can be recursively constructed in
 time $O(n \log n)$. It is slightly more subtle to show that for $a, b
 \in \Sigma$, the expected distance $d_T(a, b)$ between $a$ and $b$ in
 $T$ is at most $O(\log n) \cdot |a - b|$. Note that $d_T(a, b)$ is
 the length $r$ of the range for the recursive subproblem in which $a$
 and $b$ are split from one-another. For each recursive subproblem
 containing $a$, the probability that $a$ and $b$ will be split by the
 pivot in that subproblem is at most $O(|a - b| / r)$, where $r$ is
 the range size of the subproblem. This means that the expected
 contribution to $d_T(a, b)$ by that subproblem is $O(|a - b| / r)
 \cdot r \le O(|a - b|)$. Since there are at most $O(\log n)$
 recursive subproblems containing $a$, the expected value of $d_T(a,
 b)$ is at most $O(\log n) \cdot |a - b|$. This completes the analysis
 of the embedding.

 Combining the above embedding with Theorem \ref{thmdtwapproxtree}
 (and taking the minimum output of the algorithm over $O(\log n)$
 independent iterations in order to obtain a high probability result)
 yields the desired result.
\end{proof}

In proving Theorem \ref{thmdtwapproxtree}, our approximation algorithm
will take advantage of what we refer to as the
\emph{$r$-simplification} of a string over a well-separated tree
metric.
\begin{defn}
Let $T$ be a well-separated tree metric whose nodes form an alphabet $\Sigma$. For a
string $x \in \Sigma^n$, and for any $r \ge 1$, the
\emph{$r$-simplification} $s_r(x)$ is constructed by replacing
each letter $l \in x$ with its highest ancestor $l'$ in $T$ that can
be reached from $l$ using only edges of weight at most $r / 4$.
\end{defn}

Our approximation algorithm will apply the low-distance regime
algorithm from the previous section to $s_r(x)$ and $s_r(y)$ for
various $r$ in order to extract information about $\dtw(x, y)$. Notice
that using our low-distance regime algorithm for DTW, we get the
following useful lemma for free:

\begin{lem}
  Consider $0 < \epsilon < 1$. Suppose that for all pairs $l_1, l_2$
  of distinct letters in $\Sigma$, $d(l_1, l_2) \ge \gamma$. Then for
  $x, y \in \Sigma^n$ there is an $O(n^{2 - \epsilon})$ time algorithm
  which either computes $\dtw(x, y)$ exactly, or concludes that $\dtw(x,
  y) > \gamma n^{1 - \epsilon}$.
  \label{lemdiagonalalgdtw}
\end{lem}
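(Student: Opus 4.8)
The plan is to reduce directly to Theorem~\ref{thmdp} after a trivial rescaling. Since the hypothesis guarantees that every non-zero distance in $\Sigma$ is at least $\gamma$, consider the metric space $\Sigma'$ obtained from $\Sigma$ by dividing every distance by $\gamma$. Then $\Sigma'$ has minimum non-zero distance at least $1$, and for all $x, y \in \Sigma^n$ we have $\dtw_{\Sigma'}(x, y) = \dtw(x, y) / \gamma$, since every correspondence has its cost scaled by exactly $1/\gamma$. Thus it suffices to work in $\Sigma'$ and multiply the final answer back by $\gamma$.

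Next, set the threshold parameter to $K = \lceil n^{1 - \epsilon} \rceil$ and run the dynamic program of Theorem~\ref{thmdp} on $x$ and $y$ over $\Sigma'$ with this value of $K$. By that theorem, the dynamic program runs in time $O(nK) = O(n \cdot n^{1-\epsilon}) = O(n^{2-\epsilon})$ regardless of the actual value of $\dtw_{\Sigma'}(x, y)$, because the running time is governed by the $O(nK)$ subproblems it examines rather than by the outcome. If $\dtw_{\Sigma'}(x, y) \le K$, the dynamic program returns $\dtw_{\Sigma'}(x, y)$ exactly, and we output $\gamma \cdot \dtw_{\Sigma'}(x, y) = \dtw(x, y)$. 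If instead $\dtw_{\Sigma'}(x, y) > K$, the dynamic program returns a value greater than $K$; in that case we report that $\dtw(x, y) = \gamma \cdot \dtw_{\Sigma'}(x, y) > \gamma K \ge \gamma n^{1 - \epsilon}$. This is exactly the dichotomy claimed by the lemma.

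I do not anticipate a genuine obstacle here: the only things to be careful about are (i) checking that the rescaling is an exact isometry of the correspondence-cost function, so that the DP's guarantees transfer verbatim, and (ii) confirming that Theorem~\ref{thmdp} indeed gives the $O(nK)$ time bound unconditionally (i.e., even when $\dtw > K$), which is stated there. One minor bookkeeping remark worth including is that arithmetic on the rescaled distances can be handled by keeping the original integer/rational distances and comparing against $\gamma K$ rather than literally dividing, so no issues of numerical precision arise; alternatively, in the intended application (Theorem~\ref{thmdtwapproxtree}) the distances come from a tree metric and the only overhead is the logarithmic cost of evaluating a distance, which is absorbed into the $\tilde{O}$ notation.
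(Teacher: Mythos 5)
Your proposal is correct and matches the paper's approach: the paper's proof is simply ``This follows immediately from Theorem \ref{thmdp},'' and your rescaling by $\gamma$ together with the choice $K = \lceil n^{1-\epsilon}\rceil$ is exactly the intended instantiation. The dichotomy you draw (a returned value $\le K$ certifies exactness, a returned value $> K$ certifies $\dtw(x,y) > \gamma n^{1-\epsilon}$) and the unconditional $O(nK)$ running time are both sound.
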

\begin{proof}
  This follows immediately from Theorem \ref{thmdp}.
\end{proof}

The next lemma states three important properties of
$r$-simplifications. We remark that the same lemma appears in our
concurrent work on the communication complexity of DTW, in which we
use the lemma in designing an efficient
one-way communication protocol \cite{dtwcomm}.

\begin{lem}
  Let $T$ be a well-separated tree metric with distance function $d$
  and whose nodes form the alphabet $\Sigma$. Consider strings $x$ and
  $y$ in $\Sigma^{n}$.

  Then the following three properties of $s_r(x)$ and $s_r(y)$ hold:
  \begin{itemize}
  \item For every letter $l_1 \in s_r(x)$ and every letter $l_2 \in
    s_r(y)$, if $l_1 \neq l_2$, then $d(l_1, l_2) > r / 4$.
  \item For all $\alpha$, if $\dtw(x, y) \le nr / \alpha$ then $\dtw(s_r(x), s_r(y)) \le nr / \alpha$.
  \item If $\dtw(x, y) > nr$, then $\dtw(s_r(x), s_r(y)) > nr/2$. 
    \end{itemize}
  \label{lemthreeprops}
\end{lem}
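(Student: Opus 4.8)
\textbf{Proof proposal for Lemma \ref{lemthreeprops}.}

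The plan is to verify each of the three bullet points separately, working directly from the definition of the $r$-simplification. The first two parts are essentially definitional bookkeeping, while the third part requires a short argument bounding how much a correspondence can change in cost when we pass from $x, y$ to $s_r(x), s_r(y)$.

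For the \emph{first part}, fix $l_1 \in s_r(x)$ and $l_2 \in s_r(y)$ with $l_1 \neq l_2$. By construction, $l_1$ is the highest ancestor of some letter reachable using only edges of weight at most $r/4$, and similarly for $l_2$; in particular, every edge on the path from $l_1$ to its root along the tree that has weight at most $r/4$ has already been traversed, so the edge on the path from $l_1$ towards its parent has weight $> r/4$ (or $l_1$ is the root). I would argue that in a well-separated tree metric, if $l_1 \neq l_2$ then the shortest path between them must use at least one edge incident to $l_1$ or to $l_2$ going ``upward'', and since both such edges have weight exceeding $r/4$, and distances are defined as the maximum edge weight on the path, we get $d(l_1, l_2) > r/4$. (One has to be slightly careful about the case where one of $l_1, l_2$ is an ancestor of the other; but then the connecting path still contains the upward edge out of the descendant, which has weight $> r/4$.) For the \emph{second part}, observe that replacing a letter by an ancestor can only decrease (or preserve) pairwise distances within the tree metric, because the distance to any fixed third node is the max edge weight on a path, and moving to an ancestor only removes edges from relevant paths. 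Hence any correspondence between $x$ and $y$ costs at least as much as the same correspondence between $s_r(x)$ and $s_r(y)$ (the $r$-simplification is applied positionwise, so correspondences transfer verbatim), giving $\dtw(s_r(x), s_r(y)) \le \dtw(x, y) \le nr/\alpha$.

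The \emph{third part} is the one requiring a genuine (though short) estimate, and I expect it to be the main obstacle — it is the only place where we bound the cost increase in the ``wrong'' direction, from $s_r$ back to the original strings. The idea, already sketched in the surrounding text, is: take an optimal correspondence $C$ between $s_r(x)$ and $s_r(y)$, and interpret it as a correspondence between $x$ and $y$ (the same run-extension pattern). At each matched position, we replace a letter of $s_r(x)$ (resp.\ $s_r(y)$) by the original letter of $x$ (resp.\ $y$); since each original letter is connected to its simplified image by edges of weight at most $r/4$, and distances in a well-separated tree metric obey the triangle inequality with the ``max'' aggregation, the cost at each matched pair increases by at most $2 \cdot (r/4) = r/2$. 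Summing over the $|C| \le 2n$ matched positions (using the fact established in the preliminaries that an optimal correspondence has length at most $2n$), the total cost increases by at most $|C| \cdot r/2 \le nr$... I would sharpen this: actually the increase per position is at most $r/4$ on the $x$-side plus $r/4$ on the $y$-side, and more carefully one gets $\dtw(x,y) \le \dtw(s_r(x), s_r(y)) + |C| \cdot \tfrac{r}{4}$ when only one side needs correcting per pair, but to be safe I will use the bound $\dtw(x, y) \le \dtw(s_r(x), s_r(y)) + nr/2$ as stated in the text. Contrapositively, if $\dtw(s_r(x), s_r(y)) \le nr/2$ then $\dtw(x, y) \le nr$, so $\dtw(x,y) > nr$ forces $\dtw(s_r(x), s_r(y)) > nr/2$, as desired. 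The one subtlety to nail down carefully is the claim that each letter $l$ of $x$ satisfies $d(l, s_r(l)) \le r/4$, which is immediate from the definition since $s_r(l)$ is reached from $l$ via edges of weight $\le r/4$ and the distance is the max such weight; combined with the triangle inequality this controls the per-position cost change.
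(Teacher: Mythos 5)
Your overall structure mirrors the paper's proof exactly (part 1 from the definition, part 2 by transferring the optimal correspondence of $x,y$ forward, part 3 by transferring the optimal simplified correspondence back and using $|C| \le 2n$), but the quantitative heart of part 3 is not actually established. Your first estimate — ordinary triangle inequality, $d(l_1,l_2) \le d(l_1',l_2') + d(l_1,l_1') + d(l_2,l_2') \le d(l_1',l_2') + r/2$ per position — only yields $\dtw(x,y) \le \dtw(s_r(x),s_r(y)) + nr$, hence merely $\dtw(s_r(x),s_r(y)) > 0$, which is useless here. The sharpening to $r/4$ per position is what the lemma actually needs, and your justification for it ("when only one side needs correcting per pair") is an assertion, not an argument; falling back on "the bound as stated in the text" is circular, since that bound is the content being proved. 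The paper closes this by first proving a dichotomy for each matched pair: if $d(l_1,l_2) \le r/4$ then $l_1' = l_2'$ (so the simplified cost at that position is $0$ and the loss is at most $r/4$), and if $d(l_1,l_2) > r/4$ then $d(l_1',l_2') = d(l_1,l_2)$ exactly (no loss). Either way $d(l_1,l_2) \le d(l_1',l_2') + r/4$, and summing over at most $2n$ positions gives $nr/2$. An equivalent fix in your language: the max-aggregated tree distance is an ultrametric, so $d(l_1,l_2) \le \max\bigl(d(l_1,l_1'),\, d(l_1',l_2'),\, d(l_2',l_2)\bigr) \le \max\bigl(d(l_1',l_2'),\, r/4\bigr)$, which again gives the $r/4$ additive loss per position. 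One of these two arguments must appear explicitly.

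A smaller remark on part 2: the claim that "replacing a letter by an ancestor only removes edges from relevant paths" is false for arbitrary ancestors (promoting $l_1$ above the least common ancestor of $l_1$ and $l_2$ adds heavy upward edges to the connecting path and can strictly increase the distance). It happens to be harmless for the specific ancestors chosen by $s_r$, precisely because of the same dichotomy above: if the simplification of $l_1$ climbs past the LCA, then every edge on the original $l_1$--$l_2$ path has weight at most $r/4$, forcing $l_1' = l_2'$. So proving the dichotomy once would repair both part 2 and part 3 of your write-up.
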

\begin{proof}
The first part of the lemma follows directly from the definitions of
$s_r(x)$ and $s_r(y)$.

Let $C$ be the optimal correspondence between $x$ and $y$, and let
$C'$ be the same correspondence between $s_r(x)$ and $s_r(y)$. Suppose
$C$ matches some letter $l_1 \in x$ to a letter $l_2$ in $y$. Let
$l_1'$ and $l_2'$ be the corresponding letters in $s_r(x)$ and
$s_r(y)$. Notice that if $d(l_1, l_2) \le r / 4$, then
\begin{equation}
  l_1' = l_2',
\label{eqequalityofl}
\end{equation}
and that if $d(l_1, l_2) > r / 4$, then
\begin{equation}
  d(l_1', l_2') = d(l_1, l_2).
  \label{eqequalityofdist}
\end{equation}

By \eqref{eqequalityofl} and \eqref{eqequalityofdist}, the
correspondence $C'$ costs no more than $C$, meaning that $\dtw(s_r(x),
s_r(y)) \le \dtw(x, y)$. Therefore, if $\dtw(x, y) \le nr / \alpha$
then $\dtw(s_r(x), s_r(y)) \le nr / \alpha$, completing the second
part of the lemma.

Now suppose that $\dtw(x, y) > nr$. Consider an optimal correspondence
$D$ between $s_r(x)$ and $s_r(y)$, and assume without loss of
generality that $D$ is of length no more than $2n$. Equations
\eqref{eqequalityofl} and \eqref{eqequalityofdist} tell us that the
cost of $D$ can be no more than $2n \cdot r / 4$ smaller than the cost
of the same correspondence between $x$ and $y$. Since $\dtw(x, y) >
nr$, it follows that $\dtw(s_r(x), s_r(y)) > nr - 2n \cdot r / 4 = nr
/ 2$, completing the third part of the lemma.
\end{proof}

We can now prove Theorem \ref{thmdtwapproxtree}.

\begin{proof}[Proof of Theorem \ref{thmdtwapproxtree}]
Without loss of generality, the minimum non-zero distance in $\Sigma$ is 1 and
the largest distance is some value $m$, which is at most exponential
in $n$.

We begin by defining the \emph{$(r, n^{\epsilon})$-DTW gap} problem
for $r \ge 1$, in which for two strings $x$ and $y$ a return value of
0 indicates that $\dtw(x, y) < nr$ and a return value of 1 indicates
that $\dtw(x, y) \ge n^{1 - \epsilon}r$. By Lemma \ref{lemthreeprops},
in order to solve the $(r, n^{\epsilon})$-DTW gap problem for $x$ and
$y$, it suffices to determine whether $\dtw(s_r(x), s_r(y)) \le n^{1 -
  \epsilon}r$. Moreover, because the minimum distance between distinct
letters in $s_r(x)$ and $s_r(y)$ is at least $r / 4$, this can be done
in time $O(n^{2 - \epsilon} \log n)$ using Lemma
\ref{lemdiagonalalgdtw}.\footnote{The logarithmic factor comes from
  the fact that evaluating distances between points may take
  logarithmic time in our well-separated tree metric.}

In order to obtain an $n^{\epsilon}$-approximation for $\dtw(x, y)$,
we begin by using Lemma \ref{lemdiagonalalgdtw} to either determine
$\dtw(x, y)$ or to determine that $\dtw(x, y) \ge n^{1 -
  \epsilon}$. For the rest of the proof, suppose we are in the latter
case, meaning that we know $\dtw(x, y) \ge n^{1 - \epsilon}$.

We will now consider the $(2^i, n^{\epsilon} / 2)$-DTW gap problem for
$i \in \{0, 1, 2, \ldots, \lceil \log m \rceil\}$. If the $(2^0,
n^{\epsilon} / 2)$-DTW gap problem returned 0, then we would know that
$\dtw(x, y) \le n$, and thus we could return $n^{1 - \epsilon}$ as an
$n^{\epsilon}$-approximation for $\dtw(x, y)$. Therefore, we need only
consider the case where the $(2^0, n^{\epsilon} / 2)$-DTW gap returns
$1$. Moreover we may assume without computing it that $(2^{ \lceil
  \log m \rceil}, n^{\epsilon}/2)$-DTW gap returns 0 since trivially
$\dtw(x, y)$ cannot exceed $nm$. Because $(2^i, n^{\epsilon} / 2)$-DTW
gap returns 1 for $i = 0$ and returns $0$ for $i = \lceil \log m
\rceil$, there must be some $i$ such that $(2^{i - 1}, n^{\epsilon} /
2)$-DTW gap returns $1$ and $(2^{i}, n^{\epsilon} / 2)$-DTW gap
returns 0. Moreover, we can find such an $i$ by performing a binary
search on $i$ in the range $R = \{0, \ldots, \lceil \log m \rceil\}$. We
begin by computing $(2^i, n^{\epsilon} / 2)$-DTW gap for $i$ in the
middle of the range $R$. If the result is a one, then we can recurse
on the second half of the range; otherwise we recurse on the first
half of the range. Continuing like this, we can find in time
$\tilde{O}(n^{2 - \epsilon} \log \log m) = \tilde{O}(n^{2 -
  \epsilon})$ some value $i$ for which $(2^{i - 1}, n^{\epsilon} /
2)$-DTW gap returns $1$ and $(2^{i}, n^{\epsilon} / 2)$-DTW gap
returns 0. Given such an $i$, we know that $\dtw(x, y) \ge \frac{2^{i
    - 1}n}{n^{\epsilon} / 2} = 2^i n^{1 - \epsilon}$ and that $\dtw(x,
y) \le 2^in$. Thus we can return $2^i n^{1 - \epsilon}$ as an
$n^{\epsilon}$ approximation of $\dtw(x, y)$.
\end{proof}
\section{Reducing Edit Distance to DTW and LCS}\label{secappendixreductions}

In this section we present a simple reduction from edit distance over
an arbitrary metric to DTW over the same metric.
At the end of the section, we
prove as a corollary a conditional lower bound for DTW over
three-letter Hamming space, prohibiting any algorithm from running in
strongly subquadratic time.

Surprisingly, the \emph{exact same reduction}, although with a
different analysis, can be used to reduce the computation of edit
distance (over generalized Hamming space) to the computation of
longest-common-subsequence length (LCS). Since computing LCS is
equivalent to computing edit distance without substitutions, this
reduction can be interpreted as proving that edit distance without
substitutions can be used to efficiently simulate edit distance with
substitutions, also known as \emph{simple edit distance}.

Recall that for a metric $\Sigma \cup \{\emptyset\}$, we define the
edit distance between two strings $x, y \in \Sigma^n$ such that the
cost of a substitution from a letter $l_1$ to $l_2$ is $d(l_1, l_2)$,
and the cost of a deletion or insertion of a letter $l$ is $d(l,
\emptyset)$. Additionally, define the \emph{simple edit distance}
$\ed_S(x, y)$ to be the edit distance using only insertions and
deletions.

For a string $x \in \Sigma^n$, define the \emph{padded string} $p(x)$
of length $2n + 1$ to be the string $\emptyset x_1 \emptyset x_2
\emptyset x_3 \cdots x_n \emptyset$. In particular, for $i \le 2n +
1$, $p(x)_i = \emptyset$ when $i$ is odd, and $p(x)_i = x_{i / 2}$
when $i$ is even. The following theorem proves that $\dtw(p(x), p(y))
= \ed(x, y)$.

\begin{thm}
  Let $\Sigma \cup \{\emptyset\}$ be a metric. Then for any $x, y \in
  \Sigma^n$,
  $\dtw(p(x), p(y)) = \ed(x, y).$
  \label{thmreduction0}
\end{thm}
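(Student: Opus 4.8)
The plan is to establish the identity $\dtw(p(x),p(y))=\ed(x,y)$ by proving the two inequalities separately, with the structural normalization about which runs get extended doing the heavy lifting in the harder direction. First I would set up notation: recall that $p(x)=\emptyset x_1\emptyset x_2\cdots x_n\emptyset$, so every non-$\emptyset$ character in $p(x)$ is its own run of length one, flanked on both sides by $\emptyset$-runs, and the $\emptyset$'s form $n+1$ separate singleton runs. Thus in any correspondence between $p(x)$ and $p(y)$, the only runs that could conceivably be extended to length $>1$ are $\emptyset$-runs or the singleton $x_i$/$y_j$ runs.

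For the easy direction, $\dtw(p(x),p(y))\le\ed(x,y)$, I would take an optimal edit sequence realizing $\ed(x,y)$ and read off the induced alignment of $x$ against $y$ (substitutions matched to each other, insertions/deletions matched to a gap). I convert this alignment into a correspondence between $p(x)$ and $p(y)$: a substitution $x_i\leftrightarrow y_j$ becomes a match of the two corresponding non-$\emptyset$ characters at cost $d(x_i,y_j)$, and a deletion of $x_i$ (resp.\ insertion of $y_j$) becomes a match of $x_i$ against one of the interleaved $\emptyset$'s on the other side at cost $d(x_i,\emptyset)=|x_i|$. The interleaving $\emptyset$'s provide exactly the room needed to absorb each indel by extending an $\emptyset$-run by one; one has to check that there are always enough $\emptyset$-slots to route all the indels consistently in order, which follows because $p$ places an $\emptyset$ between every pair of consecutive letters and at both ends. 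The total cost of this correspondence equals the cost of the edit sequence, giving the inequality.

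For the harder direction, $\ed(x,y)\le\dtw(p(x),p(y))$, the key is the normalization asserted in the proof sketch: there is an optimal correspondence between $p(x)$ and $p(y)$ in which \emph{only} $\emptyset$-runs are ever extended. The argument is a local exchange: if such an optimal correspondence extends a non-$\emptyset$ character $a$ of $p(x)$ so that its copies overlap a non-$\emptyset$ character $b$ of $p(y)$, then since $b$ is immediately preceded in $p(y)$ by an $\emptyset$, the extended run of $a$'s must also overlap that $\emptyset$, incurring cost $\ge d(a,\emptyset)+d(a,b)\ge d(\emptyset,b)$ by the triangle inequality; we can instead leave $a$ unextended, match $a$ to the $\emptyset$ preceding $b$ (or to $b$ itself, whichever is cheaper), and extend a neighboring $\emptyset$-run of $p(x)$ to cover $b$, never increasing the cost and strictly decreasing the number of extended non-$\emptyset$ runs. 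Iterating drives that count to zero. Once only $\emptyset$-runs are extended, every non-$\emptyset$ character of $p(x)$ is matched either to a non-$\emptyset$ character of $p(y)$ (read as a substitution) or to an $\emptyset$ (read as a deletion), and symmetrically each non-$\emptyset$ of $p(y)$ matched to an $\emptyset$ of $p(x)$ is read as an insertion; the $\emptyset$-to-$\emptyset$ matches contribute zero and are discarded. Because the correspondence respects left-to-right order, the resulting edit operations form a valid edit sequence from $x$ to $y$ whose cost equals the correspondence's cost, yielding $\ed(x,y)\le\dtw(p(x),p(y))$.

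The main obstacle is the exchange argument in the hard direction: one must be careful that un-extending a non-$\emptyset$ run and rerouting through $\emptyset$'s can always be done without colliding with other extended runs or with the fixed endpoints of the correspondence, and that the bookkeeping (which $\emptyset$ absorbs which indel) stays consistent globally, not just at the single character being fixed. I would handle this by choosing, among all optimal correspondences, one that first minimizes total length (at most $2(2n+1)$ by the remark in the preliminaries) and then minimizes the number of extended non-$\emptyset$ runs, and argue that any such correspondence must already have zero extended non-$\emptyset$ runs, since otherwise the local move above produces a contradiction with minimality. Everything else is a routine translation between the correspondence picture and the edit-sequence picture, so the whole proof is short modulo this normalization step.
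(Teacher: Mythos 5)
Your proposal is correct and follows essentially the same route as the paper's proof: the same normalization (an optimal correspondence need only extend $\emptyset$-runs, justified by the triangle-inequality exchange $d(a,\emptyset)+d(a,b)\ge d(\emptyset,b)$ against the $\emptyset$ adjacent to $b$), and the same translation between correspondences and edit sequences in both directions. The only cosmetic difference is your potential function (number of extended non-$\emptyset$ runs, under a minimum-length optimal correspondence) versus the paper's (total length of extended non-$\emptyset$ runs, peeled off two positions at a time); both terminate.
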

\begin{proof}
  We begin by showing that there is an optimal correspondence $C$
  between $p(x)$ and $p(y)$ in which the only extended runs are those
  consisting of the letter $\emptyset$. Consider an arbitrary optimal
  correspondence $D$ between $p(x)$ and $p(y)$, and assume without
  loss of generality that no two runs which are extended in the
  correspondence are matched by the correspondence to overlap. Suppose
  a run $r$ consisting of a letter $a \neq \emptyset$ in $x$ is
  extended in $D$. If the first letter in the run $r$ is matched with
  the letter $\emptyset$ in $y$, then the correspondence $D$ could be
  improved by replacing the first letter in $r$ with an $\emptyset$
  character, which can be achieved by un-extending the run $r$ by one,
  and instead further extending the preceding run of $\emptyset$'s by
  one. Consequently, the first letter in the run $r$ must match to
  some letter $l_1 \neq \emptyset$ in $y$, and similarly the final
  letter in the run $r$ must match to some letter $l_2 \neq \emptyset$
  in $y$, meaning the extended run must be at least three letters
  long. Now consider the first two letters $l_1, \emptyset$ to which
  the run $r$ is matched by $D$. By the triangle inequality, $d(l_1,
  a) + d(a, \emptyset) \ge d(l_1, \emptyset)$ (recall that $a$ is the
  letter which populates the run $r$). It follows that if we un-extend
  the run $r$ by two letters, and instead further extend the run of
  $\emptyset$'s preceding $r$, then we arrive at a correspondence $D'$
  no more expensive than $D$. Moreover, the sum of the lengths of runs
  of non-$\emptyset$ elements in $D'$ has been reduced by two from the
  same sum for $D$. Repeating this process as many times as necessary,
  we can arrive at a correspondence $C$ in which no runs containing
  non-$\emptyset$ letters are extended, as desired. Because $D$ was an
  optimal correspondence, and $C$ costs no more than $D$, $C$ must
  also be optimal.

  Using the correspondence $C$, we can now prove that $\ed(x, y) \le
  \dtw(p(x), p(y))$. In particular, we can construct a sequence of
  edits between $x$ and $y$ at most as expensive as the correspondence
  $C$. To do this, we first delete from $x$ any letter $l$ in $x$
  which is matched by $C$ to an $\emptyset$ character, and we do the
  same for $y$. Call the resulting strings $x'$ and $y'$. Notice that
  for each letter $x_i'$ in $x'$, $C$ must match $x_i'$ to the
  corresponding letter $y_i'$ in $y'$. Consequently, if we perform
  substitutions in order to transform each $x_i'$ into $y_i'$, then
  the full sequence of edits between $x$ and $y$ will have cost no
  more than the cost of $C$.

  Finally, to show that $\dtw(p(x), p(y)) \le \ed(x, y)$, we present a
  correspondence $C$ between $p(x)$ and $p(y)$ of cost $\ed(x,
  y)$. Consider an optimal sequence of edits $E$ from $x$ to $y$. One
  can think of $E$ as performing a series of insertions in $x$, a
  series of insertions in $y$, and then a series of substitutions to
  transform the resulting strings into one another. Now suppose that
  for each insertion into $x$, we instead insert an $\emptyset$
  character, and similarly for each insertion into $y$. Notice that
  the resulting strings $x'$ and $y'$ will satisfy
  $\sum_i d(x'_i, y'_i) = \text{cost}(E).$ Next, insert an additional
  $\emptyset$ character in every other position in $x'$ and in every other
  position in $y'$ to obtain strings $x''$ and $y''$. These new
  strings still satisfy
  $\sum_i d(x''_i, y''_i) = \text{cost}(E),$ but have the additional
  property that they are expansions of $p(x)$ and $p(y)$. Hence
  $\dtw(p(x), p(y)) \le \ed(x, y)$, completing the proof.
  
\end{proof}

Theorem \ref{thmreduction1} proves an analogous reduction from edit
distance to LCS. As a convention, we use \emph{complex edits} to refer
to insertions, deletions, and substitutions, and \emph{simple edits}
to refer to edits consisting only of insertions and deletions.

\begin{thm}
  Let $\Sigma$ be a generalized Hamming metric. Then for any $x, y \in
  \Sigma^n$, $\ed_S(p(x), p(y)) = 2 \ed(x, y)$.
  \label{thmreduction1}
\end{thm}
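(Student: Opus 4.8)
The plan is to establish the two inequalities $\ed_S(p(x),p(y)) \le 2\,\ed(x,y)$ and $\ed_S(p(x),p(y)) \ge 2\,\ed(x,y)$ separately. The first is the easy direction sketched above: I would fix an optimal sequence of complex edits $x = z_0 \to z_1 \to \cdots \to z_k = y$ with $k = \ed(x,y)$ and simulate it on padded strings while maintaining the invariant that after the $t$-th stage the current string is exactly $p(z_t)$. A substitution at position $i$ becomes a deletion of the letter $p(z_{t-1})_{2i}$ followed by an insertion of the new letter in the same place; a deletion of a letter of $z_{t-1}$ becomes the deletion of that letter together with one adjacent $\emptyset$; an insertion of a letter $\ell$ becomes the insertion of the two-character block $\ell\,\emptyset$ immediately after the appropriate $\emptyset$. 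Each complex edit is thus simulated by at most two simple edits, so $\ed_S(p(x),p(y)) \le 2\,\ed(x,y)$.

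For the reverse inequality I would switch to the language of longest common subsequences, using $\ed_S(a,b) = |a| + |b| - 2L(a,b)$, where $L(a,b)$ denotes the length of a longest common subsequence; since $|p(x)| = |p(y)| = 2n+1$, it suffices to show $L(p(x),p(y)) \le 2n+1 - \ed(x,y)$. Fix a longest common subsequence of $p(x)$ and $p(y)$, and separate its matched positions into the $k$ matches between non-$\emptyset$ letters, which form an increasing common subsequence $x_{i_1} = y_{j_1}, \dots, x_{i_k} = y_{j_k}$ of $x$ and $y$, and the remaining $m_\emptyset$ matches between $\emptyset$-characters, so $L(p(x),p(y)) = k + m_\emptyset$. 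Put $i_0 = j_0 = 0$ and $i_{k+1} = j_{k+1} = n+1$. The structural point I would isolate is that, because the matching is order-preserving, an $\emptyset$ of $p(x)$ lying strictly between the matched copies of $x_{i_t}$ and $x_{i_{t+1}}$ can only be matched to an $\emptyset$ of $p(y)$ lying strictly between the matched copies of $y_{j_t}$ and $y_{j_{t+1}}$; as there are exactly $i_{t+1} - i_t$ of the former and $j_{t+1} - j_t$ of the latter, this $t$-th gap contributes at most $\min(i_{t+1} - i_t, j_{t+1} - j_t)$ to $m_\emptyset$.

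Finally I would bound $\ed(x,y)$ using the same common subsequence, here using that in generalized Hamming space every substitution and every insertion or deletion costs at most one: holding the $k$ letters $x_{i_1}, \dots, x_{i_k}$ fixed and editing each gap independently, the $t$-th gap transforms $i_{t+1} - i_t - 1$ letters of $x$ into $j_{t+1} - j_t - 1$ letters of $y$ at cost $\max(i_{t+1}-i_t, j_{t+1}-j_t) - 1$ (substitute as many positions as possible, then insert or delete the rest), so $\ed(x,y) \le \sum_{t=0}^{k}\bigl(\max(i_{t+1}-i_t, j_{t+1}-j_t) - 1\bigr)$. Adding this to $m_\emptyset \le \sum_{t=0}^{k}\min(i_{t+1}-i_t, j_{t+1}-j_t)$, and using that $\max+\min$ equals the full gap length together with the telescoping identity $\sum_{t=0}^{k}\bigl((i_{t+1}-i_t) + (j_{t+1}-j_t)\bigr) = 2(n+1)$, the two error terms cancel exactly and yield $\ed(x,y) + m_\emptyset + k \le 2n+1$, i.e.\ $L(p(x),p(y)) \le 2n+1 - \ed(x,y)$, hence $\ed_S(p(x),p(y)) = 4n+2 - 2L(p(x),p(y)) \ge 2\,\ed(x,y)$. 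I expect the main (though still mild) obstacle to be exactly this bookkeeping: recognizing that the matched $\emptyset$'s respect the gap structure, and that the $\max$-bound on $\ed(x,y)$ and the $\min$-bound on $m_\emptyset$ are engineered so that $\max + \min$ telescopes and all slack disappears, giving the equality rather than merely an inequality.
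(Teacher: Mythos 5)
Your proposal is correct, and the hard direction takes a genuinely different route from the paper's. For $\ed_S(p(x),p(y)) \le 2\ed(x,y)$ both arguments are the same two-simple-edits-per-complex-edit simulation. For the reverse inequality, the paper works directly with an optimal non-crossing alignment $A$ of $p(x)$ and $p(y)$: it first normalizes $A$ so that every matched pair of non-$\emptyset$ letters is accompanied by a matched pair of the $\emptyset$'s immediately following them, then classifies each letter of $x$ and $y$ as totally unmatched, partially matched, or totally matched, and builds from this an alignment of $x$ with $y$ (allowing substitutions) of exactly half the cost. Your argument instead passes to the LCS via $\ed_S(a,b)=|a|+|b|-2L(a,b)$ and runs a global counting argument: the matched $\emptyset$'s in each gap are bounded by $\min(i_{t+1}-i_t,\,j_{t+1}-j_t)$, the edit cost of each gap by $\max(i_{t+1}-i_t,\,j_{t+1}-j_t)-1$, and $\max+\min$ telescopes to $2(n+1)$. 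This trades the paper's local surgery (the normalization WLOG plus the partial/total case analysis) for a single telescoping identity; both are elementary and of comparable length, but yours avoids having to argue that the normalization preserves optimality, while the paper's version makes the ``each complex edit corresponds to exactly two simple edits'' correspondence more visibly bijective. One point worth making explicit in a write-up: the identity $\ed_S(a,b)=|a|+|b|-2L(a,b)$ presupposes that inserting or deleting the padding character $\emptyset$ costs $1$, i.e.\ that $\emptyset$ is treated as an ordinary letter of the generalized Hamming space rather than as the null character of the metric $(\Sigma\cup\{\emptyset\},d)$; this is the paper's (implicit) convention as well, so your proof matches the intended statement.
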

\begin{proof}
Given a sequence of complex edits from $x$ to $y$, those edits can be
emulated at twice the cost using simple edits from $p(x)$ to
$p(y)$. In particular, the insertion of a letter $l$ becomes the
insertion of the letters $l,\emptyset$, the deletion of a letter $l$
becomes the deletion of the letters $l,\emptyset$, and the
substitution of the letter $l$ to the letter $k$ becomes the deletion
and insertion needed to transform $l,\emptyset$ to $k,\emptyset$. It
follows that $\ed_S(p(x), p(y)) \le 2\ed(x, y)$.

It remains to prove that $2\ed(x, y) \le \ed_S(p(x), p(y))$. For the
sake of completeness, we will now prove it formally. Consider an
optimal alignment $A$ between $p(x)$ and $p(y)$ (allowing only simple
edits). That is, $A$ is a non-crossing bipartite graph between the
letters of $p(x)$ and $p(y)$ with edges only between letters of the same
value. The cost of $A$ is the number of singleton nodes in $A$, which
is equal to $\ed_S(p(x), p(y))$. Without loss of generality, for each edge
$e_1$ in $A$ connecting two non-null nodes $u$ and $v$, there is
another edge $e_2$ connecting the $0$-nodes directly following $u$ and
$v$. Indeed, if there is not, then since $A$ is non-crossing, at most
one of the two $0$-nodes can be part of an edge in $A$. Deleting such
an edge and replacing it with the desired edge does not change the
cost of $A$.

Call a letter $x_i$ in $x$ \emph{totally unmatched by $A$} if
$p(x)_{2i}$ and $p(x)_{2i + 1}$ are both unmatched in $A$. Call a
letter $x_i$ \emph{partially matched to $y_j$ by $A$} if $p(x)_{2i}$
is unmatched in $A$ but $p(x)_{2i + 1}$ is matched to some $p(y)_{2j +
  1}$. Call a letter $x_i$ \emph{totally matched to $y_j$ by $A$} if
both $p(x)_{2i}$ and $p(x)_{2i + 1}$ are matched by $A$ to some
$p(y)_{2j}$ and $p(y)_{2j + 1}$ respectively. If we define the
analogous terms for letters of $y$, then notice that every letter of
$x$ and $y$ is either totally unmatched, partially matched to another
partially matched letter, or totally matched to another totally
matched letter.

We now construct an alignment between $x$ and $y$ (allowing for
complex edits). That is, we construct a non-crossing bipartite graph
$B$ from the letters of $x$ to the letters of $y$, such that the
number of unmatched nodes in $B$ plus the number of edges between
different-valued characters in $B$ is at most the number of unmatched
nodes in $A$. For each letter $x_i$ or $y_i$ that is totally unmatched
by $A$, leave it as a singleton node in $B$. For each letter $x_i$
which is partially matched to some $y_j$ by $A$, match $x_i$ to $y_j$
in $B$. For each letter $x_i$ which is totally matched to some letter
$y_j$, match $x_i$ to $y_j$ in $B$. The resulting $B$ has cost equal
to the number of totally unmatched letters in $x$ and $y$, plus the
number of partially matched letters in $x$ (since they are paired with
the partially matched letters in $y$ through substitutions). This is
half the cost of $A$, since each totally unmatched letter in $x$ or
$y$ corresponds with two adjacent singleton nodes in $A$, and each
partially matched pair of letters between $x$ and $y$ corresponds with
two singleton nodes in $A$. It follows that $2 \ed(x, y) \le
\ed_S(p(x), p(y))$.

\end{proof}

Whereas Theorem \ref{thmreduction1} embeds edit distance into simple
edit distance with no distortion, the next theorem shows that no
nontrivial embedding in the other direction exists.

Formally, we say that an \emph{embedding} from a metric space $(M_1,
d_1)$ to $(M_2, d_2)$ is an injective map $\phi$. The
\emph{distortion} of an embedding is defined as
$$\frac{\sup_{x, y \in M_1} d_1(x, y) / d_2(\phi(x),
  \phi(y))}{\inf_{x, y \in M_1} d_1(x, y) / d_2(\phi(x), \phi(y))},$$
unless the numerator is unbounded or the denominator is zero, in which
case the distortion is $\infty$.

A trivial embedding from edit distance to simple edit distance would
be the identity map, which achieves distortion $2$. Theorem
\ref{thmreduction2} establishes that no other embedding can do better.

\begin{thm}
Consider edit distance over generalized Hamming space.  Any embedding
from edit distance to simple edit distance must have distortion at
least $2$.
\label{thmreduction2}
\end{thm}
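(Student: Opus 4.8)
The plan is to exhibit a small, explicit family of strings whose edit distances and simple edit distances certify that any embedding must stretch or shrink some pair by a factor of at least $2$. The natural candidates are a single character $a$ and a pair of characters $ab$ (for distinct $a, b$ in the Hamming alphabet), together with the empty string $\varepsilon$. Under edit distance, $\ed(\varepsilon, a) = 1$, $\ed(\varepsilon, ab) = 2$, and $\ed(a, ab) = 1$ (one insertion). Under simple edit distance, however, $\ed_S(a, b) = 2$ while $\ed(a, b) = 1$, so the key phenomenon driving the distortion is that substitutions, which cost $1$ under $\ed$, must be simulated by a deletion plus an insertion, costing $2$ under $\ed_S$. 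The first step is therefore to pin down which pair of strings in edit-distance space is forced to blow up: intuitively it should be a pair at edit distance $1$ realized \emph{only} by a substitution.

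First I would set up the quantities. Fix an embedding $\phi$ and let $c = \inf d_1(x,y)/d_2(\phi(x),\phi(y))$ and $C = \sup d_1(x,y)/d_2(\phi(x),\phi(y))$ over distinct pairs $x, y$, so the distortion is $C/c$. I would choose a finite configuration of strings — for instance $\varepsilon$, $a$, $b$, $aa$, and $ab$ over a two-letter Hamming alphabet — and write down all pairwise edit distances among them (they are all $0$, $1$, or $2$). The second step is to combine the triangle inequality in simple-edit-distance space with the known values of $\ed$ on this configuration. For example, $\ed(a,b) = 1$ but any walk from $\phi(a)$ to $\phi(b)$ in simple-edit space passing through $\phi(\varepsilon)$ has length at least $\ed_S(\phi(a),\phi(\varepsilon)) + \ed_S(\phi(\varepsilon),\phi(b))$; meanwhile the images of $\varepsilon, a, b$ under a low-distortion embedding would force $d_2(\phi(a),\phi(b))$ to be small relative to $d_2(\phi(\varepsilon),\phi(a))$, and one derives a contradiction unless $C/c \ge 2$. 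Concretely I expect the argument to reduce to: if $d_2(\phi(a), \phi(b)) \le 1 \cdot C^{-1} \cdot$ (something), while $d_2(\phi(\varepsilon), \phi(a)) \ge 1 \cdot c^{-1} \cdot$ (something), then the simple-edit triangle inequality $\ed_S(\phi(\varepsilon), \phi(a))$ vs.\ a detour forces $c \le C/2$.

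The third step is to make the inequality tight — i.e.\ to verify that distortion exactly $2$ is achievable (the identity map), confirming the bound is not vacuous, though strictly the theorem only asks for the lower bound. The main obstacle I anticipate is that simple edit distance is not a "nice" metric in the sense of having a clean geometric structure: the triangle inequality is the only tool, and one must be careful that the chosen strings' images can be constrained enough by finitely many edit-distance equalities to pin down the ratio. In particular, an embedding need not be surjective or continuous, so I cannot assume $\phi(\varepsilon)$ is the empty string or anything structural about the images; the entire force of the argument must come from the combinatorial values of $\ed$ and $\ed_S$ on a cleverly chosen finite set. Getting that finite set right — small enough to check by hand, rich enough to force the factor $2$ — is the crux. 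I would expect three or four strings to suffice, exploiting precisely the $\ed = 1$ versus $\ed_S = 2$ gap for a substitution.
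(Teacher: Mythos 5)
Your proposal runs into a fundamental obstacle before any of its details matter: the claim you set out to prove---that every embedding of $(\Sigma^*, \ed)$ into $(\Sigma^*, \ed_S)$ has distortion at least $2$---is, read literally, false, and the paper itself supplies the counterexample. The padding map $p$ of Theorem \ref{thmreduction1} satisfies $\ed_S(p(x), p(y)) = 2\ed(x,y)$ for all $x, y$; since every distance is scaled by exactly the same factor, $p$ is an injective, distortion-$1$ embedding of edit distance into simple edit distance. In particular, the restriction of $p$ to any finite configuration of strings embeds that configuration with distortion $1$, so no finite set $\{\varepsilon, a, b, aa, ab\}$, however cleverly chosen, can force the contradiction you are after. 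Your guiding intuition---that a substitution (cost $1$ under $\ed$) must be simulated by a deletion plus an insertion (cost $2$ under $\ed_S$)---is precisely the naive intuition that the padding map defeats: $p$ also doubles the cost of every insertion and deletion, so all distances scale uniformly and no distortion is incurred. Despite the theorem's wording, the surrounding discussion (``no nontrivial embedding in the other direction exists'') and the actual proof make clear that the intended claim is the \emph{reverse} direction: no embedding of simple edit distance into edit distance achieves distortion below $2$.

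The paper's proof of that claim is also necessarily infinitary rather than finite-configurational. Assuming $c_1 \ed_S(x,y) \le \ed(\phi(x), \phi(y)) \le c_2 \ed_S(x,y)$ with $c_2 < 2c_1$, it uses the pair $(0^n, 1^n)$, at simple edit distance $2n$, to conclude that $\max(|\phi(0^n)|, |\phi(1^n)|) \ge 2c_1 n$ (because $\ed(u,v) \le \max(|u|,|v|)$), and then the pair $(\varepsilon, 0^n)$, at simple edit distance $n$, to conclude that $|\phi(\varepsilon)| \ge |\phi(0^n)| - c_2 n \ge (2c_1 - c_2)n$ (because $\ed(u,v) \ge \bigl|\,|u| - |v|\,\bigr|$). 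Letting $n$ grow contradicts the fact that $\phi(\varepsilon)$ is a single fixed string. Note that the operative tools are these two length bounds on edit distance applied to arbitrarily long strings, not the triangle inequality on a small set; and your intended use of the triangle inequality is in any case inverted---a detour through $\phi(\varepsilon)$ only \emph{upper}-bounds $\ed_S(\phi(a), \phi(b))$, so it cannot force the images of $a$ and $b$ to be far apart.
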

\begin{proof}
  For an alphabet $\Sigma$ containing both $0$ and $1$, suppose for
  contradiction that there is an embedding $\phi: \Sigma^* \rightarrow
  \Sigma^*$ such that $c_1 \ed_S(x, y) \le \ed(\phi(x), \phi(y)) \le
  c_2 \ed_S(x, y)$ for constants $c_1 \le c_2$ within a factor of less
  than two of each other.

  For all $n$, $\ed(\phi(0^n), \phi(1^n)) \ge c_1 \cdot 2n$. Since the
  edit distance between two strings is at most the maximum of their
  lengths, it follows without loss of generality that $|\phi(0^n)| \ge
  c_1\cdot 2 n$ (since the case where $|\phi(1^n)| \ge c_1 \cdot 2 n$
  is symmetric).

  Now define $x$ to be the empty string, and $y = 0^n$. By assumption,
  $\ed(\phi(x), \phi(y)) \le c_2n$, and thus $\phi(x)$ and $\phi(y)$
  are within $c_2n$ of each other in length. Since $\phi(y)$ is length
  at least $2 \cdot c_1 n$, it follows that $\phi(x)$ must be length
  at least $(2c_1 - c_2)n$. Recall that $c_1$ and $c_2$ are within
  less than a factor of two of each other, meaning that $(2c_1 - c_2)$
  is a positive constant. Therefore, we have shown that for all $n$,
  $|\phi(x)| \ge \Omega(n)$, a contradiction.

\end{proof}

We conclude the section by obtaining a novel conditional lower bound
for computing DTW over a three-letter alphabet (in which character
distances are zero or one). This concludes a direction of work
initiated by Abboud, Backurs, and Williams \cite{DTWhard2}, who proved
the same result over five-letter alphabet.

\begin{cor}
 Let $\Sigma = \{a, b, c\}$ with distance function $d(a, b) = d(a, c)
 = d(b, c) = 1$. If we assume the Strong Exponential Time Hypothesis,
 then for all $\epsilon > 1$, no algorithm can compute $\dtw(x, y)$
 for $x, y \in \Sigma^n$ in time less than $O(n^{2 - \epsilon})$.
\end{cor}
\begin{proof}
  By Theorem 1.2 of \cite{DTWhard}, edit distance between binary
  strings cannot be computed in strongly subquadratic time, assuming
  the Strong Exponential Time Hypothesis. Applying Theorem
  \ref{thmreduction0}, we get the same result for DTW over $\Sigma$.
\end{proof}

\section{Approximating Edit Distance Over an Arbitrary Metric}\label{seceditapproxfull}

In this section we present an approximation algorithm for edit
distance over an arbitrary metric space. Our algorithm achieves
approximation ratio at most $n^\epsilon$ (with high probability) and
runtime $\tilde{O}(n^{2 - \epsilon})$. Note that when the metric is a
well-separated tree metric, such an algorithm can be obtained by
combining the approximation algorithm for DTW from Section
\ref{secdtwapprox} with the reduction in Section
\ref{secreduction}. Indeed the algorithm in this section is
structurally quite similar to the one in Section \ref{secdtwapprox},
but uses a probability argument exploiting properties of edit distance
in order to hold over an arbitrary metric.

\begin{thm}
Let $(\Sigma \cup \{\emptyset\}, d)$ be an arbitrary metric space such
that $|l| \ge 1$ for all $l \in \Sigma$. For all $0 < \epsilon < 1$,
and for strings $x, y \in \Sigma^n$, there is an algorithm which
computes an $O(n^{\epsilon})$-approximation for $\ed(x, y)$ (with high
probability) in time $\tilde{O}(n^{2 - \epsilon})$.
  \label{thmeditdistanceapprox}
\end{thm}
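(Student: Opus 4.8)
The plan is to mirror the binary-search strategy from the proof of Theorem~\ref{thmdtwapproxtree0}, replacing the DTW gap problem by an analogous \emph{edit gap problem} that is solved using the $r$-simplifications of Proposition~\ref{propthreeprops0} together with the exact-computation routine of Observation~\ref{lemkeyeditobservation0}. After rescaling, assume the smallest letter magnitude is $1$, and let $M$ be the largest magnitude occurring in $x$ or $y$ (computable in $O(n)$ time), so $\ed(x,y)\le 2nM$ trivially. First I would run Observation~\ref{lemkeyeditobservation0} once with parameter $R=1$: it either returns $\ed(x,y)$ exactly --- which happens whenever $\ed(x,y)\le n^{1-\epsilon}$, and then we are done --- or else it certifies $\ed(x,y)>n^{1-\epsilon}$, which I assume henceforth. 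The only randomness in the algorithm will be in the choices of simplification thresholds.

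For a scale $R\ge 1$, define an \emph{$R$-edit-gap} routine as follows: draw $t=\Theta(\log n)$ independent random reals $r_1,\dots,r_t\in[R,2R]$; form each $s_{r_k}(x),s_{r_k}(y)$ by deleting the letters of magnitude at most $r_k$ (in $O(n)$ time); run the algorithm of Observation~\ref{lemkeyeditobservation0} on each pair to obtain a value $v_k$; and output $0$ if $\min_k v_k\le R\,n^{1-\epsilon}$ and $1$ otherwise. By the first item of Proposition~\ref{propthreeprops0} every surviving letter has magnitude at least $R$, so the smallest-magnitude parameter relevant to Observation~\ref{lemkeyeditobservation0} on these strings is $\ge R$; hence $v_k\ge\ed(s_{r_k}(x),s_{r_k}(y))$ always, with equality whenever the latter is at most $Rn^{1-\epsilon}$, and each call costs $\tilde{O}(n^{2-\epsilon})$. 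I would then establish two properties of the routine. First, \emph{deterministically}, an output of $0$ implies $\ed(x,y)\le 5nR$: indeed $\min_k v_k\le Rn^{1-\epsilon}<nR$ forces $\ed(s_{r_k}(x),s_{r_k}(y))<nR$ for some $k$, which by the third item of Proposition~\ref{propthreeprops0} rules out $\ed(x,y)>5nR$. Second, if $\ed(x,y)\le\frac{nR}{15n^\epsilon}$ then the routine outputs $0$ with high probability: by the second item of Proposition~\ref{propthreeprops0} and Markov's inequality each single draw has $\ed(s_{r_k}(x),s_{r_k}(y))\le Rn^{1-\epsilon}$ (hence $v_k\le Rn^{1-\epsilon}$) with probability at least $2/3$, so some one of the $t$ draws succeeds except with probability $n^{-\Omega(1)}$. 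Consequently, conditioned on this high-probability event, an output of $1$ certifies $\ed(x,y)>\frac{nR}{15n^\epsilon}$.

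Finally I would binary search over $i\in\{0,\dots,\lceil\log M\rceil+1\}$, evaluating the $2^i$-edit-gap routine. At $i=\lceil\log M\rceil+1$ the threshold exceeds every magnitude, so both simplifications are empty and the routine outputs $0$; at $i=0$, if the routine outputs $0$ then $\ed(x,y)\le 5n$ while $\ed(x,y)>n^{1-\epsilon}$, so $n^{1-\epsilon}$ is already an $O(n^\epsilon)$-approximation and we stop. Otherwise, maintaining an interval whose low endpoint evaluates to $1$ and whose high endpoint evaluates to $0$ and recursing into the half indicated by the midpoint's label --- which works without assuming $i\mapsto\mathrm{gap}(2^i)$ is monotone --- after $O(\log\log M)$ evaluations I obtain $i\ge 1$ with $\mathrm{gap}(2^i)=0$ and $\mathrm{gap}(2^{i-1})=1$. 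By the two properties above, $\frac{n2^i}{30n^\epsilon}<\ed(x,y)\le 5n2^i$, so returning $\frac{n2^i}{30n^\epsilon}$ is an $O(n^\epsilon)$-approximation; the running time is $\tilde{O}(n^{2-\epsilon})$ (only $O(\log\log M)$ gap calls, with $\log M$ polynomial in the input length), and a union bound over those calls keeps the total failure probability at $n^{-\Omega(1)}$. I expect essentially all of the genuine content to lie in Proposition~\ref{propthreeprops0}, and specifically in its second item --- whose proof, via the randomized threshold and the bound $\E[\ed(s_r(x),s_r(y))]\le 5\ed(x,y)$, is the one place the arbitrariness of the metric is actually exploited; the remaining steps are bookkeeping, the most delicate being to verify that the parameter fed to Observation~\ref{lemkeyeditobservation0} on the simplified strings is indeed $\ge R$ and that all constants collapse to an $O(n^\epsilon)$ ratio.
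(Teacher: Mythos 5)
Your proposal is correct and follows essentially the same route as the paper: the same randomized $(R,n^\epsilon)$-edit-distance gap routine built from $O(\log n)$ samples of $r\in[R,2R]$, Proposition~\ref{propthreeprops0}, Markov's inequality, and Observation~\ref{lemkeyeditobservation0}, followed by the same binary search over scales. The only cosmetic difference is that the paper avoids any assumption on $\log M$ by restricting the search to the $O(n)$ scales $i$ at which some letter has magnitude near $2^i$, whereas you assume $\log M$ is polynomial in the input length; both are adequate.
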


Before continuing, we remind the reader of the standard
dynamic-programming algorithm for computing $\ed(x, y)$
\cite{WagnerF74, needleman1970general, vintzyuk1968}. For $i, j
\in \{0, \ldots, n\}$, the subproblem $\mathcal{A}(i, j)$ is defined
to be the edit distance
$$\ed(x_1 \cdots x_i, y_1 \cdots y_j).$$ Notice that for $i, j > 0$,
the return-value for $\mathcal{A}(i, j)$ is completely determined by
$\mathcal{A}(i - 1, j)$, $\mathcal{A}(i, j - 1)$, and $\mathcal{A}(i -
1, j - 1)$, as well as knowledge of $x_i$ and $y_j$. Using this, one
can formulate a recursion which results in an $O(n^2)$-time
dynamic-program for computing $\ed(x, y)$.

Now consider the same dynamic program, except with $\mathcal{A}(i, j)$
artificially set to $\infty$ whenever $|i - j| > K$, for some
parameter $K$. This new dynamic program runs in time $O(n \cdot K)$
and returns the minimum cost of any sequence of edits that transforms
$x$ to $y$ without ever matching any $x_i$ to some $y_j$ for which $|i
- j| > K$. Importantly, this means that if there is an optimal
sequence of edits involving no more than $K$ insertions and deletions,
then the new dynamic program will find the true value of $\ed(x, y)$;
otherwise, the dynamic program may return an overestimate for $\ed(x,
y)$. This implies in the following observation:

\begin{observation}
 Consider $x, y \in \Sigma^n$, and let $R$ be the smallest magnitude
 of the letters in $x$ and $y$. There is an $O(n^{2 - \epsilon})$-time
 algorithm which returns a value at least as large as $\ed(x, y)$; and
 which returns exactly $\ed(x, y)$ when $\ed(x, y) \le R \cdot n^{1 - \epsilon}$.
 \label{lemkeyeditobservation}
\end{observation}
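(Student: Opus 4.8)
The plan is to instantiate the banded dynamic program described just above the statement with band width $K = \lceil n^{1 - \epsilon} \rceil$. First I would run this modified DP, which fills in only the entries $\mathcal{A}(i,j)$ with $|i - j| \le K$; this takes time $O(nK) = O(n^{2 - \epsilon})$. Because the modified DP minimizes cost over a \emph{subset} of all edit sequences (precisely those whose induced alignment never matches $x_i$ to $y_j$ with $|i - j| > K$), its output is always at least $\ed(x, y)$. This subset is nonempty — for strings of equal length, the all-substitution alignment matching $x_i$ to $y_i$ stays on the main diagonal — so the output is a finite overestimate of $\ed(x,y)$.

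Next I would establish exactness in the regime $\ed(x, y) \le R \cdot n^{1 - \epsilon}$. The key point is that every insertion or deletion of a letter $l$ costs $|l| = d(\emptyset, l) \ge R$, since $R$ is by definition the smallest magnitude among the letters occurring in $x$ and $y$. Consequently, any edit sequence using more than $n^{1 - \epsilon}$ insertions and deletions has total cost exceeding $R \cdot n^{1 - \epsilon} \ge \ed(x, y)$; hence \emph{every} optimal edit sequence uses at most $n^{1 - \epsilon} \le K$ insertions and deletions. As noted in the discussion preceding the statement, whenever some optimal edit sequence uses at most $K$ insertions and deletions, the banded DP recovers the true value $\ed(x, y)$: the offset $i - j$ changes by exactly one at each insertion or deletion and is unchanged by substitutions, so starting and ending at offset $0$ it remains within $[-K, K]$ throughout such a sequence. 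Combining the two parts gives the claimed algorithm.

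I expect no genuine obstacle here; the only step that warrants a little care is the bookkeeping relating the number of insertions and deletions in an alignment to the width of the diagonal band it occupies, and that argument has already been carried out in the text immediately before the statement, so the observation follows essentially by setting $K = \lceil n^{1-\epsilon}\rceil$.
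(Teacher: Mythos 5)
Your proposal is correct and is essentially the paper's own argument: the paper proves the observation by the same banded dynamic program with the band width set to $n^{1-\epsilon}$, noting that the banded DP always overestimates and is exact whenever some optimal edit sequence uses at most $K$ insertions and deletions, which holds here because each insertion or deletion costs at least $R$. You have merely spelled out the bookkeeping that the paper leaves implicit.
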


In order to prove Theorem \ref{thmeditdistanceapprox}, we present a
new definition of the $r$-simplification of a string. The difference
between this definition and the one in the preceding section allows
the new definition to be useful when studying edit distance rather
than dynamic time warping.

\begin{defn}
  For a string $x \in \Sigma^n$ and for $r \ge 1$, we construct the
  \emph{$r$-simplification} $s_r(x)$ by removing any letter $l$ satisfying
  $|l| \le r$.
\end{defn}

The proof of Theorem \ref{thmeditdistanceapprox} has a similar
structure to that of Theorem \ref{thmdtwapproxtree}, but interestingly
avoids the use of well-separated tree metrics.

A key insight in the proof of Theorem \ref{thmeditdistanceapprox} is
to use randomization in the selection of $r$ in order to ensure that
$s_r(x)$ satisfies desirable properties in expectation. The key
proposition follows:

\begin{prop}
  Consider strings $x$ and $y$ in $\Sigma^n$. Consider $0 < \epsilon <
  1$ and $R \ge 1$. Select $r$ to be a random real between $R$ and
  $2R$. Then the following three properties hold:
  \begin{itemize}
  \item Every letter $l$ in $s_r(x)$ or $s_r(y)$ satisies $|l| \ge R$.
  \item If $\ed(x, y) \le \frac{nR}{15n^{\epsilon}}$ then $\E[\ed(s_r(x), s_r(y))] \le \frac{nR}{3n^{\epsilon}}$.
  \item If $\ed(x, y) > 5nR$, then $\ed(s_r(x), s_r(y)) > nR$.
  \end{itemize}
  \label{propthreeprops}
\end{prop}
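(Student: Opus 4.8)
The plan is to dispatch the three parts in turn; only the third needs genuine (and still very short) work, since the delicate probabilistic content has been packaged into the already-proved inequality $\E[\ed(s_r(x), s_r(y))] \le 5\,\ed(x, y)$.

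The first part is immediate from the definition of the $r$-simplification: because $r$ is chosen from $[R, 2R]$ we have $r \ge R$, and $s_r(x)$ (respectively $s_r(y)$) retains only letters $l$ with $|l| > r \ge R$; hence every surviving letter has magnitude at least $R$. For the second part I would simply plug the hypothesis into the lemma established just above, that $\E[\ed(s_r(x), s_r(y))] \le 5\,\ed(x, y)$ when $r$ is uniform on $[R, 2R]$. Combined with $\ed(x, y) \le \tfrac{nR}{15 n^{\epsilon}}$ this yields $\E[\ed(s_r(x), s_r(y))] \le 5 \cdot \tfrac{nR}{15 n^{\epsilon}} = \tfrac{nR}{3 n^{\epsilon}}$, exactly as claimed.

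For the third part — which is deterministic and holds for every fixed $r \in [R, 2R]$ — I would first note that $x$ can be transformed into $s_r(x)$ by deleting each of its letters $l$ with $|l| \le r$; there are at most $n$ such letters, and each deletion costs $|l| \le r \le 2R$, so $\ed(x, s_r(x)) \le nr \le 2nR$, and symmetrically $\ed(y, s_r(y)) \le 2nR$. Since edit distance over a metric obeys the triangle inequality on strings (concatenate edit sequences, and bound individual edit costs using the triangle inequality of $d$), we get
\[
\ed(x, y) \le \ed(x, s_r(x)) + \ed(s_r(x), s_r(y)) + \ed(s_r(y), y) \le 4nR + \ed(s_r(x), s_r(y)),
\]
so $\ed(s_r(x), s_r(y)) \ge \ed(x, y) - 4nR > 5nR - 4nR = nR$ whenever $\ed(x, y) > 5nR$.

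I do not anticipate a real obstacle: all the subtlety (substitutions between letters of differing magnitude straddling the random threshold $r$) is absorbed into the $5\,\ed(x,y)$ lemma, and what remains is bookkeeping. The only points to check carefully are that the constants $15$, $3$, and $5$ in the statement match precisely what the three sub-arguments produce, and that each inequality points in the direction claimed.
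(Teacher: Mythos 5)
Your proposal is correct and follows essentially the same route as the paper: part one from the definition and $r \ge R$, part two by plugging the hypothesis into the previously established bound $\E[\ed(s_r(x), s_r(y))] \le 5\,\ed(x,y)$, and part three by concatenating delete-all-small-letters, edit, insert-all-small-letters sequences to get $\ed(x,y) \le \ed(s_r(x), s_r(y)) + 4nR$. The constants all check out, so there is nothing to add.
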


Before proving Proposition \ref{propthreeprops}, we first use it to
prove Theorem \ref{thmeditdistanceapprox}.

\begin{proof}[Proof of Theorem \ref{thmeditdistanceapprox}.]
We begin by defining the \emph{$(R, n^{\epsilon})$-edit-distance gap}
problem for $r \ge 1$, in which for two strings $x$ and $y$ a return
value of 0 indicates that $\ed(x, y) < 5nR$ and a return value of 1
indicates that $\ed(x, y) \ge n^{1 - \epsilon}R / 15$.

One can solve the $(R, n^{1 - \epsilon})$-edit-distance gap problem
with high probability in time $\tilde{O}(n^{2 - \epsilon})$ as
follows. For a sufficiently large number of samples $t = O(\log n)$,
select $r_1, \ldots, r_t$ each at random from $[R, 2R]$. Then apply
Observation \ref{lemkeyeditobservation} to each of the $r_i$'s in
order to obtain an estimate $e_i$ for $\ed(s_{r_i}(x), s_{r_i}(y))$ in
time $O(n^{2 - \epsilon})$. Because each $e_i$ is either correct or an
overestimate, if any of the $e_i$'s are less than $nR$, then we can
conclude that $\ed(s_{r_i}(x), s_{r_i}(y)) < nR$, meaning by Proposition
\ref{propthreeprops} that $\ed(x, y) \le 5nR$. On the other hand, if
none of the $e_i$'s are less than $nR$, then our algorithm for the gap
problem concludes that $\ed(x, y) \ge n^{1 - \epsilon}R / 15$. In
order to show that our algorithm for the gap problem is correct with
high probability, it suffices to show that if $\ed(x, y) < n^{1 -
  \epsilon}R / 15$ then with high probability, at least one of the
$e_i$'s will be less than $nR$. By Proposition \ref{propthreeprops},
if $\ed(x, y) < n^{1 - \epsilon}R / 15$, then $\E[\ed(s_{r_i}(x),
  s_{r_i}(y))] \le \frac{nR}{3n^{\epsilon}}$. Applying Markov's
inequality to each $\ed(s_{r_i}(x), s_{r_i}(y))$, with high
probability there is some $i$ for which
$$\ed(s_{r_i}(x), s_{r_i}(y)) < R n^{1 - \epsilon},$$
which by Observation \ref{lemkeyeditobservation} means that
$$e_i = \ed(s_{r_i}(x), s_{r_i}(y)) < nR,$$
as desired.

Now that we have an algorithm for the $(R, n^{1 -
  \epsilon})$-edit-distance gap problem, the proof of Theorem
\ref{thmeditdistanceapprox} follows just as did the proof of Theorem
\ref{thmdtwapproxtree} using the solution to the $(r,
n^{\epsilon})$-DTW gap problem. Recall, however, that the proof of
Theorem \ref{thmeditdistanceapprox} required the metric space to have
at most exponential aspect ratio. This was so that a binary search
could efficiently find some $i$ for which $(2^{i - 1}, n^{\epsilon} /
2)$-DTW gap returns $1$ and $(2^{i}, n^{\epsilon} / 2)$-DTW gap
returns 0. Such a requirement is not necessary here, because we can
restrict ourselves to considering only the $O(n)$ $i$'s for which
there is at least one letter $l$ in $x$ or $y$ for which $2^{i - 1} \le |l| \le 2^{i + 1}$; indeed, for all other values of $i$, our
algorithm for the $(R, n^{1 - \epsilon})$-edit-distance gap problem
will be guaranteed to return the same value for $R = 2^{i -1}$ as for
$R = 2^i$. By restricting ourselves to these values of $i$, the binary
search to find some $i$ for which $(2^{i - 1}, n^{\epsilon} / 2)$-DTW
gap returns $1$ and $(2^{i}, n^{\epsilon} / 2)$-DTW gap returns 0 can
be performed efficiently regardless of the aspect ratio of the metric
space.
\end{proof}

The rest of the section will be devoted to proving Proposition
\ref{propthreeprops}. Because $r \ge R$, The first part of the
proposition follows immediately from the definition of $s_r(x)$ and
$s_r(y)$.
  
The second part of the proposition comes from the following lemma.

\begin{lem}
    Consider strings $x$ and $y$ in $\Sigma^n$. Consider $R \ge
    1$ and select $r$ to be a random real between $R$ and $2R$. Then
  $$\E[\ed(s_r(x), s_r(y))] \le 5 \ed(x, y).$$
\end{lem}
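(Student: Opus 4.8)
The plan is to fix an optimal sequence $S$ of edits transforming $x$ into $y$ and to \emph{simulate} it, edit by edit, so as to transform $s_r(x)$ into $s_r(y)$, arguing that the expected total cost of the simulated sequence is at most $5 \cdot \text{cost}(S) = 5\,\ed(x,y)$. Since the $r$-simplification only removes letters of magnitude at most $r$, I expect insertions and deletions to require no overhead: if the letter involved survives in the simplification we perform the identical operation at identical cost, and if it has been removed we simply skip the operation. Thus the whole argument reduces to bounding the expected cost of simulating a single substitution.

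Consider a substitution of $l_1 \in x$ by $l_2 \in y$, and assume without loss of generality $|l_1| < |l_2|$. The first step is the case split on which of $l_1, l_2$ survive in the simplification: if both are removed, no edit is needed; if both survive, the substitution is simulated verbatim at cost $d(l_1,l_2)$; and in the single ``bad'' case $|l_1| \le r < |l_2|$, the letter $l_1$ has disappeared while $l_2$ remains, so the substitution must be replaced by an insertion of $l_2$ at cost $|l_2|$. Hence the expected cost of the simulated substitution is at most $\Pr[\,|l_1| \le r < |l_2|\,]\cdot|l_2| + d(l_1,l_2)$. Since $r$ is uniform on an interval of length $R$, the bad-event probability is at most $(|l_2|-|l_1|)/R$, and by the triangle inequality $|l_2|-|l_1| = d(\emptyset,l_2)-d(\emptyset,l_1) \le d(l_1,l_2)$, so this probability is at most $d(l_1,l_2)/R$.

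The remaining — and I expect most delicate — step is to bound the term $\Pr[\text{bad}]\cdot|l_2|$ by a constant times $d(l_1,l_2)$, and the natural way is to split on the size of $|l_2|$. If $|l_2| \le 4R$, then $\Pr[\text{bad}]\cdot|l_2| \le (d(l_1,l_2)/R)\cdot 4R = 4\,d(l_1,l_2)$, giving an expected substitution cost of at most $5\,d(l_1,l_2)$. If $|l_2| > 4R$, the direct probability bound is too weak, so instead I would observe that the bad event forces $|l_1| \le r \le 2R$, whence $d(l_1,l_2) \ge |l_2|-|l_1| > |l_2|-2R > |l_2|/2$; thus $|l_2| < 2\,d(l_1,l_2)$ and, bounding $\Pr[\text{bad}] \le 1$, the expected substitution cost is at most $2\,d(l_1,l_2) + d(l_1,l_2) = 3\,d(l_1,l_2)$. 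In either regime the expected cost of a simulated substitution is at most $5\,d(l_1,l_2)$; summing over all edits of $S$ (insertions and deletions contributing no overhead) gives $\E[\ed(s_r(x),s_r(y))] \le 5\,\ed(x,y)$, as claimed. The only place the randomness of $r$ enters is this bad-event estimate, so that is the heart of the proof.
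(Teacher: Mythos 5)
Your proposal is correct and follows essentially the same route as the paper's proof: simulate an optimal edit sequence on the simplified strings, observe that insertions and deletions carry no overhead, bound the expected cost of a substitution by $\Pr[|l_1|\le r<|l_2|]\cdot|l_2|+d(l_1,l_2)$, bound that probability by $d(l_1,l_2)/R$ via the triangle inequality, and split into the cases $|l_2|\le 4R$ and $|l_2|>4R$ exactly as the paper does. No gaps; nothing further to add.
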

\begin{proof}
  Consider an optimal sequence $S$ of edits from $x$ to $y$. We will
  consider the cost of simulating this sequence of edits to transform
  $s_r(x)$ to $s_r(y)$. Insertions and deletions are easily simulated
  by either performing the same operation to $s_r(x)$ or performing no
  operation at all (if the operation involves a letter of magnitude
  less than or equal to $r$). Substitutions are slightly more
  complicated as they may originally be between letters $l_1 \in x$
  and $l_2 \in y$ of different magnitudes. By symmetry, we may assume
  without loss of generality that $|l_1| < |l_2|$. We will show that
  the expected cost of simulating the substitution of $l_1$ to $l_2$
  in $s_r(x)$ is at most $5d(l_1, l_2)$. Because insertions and
  deletions can be simulated with no overhead, it follows that
  $\E[\ed(s_r(x), s_r(y))] \le 5\ed(x, y)$.

  If $|l_1| \le r < |l_2|$ then $l_1$ does not appear in $s_r(x)$
  but $l_2$ remains in $s_r(y)$. Thus what was previously a
  substitution of $l_1$ with $l_2$ becomes an insertion of $l_2$ at
  cost $|l_2|$. On the other hand, if we do not have $|l_1| \le r <
  |l_2|$, then either both $l_1$ and $l_2$ are removed from $s_r(x)$
  and $s_r(y)$ respectively, in which the substitution operation no
  longer needs to be performed, or both $l_1$ and $l_2$ are still
  present, in which case the substitution operation can still be
  performed at cost $d(l_1, l_2)$. Therefore, the expected cost of
  simulating the substitution of $l_1$ to $l_2$ in $s_r(x)$ is at most
  \begin{equation}
    \begin{split}
          \Pr[|l_1| \le r < |l_2|] \cdot |l_2| + d(l_1, l_2).
    \end{split}
    \label{eqsimcost}
  \end{equation}

  Because $r$ is selected at random from the range $[R, 2R]$, the
  probability that $|l_1| \le r < |l_2|$ is at most $\frac{|l_2| -
    |l_1|}{R}$. By the triangle inequality, this is at most
  $\frac{d(l_1, l_2)}{R}.$ If we suppose that $|l_2| \le 4R$, then it
  follows by \eqref{eqsimcost} that the expected cost of simulating
  the substitution of $l_1$ to $l_2$ in $s_r(x)$ is at most
  \begin{equation*}
    \begin{split}
          \frac{d(l_1, l_2)}{R} \cdot 4R + d(l_1, l_2) \le 5d(l_1, l_2).
    \end{split}
  \end{equation*}
  If, on the other hand, $|l_2| > 4R$, then in order for $|l_1| \le r$ to be true, we must have $|l_1| \le 2R$, meaning by the
  triangle inequality that $d(l_1, l_2) \ge |l_2| / 2$. Thus in this
  case $|l_2| \le 2d(l_1, l_2)$, meaning by \eqref{eqsimcost} that the
  expected cost of simultating the substitution of $l_1$ to $l_2$ in
  $s_r(x)$ is at most three times as expensive as the original
  substitution.
\end{proof}

Finally the third part of Proposition \ref{propthreeprops} is a
consequence of the following lemma, which completes the proof of
Proposition \ref{propthreeprops}.

\begin{lem}
    Consider strings $x$ and $y$ in $\Sigma^n$. Consider $R \ge
    1$ and select $r$ to be a random real between $R$ and $2R$. Then
    $\ed(x, y) < \ed(s_r(x), s_r(y)) + 4Rn$.
\end{lem}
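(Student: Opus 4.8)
The plan is to build a sequence of edits from $x$ to $y$ by sandwiching an optimal edit sequence for the simplified strings between a \emph{cleanup} phase and a \emph{restore} phase. First I would let $S$ be an optimal sequence of edits witnessing $\ed(s_r(x), s_r(y))$, so the cost of $S$ equals $\ed(s_r(x), s_r(y))$. By the definition of the $r$-simplification, $s_r(x)$ is obtained from $x$ precisely by removing every letter $l$ with $|l| \le r$, and likewise $s_r(y)$ is obtained from $y$ in the same way.

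Next I would describe a three-phase edit sequence from $x$ to $y$: phase (i) deletes from $x$ every letter $l$ with $|l| \le r$, which transforms $x$ into exactly the string $s_r(x)$; phase (ii) applies $S$ to transform $s_r(x)$ into $s_r(y)$; phase (iii) inserts into $s_r(y)$ every letter of $y$ of magnitude at most $r$, in the appropriate positions, transforming $s_r(y)$ into $y$.

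For the cost analysis: in phase (i), each deleted letter $l$ costs $|l| = d(\emptyset, l) \le r \le 2R$, and since $x$ has length $n$ there are at most $n$ such letters, so phase (i) costs at most $2Rn$; symmetrically phase (iii) costs at most $2Rn$; and phase (ii) costs exactly $\ed(s_r(x), s_r(y))$. Summing the three phases and taking a minimum over edit sequences gives $\ed(x, y) \le \ed(s_r(x), s_r(y)) + 4Rn$, and since $4Rn > 0$ (as $R \ge 1$ and $n \ge 1$) this strengthens to the strict inequality $\ed(x, y) < \ed(s_r(x), s_r(y)) + 4Rn$ as claimed. Note that the randomness in the choice of $r$ plays no role in this lemma — only the bound $r \le 2R$ is used.

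I expect no serious obstacle here. The only point requiring a moment's care is verifying that phases (i) and (iii) genuinely realize the strings $s_r(x)$ and $y$ respectively, which is immediate from the definition of $s_r$, and that at most $n$ letters are touched in each of those two phases since $x$ and $y$ each have length $n$.
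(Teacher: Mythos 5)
Your three-phase construction (delete the small-magnitude letters of $x$ to reach $s_r(x)$, apply an optimal edit sequence to reach $s_r(y)$, then reinsert the small-magnitude letters of $y$) with the $2Rn + \ed(s_r(x), s_r(y)) + 2Rn$ cost accounting is exactly the paper's proof. The only blemish is your justification of strictness --- ``$a \le b + c$ with $c > 0$'' does not yield $a < b + c$ --- but the paper's own proof is no more careful on this point, and the weak inequality is all that is ever used.
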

\begin{proof}
  Beginning with $x$, remove all letters of magnitude at most $r$;
  this costs at most $2nR$. Then obtain $s_r(y)$ through $\ed(s_r(x),
  s_r(y))$ edits. Finally, insert all letters of magnitude at most $r$
  in $y$; this costs at most $2nR$. Combining these three steps, we
  get from $x$ to $y$ at cost $\ed(s_r(x), s_r(y)) + 4nR$.
\end{proof}

\end{document}